\numberwithin{equation}{section}
\newtheorem{Theorem}{Theorem}[section]
\newtheorem{Corollary}[Theorem]{Corollary}
\newtheorem{Proposition}[Theorem]{Proposition}
 { \theoremstyle{definition}
\newtheorem{Definition}[Theorem]{Definition}
\newtheorem{Example}[Theorem]{Example}
\newtheorem{Remark}[Theorem]{Remark} }
\newcommand{\dis}{\displaystyle}
\begin{document}
\allowdisplaybreaks

\newcommand{\arXivNumber}{1901.01609}

\renewcommand{\PaperNumber}{048}

\FirstPageHeading

\ShortArticleName{Invariants in Separated Variables: Yang--Baxter, Entwining and Transfer Maps}

\ArticleName{Invariants in Separated Variables:\\ Yang--Baxter, Entwining and Transfer Maps}

\Author{Pavlos KASSOTAKIS}

\AuthorNameForHeading{P.~Kassotakis}

\Address{Department of Mathematics and Statistics, University of Cyprus,\\ P.O.~Box 20537, 1678 Nicosia, Cyprus}
\Email{\href{mailto:pavlos1978@gmail.com}{pavlos1978@gmail.com}, \href{mailto:pavlos1978@gmail.com}{pkasso01@ucy.ac.cy}}

\ArticleDates{Received January 16, 2019, in final form June 15, 2019; Published online June 25, 2019}

\Abstract{We present the explicit form of a family of Liouville integrable maps in 3~va\-riab\-les, the so-called {\it triad family of maps} and we propose a multi-field generalisation of the latter. We show that by imposing separability of variables to the invariants of this family of maps, the $H_{\rm I}$, $H_{\rm II}$ and~$H_{\rm III}^A$ Yang--Baxter maps in general position of singularities emerge. Two different methods to obtain entwining Yang--Baxter maps are also presented. The outcomes of the first method are entwining maps associated with the $H_{\rm I}$, $H_{\rm II}$ and~$H_{\rm III}^A$ Yang--Baxter maps, whereas by the second method we obtain non-periodic entwining maps associated with the whole $F$ and $H$-list of quadrirational Yang--Baxter maps. Finally, we show how the transfer maps associated with the $H$-list of Yang--Baxter maps can be considered as the $(k-1)$-iteration of some maps of simpler form. We refer to these maps as {\it extended transfer maps} and in turn they lead to $k$-point alternating recurrences which can be considered as alternating versions of some hierarchies of discrete Painlev\'e equations.}

\Keywords{discrete integrable systems; Yang--Baxter maps; entwining maps; transfer maps}

\Classification{14E07; 14H70; 37K10}

\section{Introduction}
The quantum Yang--Baxter equation originates from the theory of exactly solvable models in statistical mechanics \cite{baxter-1982,yang-1967}. It reads
\begin{gather} \label{qyb}
R_{12} R_{13} R_{23} = R_{23} R_{13} R_{12},
\end{gather}
where $R\colon V\otimes V \mapsto V\otimes V$ a linear operator and $R_{lm}$, $l\neq m\in \{1,2,3\}$ the operators that acts as $R$ on the $l$-th and $m$-th factors of the tensor product
$V\otimes V \otimes V $. For the history of the latter and for the early developments on the theory see~\cite{Jimbo-1989}. Replacing the vector space $V$ with any set $X$ and the tensor product
 with the cartesian product, Drinfeld~\cite{drinfeld-1992} introduced the {\it set theoretical version} of~(\ref{qyb}). Solutions of the latter appeared under the name of {\it set theoretical solutions of the quantum Yang--Baxter equation}. The first instance of such solutions, appeared in~\cite{etingof-1999, sklyanin-1988}. The term {\it Yang--Baxter maps} was proposed by Veselov~\cite{Veselov:2003} as an alternative name to the Drinfeld's one. Early results on the context of Yang--Baxter maps were provided in \cite{adler-1993,KNY-2002A,Noumi1998}. In the recent years, many results arose in the interplay between studies on Yang--Baxter maps and the theory of discrete integrable systems \cite{Atkinson:2012i,AtkNie,Atkinson:2018,BAZHANOV2018509,Dimakis2018,Dimakis2018ii,Doliwa:2014,Grahovski:2016}.

In \cite{et-2003} it was considered a special type of set theoretical solutions of the quantum Yang--Baxter equation, the so called {\it non degenerate rational maps.} Nowadays, this type of solutions is referred to as {\it quadrirational Yang--Baxter maps}. Note that the notion of quadrirational maps, was extended in \cite{KaNie:2017} to the notion of {\it $2^n$-rational maps}, where highly symmetric higher dimensional maps were considered.
Under the assumption of quadrirationality and modulo conjugation (see Definition~\ref{def1}), in~\cite{ABSyb-2004,pap3-2010} a list of ten families of maps was obtained. Five of them were given in~\cite{ABSyb-2004}, which constitute the so-called $F$-list of quadrirational Yang--Baxter maps and five more in \cite{pap3-2010}, which constitute the so-called $H$-list of quadrirational Yang--Baxter maps. For their explicit form see Appendix~\ref{app1}. The Yang--Baxter maps of the $F$-list and the $H$-list can also be obtained from some of the integrable lattice equations in the classification scheme of \cite{ABS-2003}, by using the invariants of the generators of the Lie point symmetry group of the latter \cite{pap2-2006}. In the series of papers \cite{PKMN2,KaNie,PKMN3}, from the Yang--Baxter maps of the $F$-list and of the $H$-list, integrable lattice equations and correspondences (relations) were systematically constructed. Invariant, under the maps, functions where the variables appeared in separated form, played an important role to this construction. The cornerstone of this manuscript are invariant functions where the variables appear in separated form.

In \cite{Adler:2006}, it was introduced a family rational of maps in $3$ variables that preserves two rational functions the so-called {\it the triad map}. The triad map serves as a generalisation of the QRT map~\cite{QRT:1988} (cf.~\cite{Duistermaat:2010}). In Section~\ref{Section2} we present an explicit formula for Adler's triad map as well as we prove the Liouville integrability of the latter. We also propose an extension of the triad map in $k\geq 3$ number of variables. If one imposes separability to the variables of the invariants of the triad map, the $H_{\rm I}$, the $H_{\rm II}$ and the $H_{\rm III}^A$ Yang--Baxter maps in general positions of singularities, emerge. This is presented in Section~\ref{Section3} together with the explicit formulae for these maps.

In Section \ref{Section4}, we develop two methods to obtain non-equivalent {\it entwining maps} \cite{Kouloukas:2011}, i.e., maps $R$, $S$, $T$ that satisfy the relation
\begin{gather*}
 R_{12} S_{13} T_{23} = T_{23} S_{13} R_{12}.
\end{gather*}
 The first method gives us entwining maps associated with the $H_{\rm I}$, $H_{\rm II}$ and the $H_{\rm III}^A$ members of the $H$-list of Yang--Baxter maps. The second one produces entwining maps for the whole $F$-list and the $H$-list. In this manuscript we present the entwining maps associated with the $H$-list of quadrirational Yang--Baxter maps only.

In Section \ref{Section5}, we re-factorise the {\it transfer maps}~\cite{Veselov:2003} associated with the $H$-list of Yang--Baxter maps. We show that the transfer maps coincide with the $(k-1)$-iteration of some maps of simpler form that we refer to as {\it extended transfer maps}. Moreover, we show that the extended transfer maps, after an integration followed by a change of variables, are written as $k$-point recurrences, which some of them can be considered as alternating versions of discrete Painlev\'e hierarchies \cite{Joshi:1990,Hay:2007, Noumi1998}. In Section~\ref{Section6} we end this manuscript with some conclusions and perspectives.

\section{The Adler's triad family of maps}\label{Section2}
In \cite{Adler:2006}, Adler proposed the so-called {\it triad family of maps}. The triad map is a family of maps in 3 variables that consists of the composition of involutions which preserve two rational invariants of a specific form. In what follows we present the explicit form of the latter in terms of its invariants.

Consider the polynomials
\begin{gather*}
n^i=\sum_{j,k,l=0}^1\alpha^i_{j,k,l}x_1^{1-j}x_2^{1-k}x_3^{1-l},\qquad d^i=\sum_{j,k,l=0}^1\beta^i_{j,k,l}x_1^{1-j}x_2^{1-k}x_3^{1-l}, \qquad i=1,2,
\end{gather*}
where $x_1$, $x_2$, $x_3$ are considered as variables and $\alpha^i_{j,k,l}$, $\beta^i_{j,k,l}$ as parameters.
 We consider also $3$ maps $R_{ij}$, $i<j$, $i,j \in \{1,2,3\}$. These maps can be build out of the polynomials $n^i$, $d^i$ and they read $R_{ij}\colon (x_1,x_2,x_3)\mapsto (X_1(x_1,x_2,x_3),X_2(x_1,x_2,x_3),X_3(x_1,x_2,x_3))$, where
\begin{gather}
X_i=x_i-2\frac{\left|\begin{matrix}
{\mathbf D}_{x_i} n^1\cdot d^1 & {\mathbf D}_{x_i} n^2\cdot d^2\\
{\mathbf D}_{x_j} n^1\cdot d^1 & {\mathbf D}_{x_j} n^2\cdot d^2
\end{matrix}\right|}{\left|\begin{matrix}
{\mathbf D}_{x_i} n^1\cdot d^1 & {\mathbf D}_{x_i} n^2\cdot d^2\\
\partial_{x_i} {\mathbf D}_{x_j} n^1\cdot d^1+\partial_{x_j} {\mathbf D}_{x_i} n^1\cdot d^1 & \partial_{x_i} {\mathbf D}_{x_j} n^2\cdot d^2+
\partial_{x_j} {\mathbf D}_{x_i} n^2\cdot d^2
\end{matrix}\right|}, \nonumber\\
X_j=x_j+2\frac{\left|\begin{matrix}
{\mathbf D}_{x_i} n^1\cdot d^1 & {\mathbf D}_{x_i} n^2\cdot d^2\\
{\mathbf D}_{x_j} n^1\cdot d^1 & {\mathbf D}_{x_j} n^2\cdot d^2
\end{matrix}\right|}{\left|\begin{matrix}
{\mathbf D}_{x_j} n^1\cdot d^1 & {\mathbf D}_{x_j} n^2\cdot d^2\\
\partial_{x_i} {\mathbf D}_{x_j} n^1\cdot d^1+\partial_{x_j} {\mathbf D}_{x_i} n^1\cdot d^1 & \partial_{x_i} {\mathbf D}_{x_j} n^2\cdot d^2+
\partial_{x_j} {\mathbf D}_{x_i} n^2\cdot d^2
\end{matrix}\right|},\nonumber\\
X_k=x_k \qquad \mbox{for} \quad k\neq i,j, \label{Rij}
\end{gather} with $\partial_z$ we denote the partial derivative operator w.r.t.\ to $z$, i.e., $\partial_z h=\frac{\partial h}{\partial z}$. ${\mathbf D}_z$ is the Hirota's bilinear operator, i.e., ${\mathbf D}_z h\cdot k= ({\partial_z} h) k-h {\partial_z}k$.

\begin{Proposition} \label{prop1}The following holds:
\begin{enumerate}\itemsep=0pt
\item[$1.$] Mappings $R_{ij}$ depend on $32$ parameters $\alpha^i_{j,k,l}$, $\beta^i_{j,k,l}$, $i=1,2$, $j,k,l\in\{0,1\}$. Only $15$ of them are essential.
\item[$2.$] The functions $H_1=n^1/d^1$, $H_2=n^2/d^2$ are invariant under the action of $R_{ij}$, i.e., $H_l\circ R_{ij}=H_l$, $l=1,2$.
\item[$3.$] Mappings $R_{ij}$ are involutions, i.e., $R_{ij}^2={\rm id}$.
\item[$4.$] Mappings $R_{ij}$ are anti-measure preserving\footnote{A map $\phi\colon (x,y)\mapsto (X,Y)$ is called {\it measure preserving map} with density $m(x,y)$, if its Jacobian determinant $\frac{\partial (X,Y)}{\partial (x,y)}$ equals to $\frac{m(X,Y)}{m(x,y)}$. If the Jacobian determinant of the map $\phi$ equals to $-\frac{m(X,Y)}{m(x,y)}$, then the map $\phi$ is called {\it anti-measure preserving map} with density $m(x,y)$.} with densities $m_1=n^1 d^2$, $m_2=n^2 d^1$.
\item[$5.$] Mappings $R_{ij}$ satisfy the relation $R_{12} R_{13} R_{23}=R_{23} R_{13} R_{12}$.
\end{enumerate}
\end{Proposition}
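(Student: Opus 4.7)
The plan rests on the geometric picture underlying Adler's construction: for each fixed $x_k$, the multilinear polynomials $n^l,d^l$ cut out bidegree-$(1,1)$ curves $n^l=c_l d^l$ in $\mathbb{P}^1\times\mathbb{P}^1$ that generically meet in two points, and $R_{ij}$ is built to be the involution swapping those two intersections on each slice $\{x_k=\mathrm{const}\}$. For item 1, I would isolate two groups of gauge symmetries of the formula \eqref{Rij}. First, a $\mathrm{GL}(2)$ action $(n^i,d^i)^{T}\mapsto M_i(n^i,d^i)^{T}$ leaves the map literally unchanged, because $\mathbf{D}_z(an+bd)\cdot(cn+dd)=(\det M)\,\mathbf{D}_z n\cdot d$, so every Hirota expression in \eqref{Rij} picks up the same scalar factor $\det M_1\det M_2$ and the ratio is unaffected; this accounts for $2\times 4=8$ non-essential parameters. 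Second, independent M\"obius changes $x_i\mapsto\phi_i(x_i)$ preserve the multilinear form of $n^l,d^l$ up to a common factor that cancels in $H_l=n^l/d^l$, and merely conjugate $R_{ij}$ to a map of the same shape, contributing $3\times 3=9$ further non-essential parameters. Altogether $32-8-9=15$ essential parameters remain.

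For items 2 and 3, I would fix $x_k$ and expand $n^l-c_l d^l=\alpha^l x_i x_j+\beta^l x_i+\gamma^l x_j+\delta^l$ with coefficients depending on $(x_k,c_l)$. Eliminating the $x_i x_j$ term linearly between the $l=1,2$ equations produces the line joining the two intersection points of the biquadratic pencil, and substituting back yields a quadratic in $x_i$ whose two roots are the $i$-coordinates of those intersections. Vieta's formulas then express $x_i+X_i$ and $x_i X_i$ rationally in the $\alpha^l,\dots,\delta^l$, and a direct (if tedious) manipulation identifies the resulting $X_i,X_j$ with the explicit expressions in \eqref{Rij}. This simultaneously yields item 2 (the image point lies on both level curves by construction) and item 3 (swapping the two roots of a quadratic twice is the identity).

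For item 4, differentiating the invariance $H_l\circ R_{ij}=H_l$ and applying the chain rule gives $M(X)\,J=M(x)$ with $M(y)_{lm}=\partial_{x_m}H_l|_y$, hence $\det J=\det M(x)/\det M(X)$. Writing $\partial_{x_m}H_l=\mathbf{D}_{x_m}n^l\cdot d^l/(d^l)^2$ and comparing with the proposed densities $m_s=n^s d^{3-s}$ reduces the anti-measure-preserving property to an explicit anti-invariance identity for the function $m_s\det M$, which follows from the Vieta-type description obtained above together with the biquadratic structure; the minus sign arises because $R_{ij}$ swaps the two sheets of a double cover rather than fixing them.

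Item 5 is the main obstacle, but reduces to a clean geometric statement once items 2--3 are in hand. Since each $R_{ij}$ squares to the identity, the relation $R_{12}R_{13}R_{23}=R_{23}R_{13}R_{12}$ is equivalent to $(R_{12}R_{13}R_{23})^2=\mathrm{id}$. This composition preserves both $H_1$ and $H_2$, so it preserves their generic common level set $C$, which by an adjunction computation in $(\mathbb{P}^1)^3$ (two $(1,1,1)$-divisors give $K_C=0$) is a smooth connected curve of genus one. Each $R_{ij}$ then restricts to an algebraic involution of $C$ that has fixed points (namely the configurations where the two $(1,1)$-curves become tangent in the relevant slice), so it acts as a reflection $p\mapsto c_{ij}-p$ on the group law of $C$. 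An odd composition of reflections on an elliptic curve is itself a reflection and therefore an involution; hence $(R_{12}R_{13}R_{23})^2$ restricts to the identity on every generic fibre and, by continuity of the rational map, coincides with the identity globally.
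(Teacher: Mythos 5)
Your items 1--3 follow essentially the paper's own route. The parameter count is the same arithmetic (M\"obius gauges on the three variables remove $9$ parameters, and your $\mathrm{GL}(2)$ actions on the pairs $(n^i,d^i)$ are exactly the paper's ``M\"obius transformation of an invariant plus overall multiple'', removing $8$ more), and your Vieta-style solution of $H_l(X_i,X_j,\cdot)=H_l(x_i,x_j,\cdot)$ --- discard the trivial root, match the second root with the explicit formula \eqref{Rij} by a tedious computation --- is what the paper does by eliminating $X_2$ and factoring out $(X_1-x_1)$; involutivity then follows for the same reason in both texts. Item 5 is where you genuinely diverge, and your route is sound: by involutivity the relation is equivalent to $(R_{12}R_{13}R_{23})^2=\mathrm{id}$; the common level curve $C$ of $H_1,H_2$ is a complete intersection of two $(1,1,1)$-divisors in $(\mathbb{CP}^1)^3$, generically a smooth connected genus-one curve; each $R_{ij}|_C$ is the deck involution of the degree-two projection to the omitted coordinate, so Riemann--Hurwitz forces ramification (four branch points), the involution has fixed points and is a reflection $p\mapsto c_{ij}-p$ for the group law, and an odd composition of reflections is again a reflection, hence an involution on every generic fibre; Zariski density then gives the identity of rational maps (generic parameters suffice, by specialization). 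This is more conceptual than either of the paper's arguments --- Adler's incidence argument with surfaces through five prescribed points, or the direct verification that $T_1=R_{13}R_{12}$ and $T_2=R_{12}R_{23}$ commute --- at the price of genericity hypotheses (smoothness and connectedness of $C$, existence of fixed points) that the computational proofs do not need; you should state the Riemann--Hurwitz step explicitly, since an involution of an elliptic curve without fixed points would be a translation and the argument would collapse.

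Item 4 is the one genuine soft spot. The reduction via $M(X)J=M(x)$, hence $\det J=\det M(x)/\det M(X)$, is correct and is a cleaner start than the paper's, but the remaining claim --- that $m_s\det M$ with $m_s=n^sd^{3-s}$ changes sign under $R_{ij}$ --- is exactly the assertion to be proved (it encodes both the specific densities $n^1d^2$, $n^2d^1$ and the minus sign), and you only declare that it ``follows from the Vieta-type description together with the biquadratic structure''. Nothing in your sketch actually singles out these densities or produces the sign, so as written the step is a promissory note bordering on circularity. The paper closes this gap concretely: it differentiates the proportionality relations \eqref{der1}, eliminates $\partial\kappa$, $\partial\lambda$, solves the resulting linear system for the partial derivatives of $X_1,X_2$, and computes $\det J=-\kappa\lambda=-\tilde n^1\tilde d^2/(n^1d^2)$. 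Your double-cover idea can be made to work (restrict to $\{H_2=c_2\}$, parametrize it rationally, and run the classical QRT-type computation comparing the derivative of $H_1$ at the two preimages of a value, which yields the sign and the denominator factors), but that computation has to be carried out; it is the entire content of the statement.
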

\begin{proof}
1.~The invariants $H_1$, $H_2$ depend on $3$ variables and they include $32$ parameters. Acting with a different M\"obius transformation to each of the variables, 9~parameters can be removed. A~M\"obius transformation of an invariant remains an invariant, since we have $2$ invariants, $6$ more parameters can be removed. Finally, since any multiple of an invariant remains an invariant, $2$~more parameters can be removed. That leaves us with $32-9-6-2=15$ essential parameters for the invariants $H_1$, $H_2$ and hence for the maps $R_{ij}$.

2.~The functions $H_1=n^1/d^1$, $H_2=n^2/d^2$, reads
\begin{gather*}
H_1(x_1,x_2,x_3)=\frac{ax_1x_2+bx_1+cx_2+d}{a_1x_1x_2+b_1x_1+c_1x_2+d_1}, \\ H_2(x_1,x_2,x_3)=\frac{kx_1x_2+lx_1+mx_2+n}{k_1x_1x_2+l_1x_1+m_1x_2+n_1},
\end{gather*}
where $a,a_1,b,b_1,k,k_1,\ldots$ are linear functions of $x_3$ (note we have suppressed the dependency on $x_3$ of the functions $H_1, H_2$).
 From the set of equations
\begin{gather} \label{cons1}
H_1(X_1,X_2,x_3)=H_1(x_1,x_2,x_3),\qquad H_2(X_1,X_2,x_3)=H_2(x_1,x_2,x_3),
\end{gather}
by eliminating $X_2$ or by eliminating $X_1$ the resulting equations respectively factorize as
\begin{gather*}
(X_1-x_1)A=0,\qquad (X_2-x_2)B=0.
\end{gather*}
 The factor $A$ is linear in $X_1$ and the factor $B$ is linear in $X_2$. By solving these equations (we omit the trivial solution $X_1=x_1$, $X_2=x_2$) we obtain
 \begin{gather}
 X_1= \frac{\gamma_{13}^{34}x_2^2+\big(\gamma_{23}^{34}+\gamma_{14}^{34}\big)x_2+\gamma_{24}^{34}+\big(\gamma_{13}^{14}x_2^2+\big(\gamma_{13}^{24}+\gamma_{23}^{14}\big)x_2+
 \gamma_{23}^{24}\big)x_1}
 {\gamma_{13}^{23}x_2^2+\big(\gamma_{13}^{24}+\gamma_{14}^{23}\big)x_2+\gamma_{14}^{24}+\big(\gamma_{12}^{13}x_2^2+\big(\gamma_{12}^{23}+\gamma_{12}^{14}\big)x_2+
 \gamma_{12}^{24}\big)x_1} ,\nonumber\\
 X_2= \frac{\gamma_{12}^{24}x_1^2+\big(\gamma_{24}^{23}+\gamma_{14}^{24}\big)x_1+\gamma_{34}^{24}+\big(\gamma_{12}^{14}x_1^2+\big(\gamma_{12}^{34}+\gamma_{14}^{23}\big)x_1+
 \gamma_{34}^{23}\big)x_2}
 {\gamma_{23}^{12}x_1^2+\big(\gamma_{12}^{34}+\gamma_{23}^{14}\big)x_1+\gamma_{14}^{34}+\big(\gamma_{13}^{12}x_1^2+\big(\gamma_{23}^{13}+\gamma_{13}^{14}\big)x_1+
 \gamma_{13}^{34}\big)x_2} ,\label{Rij2}
\end{gather}
 where $ \gamma_{ij}^{kl}:=\left|\begin{smallmatrix}
 u_{ij}&u_{kl}\\
 v_{ij}&v_{kl} \end{smallmatrix}\right| $,
with $u_{ij}$ the determinants of a matrix generated by the $i^{\rm th}$ and $j^{\rm th}$ column of the matrix
\begin{gather*}
u=\left(\begin{matrix}
a&b&c&d\\
a_1&b_1&c_1&d_1
\end{matrix}\right)
\end{gather*}
and $v_{kl}$ the determinants of a matrix generated by the $k^{\rm th}$ and $l^{\rm th}$ column of the matrix
\begin{gather*}
v=\left(\begin{matrix}
k&l&m&n\\
k_1&l_1&m_1&n_1
\end{matrix}\right).
\end{gather*}
Now it is a matter of long and tedious calculation to prove that the map $\phi\colon (x_1,x_2,x_3)\mapsto(X_1,X_2,x_3)$, where $X_1,X_2$
are given by~(\ref{Rij2}) coincides with the map $R_{12}$ of~(\ref{Rij}). Similarly we can work on~$R_{13}$ and~$R_{23}$.

3.~Since the map $R_{12}\colon (x_1,x_2,x_3)\mapsto(X_1,X_2,x_3)$ satisfies~(\ref{cons1}), the proof of involutivity follows.

4.~It is enough to prove that the map $R_{12}$ anti-preserves the measure with density $m_1=n^1d^2$, i.e., the Jacobian determinant
\begin{gather*} \frac{\partial{(X_1,X_2)}}{\partial{(x_1,x_2)}}:=\left|\begin{matrix}
 \dfrac{\partial X_1}{\partial x_1}& \dfrac{\partial X_1}{\partial x_2}\vspace{1mm}\\
 \dfrac{\partial X_2}{\partial x_1}& \dfrac{\partial X_2}{\partial x_2}
 \end{matrix}\right|\end{gather*} equals
 \begin{gather*} 
 \frac{\partial{(X_1,X_2)}}{\partial{(x_1,x_2)}}=-\frac{n^1(X_1,X_2,x_3)d^2(X_1,X_2,x_3)}{n^1(x_1,x_2,x_3)d^2(x_1,x_2,x_3)}.
 \end{gather*}
 Since the functions $H_i=n^i/d^i$, $i=1,2$ are invariant under the action of the map $R_{12}$, it holds
 \begin{gather}
 n^1(X_1,X_2,x_3)=\kappa(x_1,x_2,x_3)n^1(x_1,x_2,x_3),\nonumber\\
 d^1(X_1,X_2,x_3)=\kappa(x_1,x_2,x_3)d^1(x_1,x_2,x_3),\nonumber\\
 n^2(X_1,X_2,x_3)=\lambda(x_1,x_2,x_3)n^2(x_1,x_2,x_3),\nonumber\\
 d^2(X_1,X_2,x_3)=\lambda(x_1,x_2,x_3)d^2(x_1,x_2,x_3),\label{der1}
 \end{gather}
 where $\kappa$, $\lambda$ are rational functions of $x_1$, $x_2$, $x_3$. So,
 \begin{gather} \label{mes1}
 \frac{n^1(X_1,X_2,x_3)d^2(X_1,X_2,x_3)}{n^1(x_1,x_2,x_3)d^2(x_1,x_2,x_3)}=\kappa(x_1,x_2,x_3) \lambda(x_1,x_2,x_3).
 \end{gather}
 We differentiate equations~(\ref{der1}) with respect to $x_1$ and we eliminate $\frac{\partial \kappa(x_1,x_2,x_3)}{\partial x_1}$ and $\frac{\partial \lambda(x_1,x_2,x_3)}{\partial x_1}$ to obtain
 \begin{gather}
\frac{1}{n^1}\left(\frac{\partial \tilde n^1}{\partial x_1}-\kappa\frac{\partial n^1}{\partial x_1}\right)= \frac{1}{d^1}\left(\frac{\partial \tilde d^1}{\partial x_1}-\kappa\frac{\partial d^1}{\partial x_1}\right) ,\nonumber\\
 \frac{1}{n^2}\left(\frac{\partial \tilde n^2}{\partial x_1}-\lambda\frac{\partial n^2}{\partial x_1}\right)= \frac{1}{d^2}\left(\frac{\partial \tilde d^2}{\partial x_1}-\lambda\frac{\partial d^2}{\partial x_1}\right),\label{p41}
 \end{gather}
 here we have suppressed the dependency of $\kappa$, $\lambda$, $n^i$, $d^i $ on $x_1$, $x_2$, $x_3$. By $\tilde n^i$ we denote $\tilde n^i:=n^i(X_1,X_2,x_3)$, $i=1,2$, and similarly for $\tilde d^i$.
 Also if we differentiate the equations~(\ref{der1}) with respect to~$x_2$ and eliminate $\frac{\partial \kappa}{\partial x_2}$ and $\frac{\partial \lambda}{\partial x_2}$ we obtain
 \begin{gather}
\frac{1}{n^1}\left(\frac{\partial \tilde n^1}{\partial x_2}-\kappa\frac{\partial n^1}{\partial x_2}\right)= \frac{1}{d^1}\left(\frac{\partial \tilde d^1}{\partial x_2}-\kappa\frac{\partial d^1}{\partial x_2}\right) , \nonumber\\
\frac{1}{n^2}\left(\frac{\partial \tilde n^2}{\partial x_2}-\lambda\frac{\partial n^2}{\partial x_2}\right)= \frac{1}{d^2}\left(\frac{\partial \tilde d^2}{\partial x_2}-\lambda\frac{\partial d^2}{\partial x_2}\right).\label{p42}
 \end{gather}
 Due to the form of $n^i$, $d^i$, $i=1,2$, equations (\ref{p41}), (\ref{p42}) are linear in $\frac{\partial X_1}{\partial x_i}$, $\frac{\partial X_2}{\partial x_i}$, $i=1,2$. Hence we obtain $\frac{\partial X_1}{\partial x_i}$, $\frac{\partial X_2}{\partial x_i}$, $i=1,2, $ in terms of $X_1$, $X_2$, $x_1$, $x_2$, $x_3$, $\kappa$, $\lambda$ and by using~(\ref{Rij2}), the Jacobian determinant reads
 $ \frac{\partial{(X_1,X_2)}}{\partial{(x_1,x_2)}}=-\kappa\lambda $. Using~(\ref{mes1}) we have
 \begin{gather*}
 \frac{\partial{(X_1,X_2)}}{\partial{(x_1,x_2)}}=-\kappa\lambda=-\frac{\tilde n^1\tilde d^2}{n^1d^2},
 \end{gather*}
 that completes the proof. Note that the same holds true for the remaining maps $R_{ij}$.

5.~In \cite{Adler:2006} Adler presented a computational proof based on the fact that the maps $R_{ij}$ map points that lie on the invariant curve
 \begin{gather}\label{adl-prf}
 n^1(x_1,x_2,x_3)-C_1d^1(x_1,x_2,x_3)=0,\qquad n^2(x_1,x_2,x_3)-C_2d^2(x_1,x_2,x_3)=0,
 \end{gather}
 that is the intersection of two surfaces of the form $A\colon N(x_1,x_2,x_3)=0$, where $N$ is polynomial with degree at most one on each variable $x_1$, $x_2$ and $x_3$. In~\cite{Adler:2006}, it was proven that any surface of the form $A$ that passes through the following five points
 \begin{gather*} 
 \big(\hat{x_1},\tilde{x}_2,\hat{\tilde{x}}_3\big) \xleftarrow{R_{13}} \big(x_1,\tilde{x}_2,\tilde{x}_3\big) \xleftarrow{R_{23}} (x_1,x_2,x_3)\xrightarrow{R_{12}} (\bar x_1,{\bar x_2},x_3) \xrightarrow{R_{13}} (\hat{\bar{x}}_1,\bar{x}_2,\hat{x}_3)
 \end{gather*}
 passes as well through the point $\big(\hat{\bar{x}}_1,Y,\tilde{\hat{x}}_3\big)$, that is the point of intersection of the straight line $L\colon (X,Z)=\big(\hat{\bar{x}}_1,\tilde{\hat{x}}_3\big)$ and the surface $A$, i.e., $L\cap A=\big(\hat{\bar{x}}_1,Y,\tilde{\hat{x}}_3\big)$. Since the invariant curve~(\ref{adl-prf}) is the intersection of two surfaces of the form~$A$, it also passes through the point $\big(\hat{\bar{x}}_1,Y,\tilde{\hat{x}}_3\big)$ and there is $\tilde{\bar{x}}_2=Y$. So the values of $\tilde{\bar{x}}_2$ obtained in two different ways coincide and this is sufficient for the proof.

Alternatively, one can show by direct computation that the maps $T_1=R_{13}R_{12}$ and $T_2=R_{12}R_{23}$, commute, i.e., $T_1T_2=T_2T_1$. So there is
\begin{gather*}
R_{13}R_{23}=R_{12}R_{23}R_{13}R_{12}
\end{gather*}
and due to the fact that the maps $R_{ij}$ are involutions, $R_{ij}^2={\rm id}$, from the equation above we obtain
\begin{gather*}
R_{12}R_{13}R_{23}=R_{23}R_{13}R_{12}.\tag*{\qed}
\end{gather*}\renewcommand{\qed}{}
\end{proof}

Among all the maps that can be constructed by the involutions $R_{ij}$, the following maps
\begin{gather*}
T_1=R_{13}R_{12},\qquad T_2=R_{12}R_{23}, \qquad T_3=R_{23}R_{13}
\end{gather*}
are of special interest since they are not periodic and moreover they satisfy~\cite{Adler:2006}
\begin{gather*}
T_1T_2T_3={\rm id}, \qquad T_iT_j=T_jT_i, \qquad i,j\in \{1,2,3\}.
\end{gather*}
\begin{Proposition}
For the maps $T_i$, $i=1,2,3$ it holds:
\begin{enumerate}\itemsep=0pt
\item[$1)$] they preserve the functions $H_1$, $H_2$,
\item[$2)$] they are measure-preserving with densities $m_1$, $m_2$,
\item[$3)$] they preserve the following degenerate Poisson tensors,
\begin{gather*}
\Omega_i^j=m_j\left( \frac{\partial H_i}{\partial x_3} \frac{\partial }{\partial x_1}\wedge\frac{\partial }{\partial x_2} -\frac{\partial H_i}{\partial x_2} \frac{\partial }{\partial x_1}\wedge\frac{\partial }{\partial x_3}+\frac{\partial H_i}{\partial x_1} \frac{\partial }{\partial x_2}\wedge\frac{\partial }{\partial x_3}\right), \qquad i,j\in\{1,2\},
\end{gather*}
where it holds
\begin{gather*}
0=\Omega_1^j \nabla H_1, \qquad \Omega_1^j \nabla H_2=-\Omega_2^j \nabla H_1, \qquad \Omega_2^j \nabla H_2=0, \qquad j=1,2,
\end{gather*}
\item[$4)$] they are Liouville integrable maps.
\end{enumerate}
\end{Proposition}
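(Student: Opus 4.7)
The plan is to prove the four parts in order, each building on the previous. Part~1 is immediate from Proposition~\ref{prop1}.2: any composition of $R_{ij}$'s preserves $H_1$ and $H_2$. For part~2, since each $R_{ij}$ fixes the variable $x_k$ with $k\notin\{i,j\}$, its $3\times 3$ Jacobian determinant coincides with the $2\times 2$ Jacobian $\partial(X_i,X_j)/\partial(x_i,x_j)$, which by Proposition~\ref{prop1}.4 equals $-m_\ell(R_{ij}(x))/m_\ell(x)$ for both $\ell=1,2$. The chain rule applied to the two-factor composition $T_i$ then cancels the two minus signs and yields $\det J_{T_i}(x)=m_\ell(T_i(x))/m_\ell(x)$, which is measure preservation with both densities.

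For part~3, I would first rewrite the bracket as $\{f,g\}_i^j=m_j\det(\nabla f,\nabla g,\nabla H_i)$. The three listed identities then read off at once from determinantal properties: $\Omega_i^j\nabla H_i=0$ because the row $\nabla H_i$ is repeated, while $\Omega_1^j\nabla H_2$ and $-\Omega_2^j\nabla H_1$ agree because swapping the rows $\nabla H_1,\nabla H_2$ reverses the determinant's sign. Jacobi in three dimensions reduces for $\omega^{ab}=\epsilon^{abc}V_c$ to $V\cdot(\nabla\times V)=0$, and with $V=m_j\nabla H_i$ this becomes $m_j\nabla H_i\cdot(\nabla m_j\times\nabla H_i)=0$ by antisymmetry. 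For invariance under $T_i$, the chain rule applied to $H_i\circ T_i=H_i$ gives $\nabla H_i(x)=\nabla H_i(T_i(x))\,J_{T_i}(x)$, so that
\begin{gather*}
\{f\circ T_i,g\circ T_i\}_i^j(x)=m_j(x)\det J_{T_i}(x)\det(\nabla f,\nabla g,\nabla H_i)(T_i(x)),
\end{gather*}
and substituting $\det J_{T_i}(x)=m_j(T_i(x))/m_j(x)$ from part~2 produces $\{f,g\}_i^j(T_i(x))$.

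Part~4 then follows from the Veselov definition of Liouville integrability in the degenerate Poisson setting: the three-dimensional phase space carries the invariant Poisson tensor $\Omega_i^j$ of generic rank~$2$ with Casimir $H_i$, while the second invariant $H_{3-i}$ is automatically in involution with the Casimir and is functionally independent of $H_i$ for generic parameter values. The main technical point I expect to pin down carefully is the tracking of how the $2\times 2$ anti-measure-preservation of each $R_{ij}$ lifts to the full $3\times 3$ Jacobian of the composition; everything else reduces to the determinantal bookkeeping above and to the identities already established in Proposition~\ref{prop1}.
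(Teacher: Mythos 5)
Your proposal is correct and follows essentially the same route as the paper: parts 1 and 2 are deduced from Proposition 2.1 (with the two anti-measure-preserving factors cancelling signs), part 3 rests on the fact that $\Omega_i^j$ is the contraction $V^j\rfloor dH_i$ of the preserved volume trivector with density $m_j$ against the preserved invariant $H_i$, and part 4 invokes the standard criterion of an invariant Poisson structure together with the invariant $H_1$, $H_2$. The only difference is that you verify in coordinates what the paper delegates to cited references, namely the Jacobi identity via the three-dimensional criterion $V\cdot(\nabla\times V)=0$ with $V=m_j\nabla H_i$ and the $T_i$-invariance of the bracket via the chain rule and $\det J_{T_i}=m_j\circ T_i/m_j$, which is a harmless and slightly more self-contained elaboration.
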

\begin{proof}
The statements $(1)$, $(2)$ follows from Proposition~\ref{prop1}. To prove the statement $(3)$, $(4)$, first note that since the maps $T_i$ are measure preserving, they preserve the following polyvector fields
\begin{gather*}
V^j=m_j\frac{\partial }{\partial x_1}\wedge\frac{\partial }{\partial x_2}\wedge\frac{\partial }{\partial x_3}.
\end{gather*}
Hence, the contractions $V^j\rfloor d H_i$, $i,j\in\{1,2\}$ (see~\cite{Kassotakis:2006, Haggar:1996}) are degenerate Poisson tensors. Namely,
\begin{align*}
\Omega_i^j& =\left(m_j\frac{\partial }{\partial x_1}\wedge\frac{\partial }{\partial x_2}\wedge\frac{\partial }{\partial x_3}\right)\rfloor d H_i\\
& =m_j\left( \frac{\partial H_i}{\partial x_3} \frac{\partial }{\partial x_1}\wedge\frac{\partial }{\partial x_2} -\frac{\partial H_i}{\partial x_2} \frac{\partial }{\partial x_1}\wedge\frac{\partial }{\partial x_3}+\frac{\partial H_i}{\partial x_1} \frac{\partial }{\partial x_2}\wedge\frac{\partial }{\partial x_3}\right),
\end{align*}
where $i,j\in\{1,2\}$.

$(5)$ The maps $T_i$ preserve the Poisson tensors $\Omega_i^j$ and the $2$ invariants $H_1$, $H_2$, so they are Liouville integrable maps~\cite{Santini:1991,Maeda:1987,Veselov:1991}.
\end{proof}

Note that on the level surfaces $H_2(x_1,x_2,x_3)=c$, maps $T_1$, $T_2$, $T_3$ reduce to pair-wise commuting maps on the plane which preserve the function
$\hat H_1(x_1,x_2;c)$. One of these reduced maps is the associated with the invariant $\hat H_1(x_1,x_2;c)$ QRT map. Examples of commuting maps with specific members of the QRT family of maps were also constructed in \cite{Kassotakis:2013}.

The involution $R_{12}$ under the reduction $x_2=x_1$, $H_2=H_1=H$, so $H=\frac{n}{d}=\frac{ax_1^2+bx_1+c}{kx_1^2+lx_1+m}$, reads
\begin{gather*}
R_{12}\colon \ (x_1,x_3)\mapsto \left(x_1-2\frac{D_{x_1} n\cdot d}{\partial_{x_1} D_{x_1} n\cdot d},x_3\right),
\end{gather*}
that coincides with the QRT involution $i_x$ that preserves the invariant~$H$. This formulae for the QRT involution $i_x$ was firstly given in~\cite{Joshi:2018}, where an elegant presentation of the QRT map was considered.

\subsection{A generalisation of the triad family of maps} \label{subsection2.1}

Following the same generalisation procedures introduced for the QRT family of maps \cite{Capel:2001,Kassotakis:2006,Iatrou:2003,QRT:2006,Tsuda:2004}, the triad family of maps can be generalised in similar manners. Here, in order to generalise the triad family of maps, we mimic the generalisation of the QRT family of maps introduced in~\cite{Tsuda:2004}.

Consider the following polynomials
\begin{gather}
 n^i=\sum_{j_1,j_2,\ldots,j_k=0}^1\alpha^i_{j_1,j_2,\ldots,j_k}x_1^{1-j_1}x_2^{1-j_2}\cdots x_k^{1-j_k},\nonumber\\
d^i=\sum_{j_1,j_2,\ldots,j_k=0}^1\beta^i_{j_1,j_2,\ldots,j_k}x_1^{1-j_1}x_2^{1-j_2}\cdots x_k^{1-j_k}, \qquad
i=1,2 k\geq 3,\label{def-pol}
\end{gather}
where $x_1,x_2,\ldots, x_k$ are considered as variables and $\alpha^i_{j_1,j_2,\ldots,j_k}$, $\beta^i_{j_1,j_2,\ldots,j_k}$ as parameters.
 We consider the $\binom{k}{2}$ maps $R_{ij}$, $i<j$, $i,j \in \{1,2,\ldots,k\}$. These maps can be build out of the polynomials $n^i$, $d^i$ and they read:
$R_{ij}\colon (x_1,x_2,\ldots, x_k)\mapsto (X_1,X_2,\ldots, X_k)$, where $X_l=x_l$ $\forall\, l\neq i,j$ and $X_i$, $X_j$ are given by the formulae~(\ref{Rij}), where $n^i$, $d^i$, $i=1,2$ are given by~(\ref{def-pol}).

Proposition \ref{prop1} is straight forward extended to the $k$-variables case.
\begin{Proposition}The following holds:
\begin{enumerate}\itemsep=0pt
\item[$1.$] Mappings $R_{ij}$ depend on $4\cdot 2^k$ parameters $\alpha^i_{j_1,j_2,\ldots,j_k}$, $\beta^i_{j_1,j_2,\ldots,j_k}$, $i=1,2$, $j_1,j_2,\ldots,j_k\in\{0,1\}$.
 Only $4\cdot2^k-3k-8$ of them are essential.
\item[$2.$] The functions $H_1=n^1/d^1$, $H_2=n^2/d^2$ are invariant under the action of $R_{ij}$, i.e., $H_l \circ R_{ij}=H_l$, $l=1,2$.
\item[$3.$] Mappings $R_{ij}$ are involutions, i.e., $R_{ij}^2={\rm id}$.
\item[$4.$] Mappings $R_{ij}$ are anti-measure preserving with densities $m_1=n^1 d^2$, $m_2=n^2 d^1$.
\item[$5.$] Mappings $R_{mn}$, $ m<n$, $m,n \in \{1,2,\ldots,k\}$ satisfy the relations $R_{ij} R_{il} R_{jl}=R_{jl} R_{il} R_{ij}$.
\end{enumerate}
\end{Proposition}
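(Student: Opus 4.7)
The plan is to extend each part of Proposition~\ref{prop1} to the $k$-variable setting by leveraging the fact that the map $R_{ij}$ acts only on the two variables $x_i$, $x_j$ and treats the remaining $x_l$, $l\neq i,j$, as parameters. Since the polynomials $n^i$, $d^i$ in~(\ref{def-pol}) are multilinear in every $x_l$, freezing the variables $x_l$ with $l\notin\{i,j\}$ produces bilinear polynomials in $(x_i,x_j)$ whose coefficients are multilinear in the frozen variables. These bilinear polynomials have precisely the form treated in Proposition~\ref{prop1} (with $x_3$ there replaced by the tuple of frozen variables here), and the formulae~(\ref{Rij}) defining $R_{ij}$ are then literally the triad formulae on the slice.

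For part~(1), I would count directly. Each of the four polynomials $n^1$, $n^2$, $d^1$, $d^2$ has $2^k$ coefficients, giving $4\cdot 2^k$ parameters. Applying an independent M\"obius transformation to each of the $k$ variables removes $3k$ parameters; replacing each of the two invariants by a M\"obius transformation of itself keeps the invariance and removes $2\cdot 3=6$ more; finally, rescaling each of $n^i$ and $d^i$ by a common nonzero factor does not alter $H_i$ and removes $2$ further parameters. This gives $3k+8$ redundancies, hence $4\cdot 2^k-3k-8$ essential parameters, consistent with the count $15$ obtained in Proposition~\ref{prop1} for $k=3$.

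For parts~(2)--(4), I would invoke the slice reduction explained above. The invariance equations $H_l\circ R_{ij}=H_l$, the involutivity $R_{ij}^2=\mathrm{id}$, and the anti-measure-preserving identity with densities $m_1=n^1d^2$, $m_2=n^2d^1$ are all algebraic identities in the $32$-style coefficients of bilinear polynomials in $x_i$, $x_j$. Proposition~\ref{prop1} established them as polynomial identities holding for arbitrary values of those coefficients; specialising the coefficients to multilinear functions of the frozen $x_l$'s preserves the identities, and since the densities $m_1$, $m_2$ depend on $x_l$ only through overall factors that cancel in the Jacobian ratio (the partials $\partial/\partial x_l$ do not enter the $2\times 2$ Jacobian of $R_{ij}$), the transfer is immediate.

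Part~(5) is the key point, but it too reduces to the case already proved. The relation $R_{ij}R_{il}R_{jl}=R_{jl}R_{il}R_{ij}$ involves exactly three indices, and each of the six maps composed fixes $x_m$ for $m\notin\{i,j,l\}$. Freezing those variables as parameters, we are in the triad situation of Proposition~\ref{prop1} on the three-dimensional slice with coordinates $(x_i,x_j,x_l)$, and the 3-variable Yang--Baxter identity applies directly. Equivalently, one may replay the alternative proof, verifying by direct computation that $T_1=R_{il}R_{ij}$ and $T_2=R_{ij}R_{jl}$ commute on the slice and then using $R_{ij}^2=R_{il}^2=R_{jl}^2=\mathrm{id}$ to convert $T_1T_2=T_2T_1$ into the stated braid-type relation. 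The only thing requiring care, and the main (mild) obstacle, is to verify that the Jacobian computation underlying part~(4) does not acquire extra terms from differentiating with respect to the frozen variables; this is immediate because $R_{ij}$ does not move any $x_m$ with $m\neq i,j$, so the relevant $2\times 2$ Jacobian block is unaffected.
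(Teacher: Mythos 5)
Your proposal is correct and follows essentially the same route as the paper: part~(1) by the identical count $4\cdot 2^k-3k-6-2$, and parts~(2)--(5) by freezing the inactive variables so that the restricted maps coincide with the triad maps of Proposition~\ref{prop1}, whose properties then transfer verbatim. The only cosmetic difference is that the paper phrases the reduction uniformly via any three indices $p<q<r$, whereas you use a two-variable slice for parts~(2)--(4) and the three-variable slice only for part~(5); this changes nothing of substance.
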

\begin{proof} 1.~ The invariants $H_1$, $H_2$ depend on $k\geq 3$ variables and they include $4\cdot 2^k$ parameters. Acting with a different M\"obius transformation to each of the variables, $3k$ parameters can be removed. A M\"obius transformation of an invariant remains an invariant, since we have $2$ invariants, $6$ more parameters can be removed. Finally, since any multiple of an invariant remains an invariant, $2$ more parameters can be removed. That leaves us with $4\cdot 2^k-3k-6-2=4\cdot 2^k-3k-8$ essential parameters for the invariants $H_1$, $H_2$ and hence for the maps~$R_{ij}$.

 The proof of the remaining statements of this Proposition follows directly from the fact that for any 3 indices $p<q< r\in \{1,2,\ldots,k\}$, the maps $R_{pq}$, $R_{pr}$ and $R_{qr}$, coincide with the maps $R_{12}$, $R_{13}$ and $R_{23}$ respectively of Proposition~\ref{prop1}.
\end{proof}

We take a stand here to comment that for $k=3$ the construction above coincides with the Adler's triad family of maps hence we have Liouville integrability. For $k>3$ we have a~generalisation of the latter and since always we will have maps in $k$ variables with $2$ invariants, Liouville integrability is not expected for generic choice of the parameters $\alpha^i_{j_1,j_2,\ldots,j_k}$, $\beta^i_{j_1,j_2,\ldots,j_k}$. For a specific but quite general choice of the parameters though, one can associate a Lax pair to these maps and recover the additional integrals which are required for the Liouville integrability to emerge.

We also have to note that the case $k=4$ was firstly introduced in~\cite{Kassotakis:phd}. Although for $k=4$ we have mappings in $4$ variables with $2$ invariants, Liouville integrability is not apparent unless we specify the parameters. A specific choice of the parameters which leads to integrability is presented to the following example.

\begin{Example}[the Adler--Yamilov map \cite{Adler:1994}]
Consider the following special form of the func\-tions~$n^i$,~$d^i$
\begin{gather*}
d^1=d^2=1,\qquad n^1=x_1x_2+x_3x_4,\qquad n^2=x_1x_2x_3x_4+x_1x_4+x_2x_3+a x_1x_2+b x_3x_4.
\end{gather*}
Then the functions $H_i=n^i/d^i$, $i=1,2$ are preserved by construction by the maps $R_{ij}$ as well as by the following elementary involutions
\begin{gather*}
i\colon \ (x_1,x_2,x_3,x_4)\mapsto (x_2,x_1,x_4,x_3),\qquad \phi\colon \ (x_1,x_2,x_3,x_4)\mapsto (x_1x_2/x_3,x_3,x_2,x_3x_4/x_2).
\end{gather*}
The Adler--Yamilov map ($\xi$) is considered by the following composition
\begin{gather*}
\xi:=R_{14} \phi i\colon \ (x_1,x_2,x_3,x_4)\mapsto \left(x_3-\frac{(a-b)x_1}{1+x_1x_4},x_4,x_1,x_2+\frac{(a-b)x_4}{1+x_1x_4}\right).
\end{gather*}
The Adler--Yamilov map is Liouville integrable since it preserves, and the invariants $H_1$, $H_2$ are in involution with respect to the canonical Poisson bracket.
For further discussions on the Adler--Yamilov map see~\cite{Kassotakis:2013, Rizos:2013}.
\end{Example}

\section{Invariants in separated variables and Yang--Baxter maps}\label{Section3}

Mappings $R_{mn}$, $m<n \in \{1,2,\ldots, k\}$, presented in Section~\ref{subsection2.1}, satisfy the identities $R_{ij} R_{il} R_{jl}=R_{jl} R_{il} R_{ij}$, nevertheless as they stand they are not Yang--Baxter. Take for example the map $R_{12}\colon (x_1,x_2,x_3,\ldots, x_k)\mapsto (X_1,X_2,x_3,\ldots, x_k)$. The formulae for $X_1$ is fraction linear in $x_1$ with coefficients that depend on all the remaining variables and $X_2$ is fraction linear in $x_2$ with coefficients that depend on all the remaining variables. In order for $R_{12}$ to be a~Yang--Baxter map the coefficients of $x_1$ in the formulae of $X_1$ should depend only on $x_2$ and the coefficients of $x_2$ in the formulae of $X_2$ should depend only on $x_1$. This ``separability" requirement can be easily achieved by requiring separability of variables on the level of the invariants of the map~$R_{12}$. We have two invariants $H_1=n^1/d^1$, $H_2=n^2/d^2$, so we can have three different kinds of separability. (I)~Both $H_1$ and $H_2$ to be multiplicative separable on the variables~$x_1$ and~$x_2$. (II)~$H_1$ to be multiplicative and $H_2$ to be additive separable and finally (III)~both $H_1$ and $H_2$ to be additive separable on the variables~$x_1$ and~$x_2$. In what follows we explicitly present these three different kinds of separability in all variables of the invariants $H_1$ and $H_2$.
\begin{enumerate}\itemsep=0pt
\item[(I)] Multiplicative/multiplicative separability of variables:
\begin{gather} \label{multi/multi}
H_1=\prod_{i=1}^k\frac{a_i-b_ix_i}{c_i-d_ix_i},\qquad H_2=\prod_{i=1}^k\frac{A_i-B_ix_i}{C_i-D_ix_i}.
\end{gather}
\item[(II)] Multiplicative/additive separability of variables:
\begin{gather} \label{multi/add}
H_1=\prod_{i=1}^k\frac{a_i-b_ix_i}{c_i-d_ix_i},\qquad H_2=\sum_{i=1}^k\frac{A_i-B_ix_i}{C_i-D_ix_i}.
\end{gather}
\item[(III)] Additive/additive separability of variables:
\begin{gather} \label{add/add}
H_1=\sum_{i=1}^k\frac{a_i-b_ix_i}{c_i-d_ix_i},\qquad H_2=\sum_{i=1}^k\frac{A_i-B_ix_i}{C_i-D_ix_i}.
\end{gather}
\end{enumerate}
In the formulas above, $a_i$, $b_i$, $c_i$, $d_i$, $A_i$, $B_i$, $C_i$, $D_i$, $i=1,\ldots, k$ are parameters, $8k$ in total. In all three cases above, the number of essential parameters is $3k-6$. This argument can be proven by the following reasoning. Since the invariants $H_1$, $H_2$ depends on $k$ variables, by a~M\"{o}bius transformation on each of the $k$ variables $3k$ parameters can be removed. Also any M\"{o}bius transformation of an invariant remains an invariant so since we have two invariants $2\times3$ more parameters can be removed. Finally, for each one of the $2k$ functions $\frac{a_i-b_ix_i}{c_i-d_ix_i}$, $\frac{A_i-B_ix_i}{C_i-D_ix_i}$, $i=1,\ldots, k$, one non-zero parameter can be absorbed simply by dividing with it (and reparametrise), so $2k$ more parameters can be removed. In total we have $8k-3k-2\times 3-2k=3k-6$ essential parameters.

\subsection{Multiplicative/multiplicative separability of variables} \label{subsection3.1}
Let us first introduce some definitions.
\begin{Definition} \label{def1}
The maps $R,\tilde R \colon \mathbb{CP}^1\times \mathbb{CP}^1\mapsto \mathbb{CP}^1\times \mathbb{CP}^1$ are ${\text{(M\"{o}b)}}^2$ {\it equivalent} if there exists bijections $\phi ,\psi \colon \mathbb{CP}^1\mapsto \mathbb{CP}^1$ such that the following conjugation relation holds
\begin{gather*}
\tilde R=\phi^{-1}\times \psi^{-1}R\phi\times\psi.
\end{gather*}
\end{Definition}
\begin{Definition} \label{def2}
The map $R\colon \mathbb{CP}^1\times \mathbb{CP}^1\ni (u,v)\mapsto (U,V)\in \mathbb{CP}^1\times \mathbb{CP}^1$, where
\begin{gather*}
U=\frac{a_1+a_2u}{a_3+a_4u},\qquad V=\frac{b_1+b_2v}{b_3+b_4v},
\end{gather*}
with $a_i$, $b_i$, $i=1,\ldots,4$ known polynomials of $v$ and $u$ respectively, will be said to be of {\it subclass $[\gamma:\delta]$,} if the highest degree that appears in the polynomials $a_i$ is $\gamma$ and the higher degree that appears in the polynomials~$b_i$ is~$\delta$.
\end{Definition}
Clearly, maps that belong to different subclasses are not ${\text{(M\"{o}b)}}^2$ equivalent.
\begin{Proposition} \label{prop2}
Consider the multiplicative/multiplicative separability of variables of the invariants $H_1$ and $H_2$ $($see \eqref{multi/multi}$)$. Consider also the following sets of parameters
\begin{gather*}
{\bf p}_{ij}:={\bf p}_{i}\cup {\bf p}_{j} \qquad \text{where} \qquad {\bf p}_{i}:= \{a_i,b_i,c_i,d_i,A_i,B_i,C_i,D_i \}, \qquad i<j \in \{1,2,\ldots, k\}
\end{gather*}
and the functions
\begin{gather*}
f_i:=\frac{a_i-b_ix_i}{c_i-d_ix_i}, \qquad g_i:=\frac{A_i-B_ix_i}{C_i-D_ix_i}, \qquad i=1,\ldots, k.
\end{gather*}
The following holds:
\begin{enumerate}\itemsep=0pt
\item[$1.$] The invariants $H_1=\prod\limits_{i=1}^k f_i$, $H_2=\prod\limits_{i=1}^k g_i$ depend on $8k$ parameters. Only $3k-6$ of them are essential.
\item[$2.$] Mappings $R_{ij}$ explicitly read
\begin{gather*}
R_{ij}\colon \ (x_1,x_2,\ldots, x_k)\mapsto (X_1,X_2,\ldots, X_k),
\end{gather*}
where $X_l=x_l$ $\forall\, l\neq i,j$ and $X_i$, $X_j$ are given by the formulae
 \begin{gather*}
 X_i= x_i-2\frac{\left|\begin{matrix}
 f_i'f_j&f_if_j'\\
 g_i'g_j&g_ig_j'
 \end{matrix}\right|}{g_i'g_j\left(\dfrac{f_i'}{f_j'}\left|\begin{matrix}
 f_j&f_j'\\
 f_j'&f_j''\end{matrix}\right|+\dfrac{f_j'}{f_i'}\left|\begin{matrix}
 f_i&f_i'\\
 f_i'&f_i''\end{matrix}\right| \right)-f_i'f_j\left(\dfrac{g_i'}{g_j'}
 \left|\begin{matrix}
 g_j&g_j'\\
 g_j'&g_j''\end{matrix}\right|+
 \dfrac{g_j'}{g_i'}
 \left|\begin{matrix}
 g_i&g_i'\\
 g_i'&g_i''\end{matrix}\right| \right)},\\
 X_j=x_j+2\frac{\left|\begin{matrix}
 f_i'f_j&f_if_j'\\
 g_i'g_j&g_ig_j'
 \end{matrix}\right|}{g_j'g_i\left(\dfrac{f_i'}{f_j'}\left|\begin{matrix}
 f_j&f_j'\\
 f_j'&f_j''\end{matrix}\right|+\dfrac{f_j'}{f_i'}\left|\begin{matrix}
 f_i&f_i'\\
 f_i'&f_i''\end{matrix}\right| \right)-f_j'f_i\left(\dfrac{g_i'}{g_j'}
 \left|\begin{matrix}
 g_j&g_j'\\
 g_j'&g_j''\end{matrix}\right|+
 \dfrac{g_j'}{g_i'}
 \left|\begin{matrix}
 g_i&g_i'\\
 g_i'&g_i''\end{matrix}\right| \right)},
\end{gather*}
where $f_l'\equiv \frac{\partial f_l}{\partial x_l}$, $g_l'\equiv \frac{\partial g_l}{\partial x_l}$, $g_l''\equiv \frac{\partial^2 g_l}{\partial x_l^2}$, etc. Note that in the expressions of $X_i$, $X_j$ appears only the coordinates $x_i$, $x_j$ and the parameters ${\bf p}_{ij}$. From further on we denote the maps~$R_{ij}$ as~$R_{ij}^{{\bf p}_{ij}}$, in order to stress this separability feature.
\item[$3.$] Mappings $R_{ij}^{{\bf p}_{ij}}$ are anti-measure preserving with densities $m_1=n^1 d^2$, $m_2=n^2 d^1$, where $n^i$, $d^i$ the numerators and the denominators respectively, of the invariants $H_i$, $i=1,2$.
\item[$4.$] Mappings $R_{ij}^{{\bf p}_{ij}}$ satisfy the Yang--Baxter identity
\begin{gather*}
R_{ij}^{{\bf p}_{ij}} R_{ik}^{{\bf p}_{ik}} R_{jk}^{{\bf p}_{jk}}=R_{jk}^{{\bf p}_{jk}} R_{ij}^{{\bf p}_{ij}} R_{ij}^{{\bf p}_{ij}}.
\end{gather*}
\item[$5.$] Mappings $R_{ij}^{{\bf p}_{ij}}$ are involutions with the sets of singularities
\begin{gather*}
\Sigma_{ij}=\big\{P^1_{ij},P^2_{ij},P^3_{ij},P^4_{ij} \big\}=\left\{\left(\frac{a_i}{b_i},\frac{c_j}{d_j}\right),\left(\frac{c_i}{d_i},\frac{a_j}{b_j}\right),\left(\frac{A_i}{B_i},\frac{C_j}{D_j}\right),
\left(\frac{C_i}{D_i},\frac{A_j}{B_j}\right)\right\},
\end{gather*}
 and the sets of fixed points \begin{gather*}
 \Phi_{ij}=\big\{Q^1_{ij},Q^2_{ij},Q^3_{ij},Q^4_{ij} \big\}=\left\{\left(\frac{a_i}{b_i},\frac{a_j}{b_j}\right),\left(\frac{c_i}{d_i},\frac{c_j}{d_j}\right),\left(\frac{A_i}{B_i},\frac{A_j}{B_j}\right),
 \left(\frac{C_i}{D_i},\frac{C_j}{D_j}\right)\right\},
\end{gather*}
where in the formulae for $P_{ij}^m$ and $Q_{ij}^m$, $m=1,\ldots, 4$, we have suppressed the dependency on the remaining variables. For example, with
$P_{ij}^1=\big(\frac{a_i}{b_i},\frac{c_j}{d_j}\big)$ we denote $\big(x_1,\ldots,x_{i-1},\frac{a_i}{b_i},x_{i+1},\allowbreak \ldots,x_{j-1},\frac{c_j}{d_j},x_{j+1},\ldots, x_k\big)$ and similarly for the remaining $P_{ij}^m$ and~$Q_{ij}^m$.

\item[$6.$] Each one of the maps $R_{ij}^{{\bf p}_{ij}}$ is ${\text{\rm (M\"{o}b)}}^2$ equivalent to the $H_{\rm I}$ Yang--Baxter map.
\end{enumerate}
\end{Proposition}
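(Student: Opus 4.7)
The plan is to exploit the fact that $R_{ij}^{{\bf p}_{ij}}$ acts non-trivially only on the pair $(x_i,x_j)$ and depends solely on the parameters in ${\bf p}_{ij}$, so it is effectively a map $\mathbb{CP}^1\times\mathbb{CP}^1\to\mathbb{CP}^1\times\mathbb{CP}^1$. The $\text{(M\"ob)}^2$ equivalence of Definition~\ref{def1} allows independent M\"obius transformations on each factor, so since $f_i,g_i$ are two M\"obius functions of $x_i$ alone (and similarly $f_j,g_j$ of $x_j$), I would choose M\"obius maps $\phi$ and $\psi$ that normalise these pairs to match the canonical pair appearing in $H_{\rm I}$.

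First, I would carry out the normalisation step using the three degrees of freedom of $\phi$: send the zero and pole of $f_i$ to $0$ and $\infty$ and scale so that $f_i(x_i)=x_i$, leaving $g_i$ as a M\"obius function of $x_i$ parametrised by a single cross-ratio (after the analogous normalisation of the invariant $H_2$ by a M\"obius transformation of its value). The same procedure applied with $\psi$ normalises $(f_j,g_j)$. Under this conjugation the invariants $H_1=f_if_j$ and $H_2=g_ig_j$ (with the other $x_l$ held fixed as spectators) take the canonical multiplicatively separated form that characterises $H_{\rm I}$ in general position of singularities, as given in Appendix~\ref{app1}.

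Second, I would invoke uniqueness. By item 2 the map $R_{ij}^{{\bf p}_{ij}}$ is the unique non-trivial involution whose level sets are the common level curves of $H_1$ and $H_2$; by item 5 it has exactly four base points with the prescribed coordinate structure. The $H_{\rm I}$ Yang--Baxter map in general position is likewise the unique $\text{(M\"ob)}^2$ class of quadrirational involutions of subclass $[1{:}1]$ (Definition~\ref{def2}) whose invariants factor multiplicatively into M\"obius pieces of each variable. Hence, matching invariants after the above normalisation automatically matches the maps, and the singularity set $\Sigma_{ij}$ in item 5 gives an independent check that the four base points agree with those of $H_{\rm I}$.

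The main obstacle is not conceptual but a careful bookkeeping task: one must verify that the specific M\"obius normalisations $\phi,\psi$ carry the sixteen parameters of ${\bf p}_{ij}$ into exactly the parameter convention used for $H_{\rm I}$ in \cite{pap3-2010} (and Appendix~\ref{app1}), rather than into some other quadrirational involution which merely lies in the same subclass $[1{:}1]$. Once the singularity configuration is aligned, uniqueness within the subclass closes the argument, since Definition~\ref{def2} rules out equivalence with representatives of the other subclasses appearing in the $F$- and $H$-lists.
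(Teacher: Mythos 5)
Your proposal only engages with item~6 of the Proposition: items~1--5 are used as inputs (``by item~2\ldots'', ``by item~5\ldots'') but never established. In the paper these are proved, respectively, by the parameter count given just before Section~\ref{subsection3.1}, by rewriting the general formulae~(\ref{Rij}) in terms of $f_i$, $g_i$, by reference to Proposition~\ref{prop1} (anti-measure preservation and the Yang--Baxter relation), and by a direct computation of the singular and fixed points; a complete proof of the statement has to contain these steps, or at least the reductions to Proposition~\ref{prop1}.

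For item~6 itself, your first (normalisation) step is essentially the paper's: the paper conjugates by the M\"obius maps defined through $\operatorname{CR}[x_i,a_i/b_i,c_i/d_i,A_i/B_i]=\operatorname{CR}[y_i,0,1,\infty]$ and $\operatorname{CR}[x_j,c_j/d_j,a_j/b_j,C_j/D_j]=\operatorname{CR}[y_j,\infty,1,0]$, and then \emph{verifies directly} that after re-parametrisation the map is exactly $H_{\rm I}$. Your proposal replaces that final verification by a uniqueness claim, and this is where the genuine gap lies. First, $H_{\rm I}$ belongs to the subclass $[2{:}2]$, not $[1{:}1]$ as you assert (the paper uses this fact explicitly when counting the singular points in item~5 and again in Section~\ref{Section4}); the $[1{:}1]$ maps arise only in the degenerate situations of Remark~\ref{remark:3.3}. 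Second, even with the subclass corrected, the statement that there is a unique $\text{(M\"ob)}^2$ class of quadrirational involutions in that subclass ``whose invariants factor multiplicatively'' is nowhere proved and is delicate: $F_{\rm I}$ is also a quadrirational map of subclass $[2{:}2]$ with four distinct singular points, and it is not $\text{(M\"ob)}^2$ equivalent to $H_{\rm I}$ (the passage between them in~\cite{pap3-2010} is the twist $\phi_1R_{12}\phi_2$, not a conjugation). So matching the subclass and the separability of the invariants does not by itself force equivalence with $H_{\rm I}$; you would need either to invoke the full quadrirational classification together with a precise matching of the singularity/fixed-point configuration, or simply to finish as the paper does, by substituting the normalised variables and checking that the resulting map coincides with~(\ref{H1}).
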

\begin{proof}
$(1)$ See at the end of the previous subsection.

$(2)$ Mappings~(\ref{Rij}) written in terms of the functions $f_i$, $g_i$ get exactly the desired form.

$(3)$ See Proposition~\ref{prop1}.

$(4)$ See Proposition~\ref{prop1}.

$(5)$ Because mappings $R_{ij}^{{\bf p}_{ij}}$, for generic parameter sets ${\bf p}_{ij}$, belong to the $[2:2]$ subclass, we expect at most $8$ singular points, $4$ singular points from the first fraction of the map and $4$ from the second. By direct calculation we show that the singular points of the first and the second fraction of $R_{ij}^{{\bf p}_{ij}}$ coincide. Moreover, $P^m_{ij}$, $m=1,\ldots,4$ are the singular points of the maps $R_{ij}^{{\bf p}_{ij}}$, i.e.,
\begin{gather*}
R_{ij}^{{\bf p}_{ij}}\colon \ P^m_{ij}\mapsto \left(x_1,\ldots,x_{i-1},\frac{0}{0},x_{i+1},\ldots,x_{j-1},\frac{0}{0},x_{j+1},\ldots, x_k\right).
\end{gather*}
Note that the values of the invariants $H_i$ at the singular points $P^m_{ij}$ are undetermined, i.e., $H_1\big(P^m_{ij}\big)=\frac{0}{0}$, $m=1,2$,
$H_2\big(P^m_{ij}\big)=\frac{0}{0}$, $m=3,4$. For the fixed points $Q^m_{ij}$, $m=1,\ldots,4$ it holds
$R_{ij}^{{\bf p}_{ij}}\colon Q^m_{ij}\mapsto Q^m_{ij}$. Note also that $H_1\big(Q^1_{ij}\big)=0$, $H_1\big(Q^2_{ij}\big)=\infty$, $H_2\big(Q^3_{ij}\big)=0$, $H_2\big(Q^4_{ij}\big)=\infty$.

$(6)$ Introducing the new variables $y_i$, $y_j$, $i\neq j=1,\ldots, k$ though
\begin{gather*}
\operatorname{CR}[x_i,a_i/b_i,c_i/d_i,A_i/B_i]=\operatorname{CR}[y_i,0,1,\infty],\\
 \operatorname{CR}[x_j,c_j/d_j,a_j/b_j,C_j/D_j]=\operatorname{CR}[y_j,\infty,1,0],
\end{gather*}
after a re-parametrization mappings $R_{ij}$ gets exactly the form of the $H_{\rm I}$ map. Here, with $\operatorname{CR}[a,b,c,d]$ we denote the cross-ratio of $4$ points, namely
\begin{gather*}
\operatorname{CR}[a,b,c,d]:=\frac{(a-c)(b-d)}{(a-d)(b-c)}.\tag*{\qed}
\end{gather*}\renewcommand{\qed}{}
\end{proof}

Each one of the maps $R_{ij}$ has a set of singularities which consists of $4$ distinct points. With appropriate limits we are allowed to merge some of the singularities and obtain Yang--Baxter maps which are not $\text{(M{\"o}b)}^2$ equivalent with the original one.

By setting $C_i=\epsilon A_i$, $D_i=\epsilon B_i$, $A_j=\epsilon C_j$, $B_j=\epsilon D_j$ and letting $\epsilon \rightarrow 0$ the singular points~$P^4_{ij}$ and~$P^3_{ij}$ merge. The resulting maps, under a re-parametrization, coincide with the ones obtained in the multiplicative/additive case (see Section~\ref{subsection3.2}), hence are $\text{(M\"ob)}^2$ equivalent with the $H_{\rm II}$ Yang--Baxter map. The same result can be obtained by merging~$P^2_{ij}$ and~$P^1_{ij}$. Note that mer\-ging~$P^4_{ij}$ with $P^2_{ij}$ or $P^4_{ij}$ with $P^1_{ij}$ is not of interest since the resulting maps are trivial.

By further setting $c_i=\epsilon a_i$, $d_i=\epsilon b_i$, $a_j=\epsilon c_j$, $b_j=\epsilon d_j$ and letting $\epsilon \rightarrow 0$ the singular points~$P^2_{ij}$ and $P^1_{ij}$ merge as well. The resulting maps, under a re-parametrization, coincide with the ones obtained in the additive/additive case (see Section~\ref{subsection3.3}), hence are $\text{(M{\"o}b)}^2$ equivalent with the $H_{\rm III}^A$ Yang--Baxter map.
Any further merging of singularities leads to trivial maps.

\begin{Remark}\label{remark:3.3}
An interesting observation is that if we impose that the fixed points $Q^4_{ij}$ of the maps $R_{ij}$ coincide with the singular points $P^2_{ij}$ or the fixed points $Q^4_{ij}$ coincide with $P^1_{ij}$, we obtain maps which belong to the $[1:1]$ subclass of maps. The same is true if we demand that the fixed points $Q^1_{ij}$ coincide with the singular points $P^3_{ij}$ or if the fixed points $Q^1_{ij}$ coincide with the singular points $P^4_{ij}$,
\end{Remark}

\begin{Remark}For generic sets of parameters ${\bf p}_{ij}$, each one of the $\binom{k}{2}$ maps $R_{ij}^{{\bf p}_{ij}}$, is $\text{(M\"ob)}^2$
equivalent to the $H_{\rm I}$ Yang--Baxter map. For degenerate choices of the sets ${\bf p}_{ij}$, this is no longer the case. Hence, in that respect, mappings
$R_{ij}^{{\bf p}_{ij}}$ are more general than the $H_{\rm I}$ map since they include degenerate cases as well. In the same respect $Q_{\rm V}$ \cite{Viallet:2009}, the rational version of the discrete Krichever--Novikov equation $Q_4$ \cite{Adler:1998}, is more general.
\end{Remark}

\begin{Example}[$k=3$] \label{ex111} For $k=3$, the invariants $H_1=f_1f_2f_3$, $H_2=g_1g_2g_3$ are functions of $3$ variables with $24$ parameters, $3$ of them are essential. Without loss of generality, after removing the redundancy of the parameters, the invariants $H_1$, $H_2$ can be cast into the form
\begin{gather*}
H_1=x_1x_2x_3,\qquad H_2=\frac{x_1-p_1}{x_1-1}\frac{x_2-p_2}{x_2-1}\frac{x_3-p_3}{x_3-1}.
\end{gather*}
Then each of the mappings $R_{ij}$, $i\neq j \in \{1,2,3\}$ is exactly the $H_{\rm I}$ Yang--Baxter map. The $H_{\rm I}$ Yang--Baxter map explicitly reads
$H_{\rm I}\colon (u,v)\mapsto (U,V)$ where
\begin{gather} \label{H1}
 U=v Q, \qquad V=uQ^{-1},
 \qquad Q=\frac{(\alpha-1)uv+(\beta-\alpha)u+\alpha(1-\beta)}{(\beta-1)uv+(\alpha-\beta)v+\beta(1-\alpha)}.
\end{gather}
By the identifications $u\equiv x_i$, $u\equiv x_j$, $\alpha\equiv p_i$ and $\beta\equiv p_j$, from~(\ref{H1}) we recover the maps~$R_{ij}$.

The maps $\phi_i\colon (x_1,x_2,x_3)\mapsto (X_1,X_2,X_3)$ where $X_l=x_l$ $\forall\, l\neq i$ and $X_i=\frac{p_i}{x_i}$, $i=1,2,3$
and the maps $\psi_i\colon (x_1,x_2,x_3)\mapsto (X_1,X_2,X_3)$ where $X_l=x_l$ $\forall\, l\neq i$ and $X_i=\frac{x_i-p_i}{x_i-1}$, $i=1,2,3$ satisfy
\begin{gather*}
H_1 \phi_1 \phi_2 \phi_3 =\frac{p_1 p_2 p_3}{H_1}, \qquad\! H_2 \phi_1 \phi_2 \phi_3 =\frac{p_1 p_2 p_3}{H_2}, \qquad\! H_1 \psi_1 \psi_2 \psi_3= H_2, \qquad\! H_2 \psi_1 \psi_2 \psi_3=H_1.
\end{gather*}
The maps $\phi_i$ and $\psi_i$ have a special role in \cite{pap3-2010} since though them the $H_{\rm I}$ map was derived out of the $F_{\rm I}$ Yang--Baxter map. We will discuss more about these maps in the next Section. We just quickly recall that $\phi_1 R_{12} \phi_2$ is exactly the $F_{\rm I}$ Yang--Baxter map.

\begin{Remark}We have to remark that with loss of generality, mappings $R_{ij}$ can belong on a~different subclasses than the $[2:2]$ subclass of maps that the $H_{\rm I}$ map belongs to. For example, for
\begin{gather*}
H_1=(x_1-p_1)(x_2-p_2)(x_3-p_3), \qquad H_2=\frac{x_1-p_1}{x_1}\frac{x_2}{x_2-p_2}\frac{x_3}{x_3-1},
\end{gather*}
$R_{12}$ is the Hirota's KdV map (see \cite{PKMN2}) that belongs on the subclass $[1:1]$ and $R_{13}$, $R_{23}$ are maps which belong to the subclass $[2:1]$. Explicitly the maps read
\begin{gather*}
R_{12}\colon \ (x_1,x_2,x_3)\mapsto {\displaystyle \left(\frac{p_1 (p_2 x_1 + p_1 x_2 - x_1 x_2)}{p_2 x_1}, \frac{p_2 (p_2 x_1 + p_1 x_2 - x_1 x_2)}{ p_1 x_2}, x_3\right)}, \\
R_{13}\equiv S_{13}\colon \ (x_1,x_2,x_3)\mapsto {\displaystyle \left(\frac{p_1 (-1 + x_3) (p_3 x_1 + p_1 x_3 - x_1 x_3)}{-p_3 x_1 - p_1 x_3 + p_1 p_3 x_3 +
 x_1 x_3}, x_2, \frac{p_3 x_1 + p_1 x_3 - x_1 x_3}{p_1 x_3}\right)}, \\
R_{23}\equiv T_{23}\colon \ (x_1,x_2,x_3)\mapsto {\displaystyle \left(x_1, \frac{p_2 x_3 (-p_2 + p_3 x_2 + p_2 x_3 - x_2 x_3)}{-p_2 p_3 + p_3 x_2 + p_2 p_3 x_3 -
 x_2 x_3}, \frac{x_2 (-p_3 + x_3)}{p_2 (-1 + x_3)}\right)}.
\end{gather*}
The Hirota's KdV map entwines with $S_{13}$ and $T_{23}$, since $R_{12}S_{13}T_{23}=T_{23}S_{13}R_{12}$ holds.
\end{Remark}
\end{Example}

\begin{Example}[$k\geq4$]
For $k=4$ the invariants depend on $32$ parameters and only $6$ of them are essential. Without loss of generality they can be cast into the form
\begin{gather*}
H_1=x_1x_2x_3x_4,\qquad H_2=\frac{x_1-p_1}{x_1-1}\frac{x_2-p_2}{x_2-1}\frac{x_3-p_3}{x_3-1}\frac{\alpha_4-\beta_4 x_4}{\beta_4-\gamma_4x_4}.
\end{gather*}
For $k>4$ the invariants depend on $8k$ parameters and only $3k-6$ of them are essential. Without loss of generality they can be cast into the form
\begin{gather*}
H_1=\prod_{i=1}^k x_i,\qquad H_2=\frac{x_1-p_1}{x_1-1}\frac{x_2-p_2}{x_2-1}\frac{x_3-p_3}{x_3-1}\prod_{i=4}^k \frac{\alpha_i-\beta_i x_i}{\beta_i-\gamma_i x_i}.
\end{gather*}
\end{Example}

\subsection{Multiplicative/additive separability of variables} \label{subsection3.2}

\begin{Proposition} \label{prop3}
Consider the multiplicative/additive separability of variables of the inva\-riants~$H_1$ and $H_2$ $($see \eqref{multi/add}$)$. Consider also the following sets of parameters
\begin{gather*}
{\bf p}_{ij}:={\bf p}_{i}\cup {\bf p}_{j}, \qquad \mbox{where} \qquad {\bf p}_{i}:=\left\{a_i,b_i,c_i,d_i,A_i,B_i,C_i,D_i \right\},\qquad i<j \in \{1,2,\ldots, k\}
\end{gather*}
and the functions
\begin{gather*}
f_i:=\frac{a_i-b_ix_i}{c_i-d_ix_i}, \qquad g_i:=\frac{A_i-B_ix_i}{C_i-D_ix_i}, \qquad i=1,\ldots, k.
\end{gather*}
The following holds:
\begin{enumerate}\itemsep=0pt
\item[$1.$] The invariants $H_1=\prod\limits_{i=1}^k f_i$, $H_2=\sum\limits_{i=1}^k g_i$ depend on $8k$ parameters. Only $3k-6$ of them are essential.
\item[$2.$] Mappings $R_{ij}$ explicitly read
\begin{gather*}
R_{ij}\colon \ (x_1,x_2,\ldots, x_k)\mapsto (X_1,X_2,\ldots, X_k),
\end{gather*}
where $X_l=x_l$ $\forall\, l\neq i,j$ and $X_i$, $X_j$ are given by the formulae
\begin{gather*}
X_i=x_i- 2\frac{\left|\begin{matrix}
 f_if_j'& f_i'f_j\\
 g_j'&g_i'\end{matrix}\right| }
 { \left|\begin{matrix}
 f_i'f_j& g_i'\\
 \dfrac{f_j'}{f_i'}f_if_i''+\dfrac{f_i'}{f_j'}f_jf_j''-2f_i'f_j'&\dfrac{g_j'}{g_i'}g_i''+\dfrac{g_i'}{g_j'}g_j''\end{matrix}\right| }, \\
X_j=x_j+ 2\frac{\left|\begin{matrix}
 f_if_j'& f_i'f_j\\
 g_j'&g_i'\end{matrix}\right| }
 { \left|\begin{matrix}
 f_j'f_i& g_j'\\
 \dfrac{f_j'}{f_i'}f_if_i''+\dfrac{f_i'}{f_j'}f_jf_j''-2f_i'f_j'&\dfrac{g_j'}{g_i'}g_i''+\dfrac{g_i'}{g_j'}g_j''\end{matrix}\right| },
\end{gather*}
where $f_l'\equiv \frac{\partial f_l}{\partial x_l}$, $g_l'\equiv \frac{\partial g_l}{\partial x_l}$, $g_l''\equiv \frac{\partial^2 g_l}{\partial x_l^2}$, etc. Note that in the expressions of $X_i$, $X_j$ appears only the coordinates $x_i$, $x_j$ and the parameters ${\bf p}_{ij}$. From further on we denote the maps~$R_{ij}$ as~$R_{ij}^{{\bf p}_{ij}}$, in order to stress this separability feature.
\item[$3.$] Mappings $R_{ij}^{{\bf p}_{ij}}$ are anti-measure preserving with densities $m_1=n^1 d^2$, $m_2=n^2 d^1$, where $n^i$,~$d^i$ the numerators and the denominators respectively, of the invariants $H_i$, $i=1,2$.
\item[$4.$] Mappings $R_{ij}^{{\bf p}_{ij}}$ satisfy the Yang--Baxter identity
\begin{gather*}
R_{ij}^{{\bf p}_{ij}} R_{ik}^{{\bf p}_{ik}} R_{jk}^{{\bf p}_{jk}}=R_{jk}^{{\bf p}_{jk}} R_{ij}^{{\bf p}_{ij}} R_{ij}^{{\bf p}_{ij}}.
\end{gather*}
\item[$5.$] Mappings $R_{ij}^{{\bf p}_{ij}}$ are involutions with the sets of singularities
\begin{gather*}
\Sigma_{ij}=\big\{P^1_{ij},P^2_{ij},P^3_{ij}\big\}=
\left\{\left(\frac{a_i}{b_i},\frac{c_j}{d_j}\right),\left(\frac{c_i}{d_i},\frac{a_j}{b_j}\right),\left(\frac{C_i}{D_i},\frac{C_j}{D_j}\right)^2\right\},
\end{gather*}
where the superscript $2$ in $P^3_{ij}$ denotes that these singular points appears with multiplicity $2$.
In the formulae for $P_{ij}^m$, $m=1,\ldots, 3$, we have suppressed the dependency on the remaining variables. For example, with
$P_{ij}^1=\big(\frac{a_i}{b_i},\frac{c_j}{d_j}\big)$ we denote $\big(x_1,\ldots,x_{i-1},\frac{a_i}{b_i},x_{i+1},\ldots,x_{j-1},\frac{c_j}{d_j},\allowbreak x_{j+1},\ldots, x_k\big)$ and similarly for the remaining $P_{ij}^m$.
\item[$6.$] Each one of the maps $R_{ij}^{{\bf p}_{ij}}$ is $\text{\rm (M\"ob)}^2$ equivalent to the $H_{\rm II}$ Yang--Baxter map.
\end{enumerate}
\end{Proposition}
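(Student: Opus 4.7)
The plan is to reuse the scaffolding of Proposition~\ref{prop2}, since the multiplicative/additive case is a specialization of the general triad construction of Section~\ref{Section2} with the same involutivity, anti-measure-preservation and Yang--Baxter features; only the analysis of singularities and of the normal form (now $H_{\rm II}$ instead of $H_{\rm I}$) requires separate work.

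For item $(1)$, I would repeat the counting argument given at the end of Section~\ref{Section3}: of the $8k$ parameters in $H_1,H_2$, a M\"obius change of each of the $k$ variables kills $3k$, a M\"obius action on each of the two invariants (which, crucially, works for both multiplicative and additive separable forms because an affine rescaling of $H_2=\sum g_i$ stays of the same shape) kills a further~$6$, and an overall rescaling of each of the $2k$ building blocks~$f_i$,~$g_i$ kills~$2k$ more, leaving $8k-3k-6-2k=3k-6$. For item $(2)$, I would substitute the multiplicative/additive forms into the general formula~(\ref{Rij}): the Hirota derivatives applied to $n^1=\prod f_i\cdot\prod(c_l-d_lx_l)$ factor through logarithmic derivatives of $f_i$, and the analogous calculation for~$n^2$ collapses because $g_i$ depends only on~$x_i$, killing all mixed Hirota terms. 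This is exactly the mechanism that produces the clean determinantal formulae displayed in the statement and shows, in particular, that $X_i$, $X_j$ involve only the pair of coordinates $(x_i,x_j)$ and the subset ${\bf p}_{ij}$ of parameters.

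Items $(3)$ and $(4)$ are then immediate. Anti-measure-preservation with densities $m_1=n^1d^2$, $m_2=n^2d^1$ is inherited directly from Proposition~\ref{prop1}(4), because those densities were built from $n^i,d^i$ regardless of the multiplicative/additive structure. The Yang--Baxter identity $R_{ij}^{{\bf p}_{ij}} R_{ik}^{{\bf p}_{ik}} R_{jk}^{{\bf p}_{jk}}=R_{jk}^{{\bf p}_{jk}} R_{ik}^{{\bf p}_{ik}} R_{ij}^{{\bf p}_{ij}}$ follows from Proposition~\ref{prop1}(5) applied to the three-variable reduction obtained by freezing the remaining $k-3$ coordinates, combined with the separability observation in item~$(2)$ that shows the frozen coordinates appear only through the parameter subsets ${\bf p}_{ij}$, ${\bf p}_{ik}$, ${\bf p}_{jk}$. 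For item $(5)$, involutivity is again inherited from Proposition~\ref{prop1}(3); the singular set is then identified by direct computation: substituting the proposed points into the numerator and denominator of each fractional component of~$R_{ij}^{{\bf p}_{ij}}$ and showing simultaneous vanishing. The multiplicity-$2$ feature at~$P_{ij}^3$ is the key structural difference from the multiplicative/multiplicative case and reflects the fact that $H_2=\sum g_i$ has a single common pole structure rather than two independent ones: this collapses two of the four $H_{\rm I}$-type singularities into one double point, and I would verify this via a local blow-up computation at $(C_i/D_i,C_j/D_j)$.

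Item $(6)$ is the step I expect to require the most care. The approach is to apply, in each factor independently, a M\"obius transformation that sends three distinguished points to a canonical triple; since only the cross-ratio is projectively invariant, the natural choice is to send the two multiplicative-type singular points $(a_i/b_i,c_j/d_j)$ and $(c_i/d_i,a_j/b_j)$ together with the additive-type singular point $(C_i/D_i,C_j/D_j)$ to the canonical locations prescribed by the standard form of $H_{\rm II}$ in Appendix~\ref{app1}. After re-parametrization, the resulting two-parameter family should coincide with $H_{\rm II}$. The main obstacle is bookkeeping: one must check that, under these separate normalizations on $x_i$ and $x_j$, the remaining four independent parameters in ${\bf p}_{ij}$ really reduce to the two parameters of $H_{\rm II}$, and that the double singularity is preserved by the conjugation (so that the normalized map lands in the correct subclass and not back in the $[2{:}2]$ subclass of $H_{\rm I}$). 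This is essentially a cross-ratio and degree-of-singularity argument, paralleling the cross-ratio computation at the end of the proof of Proposition~\ref{prop2}.
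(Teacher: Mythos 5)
Your proposal is correct and follows essentially the same route as the paper: the paper's own proof of this proposition is simply the remark that it ``follows similarly to the proof of Proposition~\ref{prop2}'', i.e.\ the parameter count from the start of Section~\ref{Section3}, substitution of the separable forms into~(\ref{Rij}), inheritance of the anti-measure-preservation, involutivity and Yang--Baxter properties from Proposition~\ref{prop1}, direct identification of the (now merged, multiplicity-two) singular points, and a per-variable cross-ratio normalization to reach the $H_{\rm II}$ normal form. Your outline is precisely that adaptation, so no further comparison is needed.
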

\begin{proof}
The proof follows similarly to the proof of Proposition \ref{prop2}.
\end{proof}
\begin{Example}[$k\geq3$]
For $k=3$, the invariants $H_1=f_1f_2f_3$, $H_2=g_1+g_2+g_3$ are functions of $3$ variables with $24$ parameters, $3$ of them are essential. Without loss of generality, after removing the redundancy of the parameters, the invariants $H_1,H_2$ can be cast into the form
\begin{gather*}
H_1=\frac{x_1-p_1}{x_1}\frac{x_2-p_2}{x_2}\frac{x_3-p_3}{x_3},\qquad H_2=x_1+x_2+x_3.
\end{gather*}
Then each of the mappings $R_{ij}$, $i\neq j \in \{1,2,3\}$ is exactly the $H_{\rm II}$ Yang--Baxter map.

For $k>3$ the invariants depend on $8k$ parameters and only $3k-6$ of them are essential. Without loss of generality they can be cast into the form
\begin{gather*}
H_1=\frac{x_1-p_1}{x_1}\frac{x_2-p_2}{x_2}\frac{x_3-p_3}{x_3}\prod_{i=4}^k \frac{\alpha_i-\beta_i x_i}{\beta_i-\gamma_i x_i},\qquad H_2=\sum_{i=1}^k x_i.
\end{gather*}
\end{Example}

\subsection{Additive/additive separability of variables} \label{subsection3.3}

\begin{Proposition} \label{prop4}
Consider the additive/additive separability of variables of the invariants $H_1$ and $H_2$ $($see \eqref{add/add}$)$. Consider also the following sets of parameters
\begin{gather*}
{\bf p}_{ij}:={\bf p}_{i}\cup {\bf p}_{j}, \qquad\!\! \mbox{where} \qquad\!\!\! {\bf p}_{i}:=\left\{a_i,b_i,c_i,d_i,A_i,B_i,C_i,D_i \right\}, \qquad\!\! i\neq j<j \in \{1,2,\ldots, k\}
\end{gather*}
and the functions
\begin{gather*}
f_i:=\frac{a_i-b_ix_i}{c_i-d_ix_i}, \qquad g_i:=\frac{A_i-B_ix_i}{C_i-D_ix_i}, \qquad i=1,\ldots, k.
\end{gather*}
The following holds:
\begin{enumerate}\itemsep=0pt
\item[$1.$] The invariants $H_1=\prod\limits_{i=1}^k f_i$, $H_2=\sum\limits_{i=1}^k g_i$ depend on $8k$ parameters. Only $3k-6$ of them are essential.
\item[$2.$] Mappings $R_{ij}$ explicitly read
\begin{gather*}
R_{ij}\colon \ (x_1,x_2,\ldots, x_k)\mapsto (X_1,X_2,\ldots, X_k),
\end{gather*}
where $X_l=x_l$ $\forall\, l\neq i,j$ and $X_i$, $X_j$ are given by the formulae
\begin{gather*}
X_i=x_i- 2\frac{\left|\begin{matrix}
 f_j'& f_i'\\
 g_j'&g_i'\end{matrix}\right| }
 { \left|\begin{matrix}
 f_i'& g_i'\\
 \dfrac{f_j'}{f_i'}f_i''+\dfrac{f_i'}{f_j'}f_j''&\dfrac{g_j'}{g_i'}g_i''+\dfrac{g_i'}{g_j'}g_j''\end{matrix}\right|}, \\
X_j=x_j+ 2\frac{\left|\begin{matrix}
 f_j'& f_i'\\
 g_j'&g_i'\end{matrix}\right| }
 { \left|\begin{matrix}
 f_j'& g_j'\\
 \dfrac{f_j'}{f_i'}f_i''+\dfrac{f_i'}{f_j'}f_j''&\dfrac{g_j'}{g_i'}g_i''+\dfrac{g_i'}{g_j'}g_j''\end{matrix}\right|},
\end{gather*}
where $f_l'\equiv \frac{\partial f_l}{\partial x_l}$, $g_l'\equiv \frac{\partial g_l}{\partial x_l}$, $g_l''\equiv \frac{\partial^2 g_l}{\partial x_l^2}$, etc. Note that in the expressions of $X_i$, $X_j$ appears only the coordinates $x_i$, $x_j$ and the parameters ${\bf p}_{ij}$. From further on we denote the maps~$R_{ij}$ as~$R_{ij}^{{\bf p}_{ij}}$, in order to stress this separability feature.
\item[$3.$] Mappings $R_{ij}^{{\bf p}_{ij}}$ are anti-measure preserving with densities $m_1=n^1 d^2$, $m_2=n^2 d^1$, where $n^i$,~$d^i$ the numerators and the denominators respectively, of the invariants $H_i$, $i=1,2$.
\item[$4.$] Mappings $R_{ij}^{{\bf p}_{ij}}$ satisfy the Yang--Baxter identity
\begin{gather*}
R_{ij}^{{\bf p}_{ij}} R_{ik}^{{\bf p}_{ik}} R_{jk}^{{\bf p}_{jk}}=R_{jk}^{{\bf p}_{jk}} R_{ij}^{{\bf p}_{ij}} R_{ij}^{{\bf p}_{ij}}.
\end{gather*}
\item[$5.$] Mappings $R_{ij}^{{\bf p}_{ij}}$ are involutions with the sets of singularities
\begin{gather*}
\Sigma_{ij}=\big\{P^1_{ij},P^2_{ij}\big\}=
\left\{\left(\frac{c_i}{d_i},\frac{c_j}{d_j}\right)^2,\left(\frac{C_i}{D_i},\frac{C_j}{D_j}\right)^2\right\},
\end{gather*}
where the superscript $2$ in $P^1_{ij}$ and $P^2_{ij}$ denotes that these singular points appears with multiplicity~$2$.
In the formulae for $P_{ij}^m$, $m=1,\ldots, 2$, we have suppressed the dependency on the remaining variables. For example, with
$P_{ij}^1=\big(\frac{c_i}{d_i},\frac{c_j}{d_j}\big)$ we denote $\big(x_1,\ldots,x_{i-1},\frac{c_i}{d_i},x_{i+1},\ldots,\allowbreak x_{j-1},\frac{c_j}{d_j},x_{j+1},\ldots, x_k\big)$ and similarly for~$P_{ij}^2$.
\item[$6.$] Each one of the maps $R_{ij}^{{\bf p}_{ij}}$ is $\text{\rm (M\"ob)}^2$ equivalent to the $H_{\rm III}^A$ Yang--Baxter map.
\end{enumerate}
\end{Proposition}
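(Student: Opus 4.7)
The plan is to follow the template of Proposition \ref{prop2} step by step, inheriting most structural facts from Proposition \ref{prop1}, and noting only those simplifications peculiar to the additive/additive setting. For item (1), the parameter count is identical to the two previous cases: of the $8k$ parameters, $3k$ are absorbed by a Möbius change of coordinates in each of the $k$ variables, $6$ are absorbed by the two $\text{(M\"{o}b)}^2$ transformations on the invariants (observing that the additive structure of $H_2$ survives affine action, which is all one needs for three essential parameters per invariant), and a further $2k$ are killed by rescaling each $f_i$ and $g_i$ individually, leaving $8k-3k-6-2k=3k-6$.

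For item (2), I substitute the separated forms directly into (\ref{Rij}). The key simplification is that for $n^2/d^2=\sum_l g_l$ the partial derivatives $\partial_{x_i}(n^2/d^2)=g_i'$ involve only the $i$-th summand, so the Hirota expressions $\mathbf{D}_{x_i} n^2\cdot d^2$ factor through a single $g_i'$ times a piece of $d^2$ depending only on $x_i$ and $x_j$; a parallel (but slightly more involved) reduction applies to the multiplicative factor $n^1/d^1=\prod_l f_l$. After cancelling the common factors that depend on $x_l$ for $l\neq i,j$, the numerators and denominators in (\ref{Rij}) collapse to the $2\times 2$ determinants displayed in the statement, with coefficients depending only on $(x_i,x_j)$ and $\mathbf{p}_{ij}$. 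Items (3) and (4) are then immediate corollaries: anti-measure preservation is the general statement of Proposition \ref{prop1}, item 4, and the braid/Yang--Baxter identity of Proposition \ref{prop1}, item 5, promotes to a set-theoretic Yang--Baxter identity precisely because the separation guarantees $R_{ij}^{\mathbf{p}_{ij}}$ acts nontrivially only on the $(i,j)$ slot and depends only on~$\mathbf{p}_{ij}$.

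Item (5) splits into two sub-claims. Involutivity is inherited verbatim from Proposition \ref{prop1}, item 3. The singular set is located by imposing the simultaneous vanishing of both numerator and denominator in the formulas for $X_i$ and $X_j$; in the additive/additive case the Hirota bracket $\mathbf{D}_{x_i}n^2\cdot d^2$ vanishes quadratically at $C_i-D_i x_i=0$, and similarly for the $(1)$-invariant at $c_i-d_i x_i=0$, so that only the two points $(c_i/d_i,c_j/d_j)$ and $(C_i/D_i,C_j/D_j)$ survive and each enters with multiplicity $2$. This doubling of multiplicity, compared with the four simple singularities of Proposition \ref{prop2}, matches the expected degeneration of the $[2:2]$ subclass under the two successive limits described at the end of Section \ref{subsection3.1}.

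The main obstacle is item (6). The clean route is to exhibit an explicit $\text{(M\"{o}b)}^2$ conjugation that sends the two double singular points to a canonical pair (for instance, both to $\infty$), defined through cross-ratios involving $c_i/d_i$ and $C_i/D_i$ (and the analogous expressions for $j$), after which a relabelling of parameters identifies the result with the canonical form of the $H_{\rm III}^A$ map recalled in Appendix~\ref{app1}. A shortcut that avoids the head-on calculation is available: the degeneration argument at the end of Section~\ref{subsection3.1} shows that the successive merging of the four $H_{\rm I}$-singularities into two double ones is produced by the rescaling $c_i=\epsilon a_i$, $d_i=\epsilon b_i$, $a_j=\epsilon c_j$, $b_j=\epsilon d_j$ together with $C_i=\epsilon A_i$, $D_i=\epsilon B_i$, $A_j=\epsilon C_j$, $B_j=\epsilon D_j$ and $\epsilon\to 0$; a direct inspection confirms that this limit is precisely the passage from the multi/multi invariants to the additive/additive ones, so Proposition~\ref{prop2}, item~(6), closes item~(6) here.
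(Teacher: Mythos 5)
Your overall scheme---inherit items (3)--(5) from Proposition~\ref{prop1}, take item (1) from the general parameter count given at the start of Section~\ref{Section3}, obtain item (2) by substituting the separated forms into~(\ref{Rij}), and settle item (6) by a M\"obius normalisation---is exactly the route the paper intends when it says the proof ``follows similarly to the proof of Proposition~\ref{prop2}''. However, your preferred ``shortcut'' for item (6) does not work. Proposition~\ref{prop2}(6) asserts $\text{(M\"ob)}^2$ equivalence with $H_{\rm I}$ for \emph{generic} parameters, and the $\text{(M\"ob)}^2$ equivalence class is precisely what is destroyed in the limit $\epsilon\to 0$ that merges singular points; so Proposition~\ref{prop2}(6) cannot ``close'' the identification of the degenerate maps with $H_{\rm III}^A$. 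Worse, in the paper the statement at the end of Section~\ref{subsection3.1} that the doubly degenerate limit is equivalent to $H_{\rm III}^A$ is itself deduced from the additive/additive case (``coincide with the ones obtained in the additive/additive case, hence are $\text{(M\"ob)}^2$ equivalent with the $H_{\rm III}^A$ map''), i.e., from the very proposition you are proving; invoking it here is circular. Item (6) must therefore be closed by your ``clean route'': conjugate in each slot by the M\"obius map sending $c_l/d_l\mapsto 0$ and $C_l/D_l\mapsto\infty$ (the analogue of the cross-ratio normalisation used in the proof of Proposition~\ref{prop2}(6)), after which the invariants reduce, up to inessential M\"obius functions of $H_1$, $H_2$, to $H_1=\sum_i 1/x_i$, $H_2=\sum_i p_i x_i$, and the resulting $R_{ij}$ is exactly the $H_{\rm III}^A$ map, as in the $k=3$ example following the proposition.

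A smaller slip occurs in your item (5): you claim that $\mathbf{D}_{x_i}n^2\cdot d^2$ vanishes quadratically on $C_i-D_ix_i=0$. In fact, since $\partial_{x_i}\big(n^2/d^2\big)=g_i'$, one has $\mathbf{D}_{x_i}n^2\cdot d^2=g_i'\,(d^2)^2=(A_iD_i-B_iC_i)\prod_{l\neq i}(C_l-D_lx_l)^2$, which is independent of $x_i$ and vanishes at the poles of the \emph{other} variables. The conclusion you want still holds: clearing denominators in the two-variable formulas of item (2), the numerator $f_j'g_i'-f_i'g_j'$ becomes $\alpha_j\beta_i(c_i-d_ix_i)^2(C_j-D_jx_j)^2-\alpha_i\beta_j(c_j-d_jx_j)^2(C_i-D_ix_i)^2$ (with $\alpha_l=a_ld_l-b_lc_l$, $\beta_l=A_lD_l-B_lC_l$), and a like computation for the denominators exhibits the two common double zeros $(c_i/d_i,c_j/d_j)$ and $(C_i/D_i,C_j/D_j)$, in agreement with the direct calculation the paper alludes to; but the reason you give needs to be corrected along these lines.
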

\begin{proof}The proof follows similarly to the proof of Proposition \ref{prop2}.
\end{proof}
\begin{Example}[$k\geq3$]
For $k=3$, the invariants $H_1=f_1+f_2+f_3$, $H_2=g_1+g_2+g_3$ are functions of $3$ variables with $24$ parameters, $3$ of them are essential. Without loss of generality, after removing the redundancy of the parameters, the invariants $H_1$, $H_2$ can be cast into the form:
\begin{gather*}
H_1=\frac{1}{x_1}+\frac{1}{x_2}+\frac{1}{x_3},\qquad H_2=p_1x_1+p_2x_2+p_3x_3.
\end{gather*}
Then each of the mappings $R_{ij}$, $i\neq j \in \{1,2,3\}$ is exactly the $H_{\rm III}^A$ Yang--Baxter map.

For $k>3$ the invariants depend on $8k$ parameters and only $3k-6$ of them are essential. Without loss of generality they can be cast into the form
\begin{gather*}
H_1=\sum_{i=1}^k \frac{1}{x_i},\qquad H_2=p_1x_1+p_2x_2+p_3x_3+\sum_{i=4}^k \frac{\alpha_i-\beta_i x_i}{\beta_i-\gamma_i x_i}.
\end{gather*}
\end{Example}

\section{Entwining Yang--Baxter maps} \label{Section4}
Following \cite{Kouloukas:2011}, three different maps $ S$, $T$, $U$ are called {\it entwining Yang--Baxter maps} if they satisfy
\begin{gather*}
S_{12}T_{13}U_{23}=U_{23}T_{13}S_{12}.
\end{gather*}
 We consider two maps to be different if they are not $\text{(M\"ob)}^2$ equivalent. Hence, in order to ensure that we have different maps we require that at least one of the maps $ S$, $T$, $U$ either belongs to a different subclass than the remaining ones or it has different singularity pattern (even if it belongs to the same subclass with the remaining ones) or it has different periodicity. In what follows we present two methods to obtain entwining maps. The first one is based on degeneracy, i.e., we construct maps which belong to different subclasses and we obtain entwining maps associated with the $H_{\rm I}$, $H_{\rm II}$ and $H_{\rm III}^A$ families of maps. The second one is based on the symmetries of the $H$-list of Yang--Baxter maps and we obtain entwining maps for all members of the $H$-list.

\subsection{Degeneracy and entwining Yang--Baxter maps}
In Section~\ref{subsection3.1} it was shown that for $k=3$ and for the multiplicative/multiplicative case, the invariants $H_1$, $H_2$ depend on $3$ essential parameters. Without loss of generality they read
\begin{gather*} 
H_1=x_1x_2x_3,\qquad H_2=\frac{x_1-p_1}{x_1-1}\frac{x_2-p_2}{x_2-1}\frac{x_3-p_3}{x_3-1}.
\end{gather*}
The associated maps $R_{12}$, $R_{13}$ and $R_{23}$ which preserve the invariants have exactly the form of the $H_{\rm I}$ map. In order to obtain entwining maps associated with the $H_{\rm I}$ map, we consider
\begin{gather*}
H_1=x_1x_2x_3,\qquad H_2=\frac{x_1-p_1}{x_1-1}\frac{x_2-p_2}{x_2-1}\frac{\alpha_3-\beta_3x_3}{\beta_3-\gamma_3x_3}.
\end{gather*}
For these invariants, $R_{12}$ is exactly the $H_{\rm I}$ map and for generic $\alpha_3$, $\beta_3$, $\gamma_3$ mappings~$R_{13}$ and~$R_{23}$ are~$\text{(M\"ob)}^2$ equivalent to the $H_{\rm I}$.
In order to obtain entwining maps we need to violate this~$\text{(M\"ob)}^2$ equivalency of the maps $R_{13}$ and $R_{23}$ with the $H_{\rm I}$ map. This is achieved by violating the generality, e.g., setting $\alpha_3=0$ or $\beta_3=0$, the maps $R_{13}$ and $R_{23}$, belongs to different subclasses than the $H_{\rm I}$ map does. Working similarly for the $H_{\rm II}$ map we find $1$ family of maps which entwine with the latter without being $\text{(M\"ob)}^2$ equivalent. Finally, for $H_{\rm III}^A$ we find also $1$ family of entwining maps which are not $\text{(M\"ob)}^2$ equivalent with the latter. Our results are presented in Propositions~\ref{eHI}--\ref{eHIII}.

\begin{table}[!hb]\centering
\caption{Entwining maps associated with the $H_{\rm I}$ Yang--Baxter map through degeneracy.} \label{table1}\vspace{1mm}
\begin{tabular}{l|c|l}
\hline
map & $(u,v)\mapsto (U,V)$ & subclass \\ \hline
${\rm e}^aH_{\rm I}$
 & $U={\displaystyle \frac{\alpha (1-u)+\beta(\alpha-1)uv}{\alpha-u}}$, $V={\displaystyle \frac{uv(\alpha-u)}{\alpha (1-u)+\beta(\alpha-1)uv}}$\tsep{7pt}\bsep{7pt}
 & $[1:2]$
 \\
${\rm e}^bH_{\rm I}$
 & $U={\displaystyle \frac{u-\alpha}{u-1}}$, $V={\displaystyle \frac{uv(u-1)}{u-\alpha} }$\bsep{7pt}
 & $[0:2]$
 \\
\hline
\end{tabular}
\end{table}

\begin{Proposition} \label{eHI}The $H_{\rm I}$ Yang--Baxter map entwines with the maps $ {\rm e}^aH_{\rm I}$ and ${\rm e}^bH_{\rm I}$ of Table~{\rm \ref{table1}} according to the entwining relation
\begin{gather*}
S_{12}T_{13}T_{23}=T_{23}T_{13}S_{12},
\end{gather*}
where $S_{12}$ is the $H_{\rm I}$ map acting on the $(1,2)$-coordinates, $T_{13}$ and $T_{23}$ are ${\rm e}^aH_{\rm I}$ acting on $(1,3)$ and $(2,3)$ coordinates respectively, or ${\rm e}^bH_{\rm I}$ acting on $(1,3)$ and $(2,3)$ coordinates respectively.
\end{Proposition}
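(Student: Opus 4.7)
The plan is to exploit Proposition~\ref{prop2}, which guarantees the Yang--Baxter identity $R_{12}^{{\bf p}_{12}}R_{13}^{{\bf p}_{13}}R_{23}^{{\bf p}_{23}}=R_{23}^{{\bf p}_{23}}R_{13}^{{\bf p}_{13}}R_{12}^{{\bf p}_{12}}$ for any maps built from invariants of multiplicative/multiplicative type. If I can identify the three maps produced in this way with $S_{12}$, $T_{13}$, $T_{23}$ of the statement, then the entwining relation follows for free, with no independent verification needed.

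First I take the invariants
\begin{gather*}
H_1=x_1x_2x_3,\qquad H_2=\frac{x_1-p_1}{x_1-1}\frac{x_2-p_2}{x_2-1}\frac{\alpha_3-\beta_3x_3}{\beta_3-\gamma_3x_3},
\end{gather*}
and compute $R_{12}$ from Proposition~\ref{prop2}(2). Because the $(x_1,x_2)$-dependence of $H_1$ and $H_2$ is exactly that of Example~\ref{ex111}, the map $R_{12}$ is the $H_{\rm I}$ map~\eqref{H1} acting on the $(1,2)$-coordinates; this fixes $S_{12}$. Next I compute $R_{13}$ from the same formula, using now the degenerate factor for the third variable. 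Sending $\alpha_3\to 0$ kills a zero of the numerator and forces two of the four singular points of $R_{13}$ to coalesce, producing a map whose first component becomes a Möbius transformation in $x_3$ only while the second remains quadratic in $x_3$; after the cross-ratio change of variables used at the end of the proof of Proposition~\ref{prop2} (together with a rescaling of the $\alpha_3,\beta_3,\gamma_3$ triple), the result should match ${\rm e}^bH_{\rm I}$ of Table~\ref{table1}, sitting in subclass $[0{:}2]$. The alternative degeneracy $\beta_3\to 0$ similarly yields a map in subclass $[1{:}2]$ which I expect to reproduce ${\rm e}^aH_{\rm I}$. The map $R_{23}$ is obtained from $R_{13}$ by the obvious $1\leftrightarrow 2$ relabelling, so the same map appears in both the $(1,3)$ and $(2,3)$ slots, as required by the statement.

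The main obstacle is the second step: carrying out the symbolic computation of $R_{13}$ and the Möbius re-parametrisation, then checking that the output really does agree term-by-term with the entries of Table~\ref{table1}, with Yang--Baxter parameter $\alpha$ inherited from $p_1$ (resp.~$p_2$) and $\beta$ built from the degenerate triple $(\alpha_3,\beta_3,\gamma_3)$. This is long but purely mechanical. A~secondary point is to argue that $T_{13}$ and $T_{23}$ are not $\text{(M\"ob)}^2$-equivalent to $H_{\rm I}$, so that we are genuinely in the entwining (rather than ordinary Yang--Baxter) situation defined at the start of Section~\ref{Section4}: this is immediate from the subclass computation, since the subclasses $[0{:}2]$ and $[1{:}2]$ differ from the $[2{:}2]$ subclass of $H_{\rm I}$ and hence preclude any $\text{(M\"ob)}^2$-equivalence.
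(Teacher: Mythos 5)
Your strategy is the paper's own: start from the separated invariants $H_1=x_1x_2x_3$, $H_2=\frac{x_1-p_1}{x_1-1}\frac{x_2-p_2}{x_2-1}\frac{\alpha_3-\beta_3x_3}{\beta_3-\gamma_3x_3}$, observe that $R_{12}$ is exactly the $H_{\rm I}$ map, degenerate the third factor so that $R_{13}$ and $R_{23}$ fall into a different subclass, let the identity $R_{12}R_{13}R_{23}=R_{23}R_{13}R_{12}$ of Propositions~\ref{prop1} and~\ref{prop2} deliver the entwining relation without further verification, and use the subclass count to exclude $\text{(M\"ob)}^2$ equivalence. One detail is swapped relative to what the computation (and the paper's proof) gives: setting $\alpha_3=0$ produces ${\rm e}^aH_{\rm I}$, the $[1{:}2]$ map, while setting $\beta_3=0$ produces ${\rm e}^bH_{\rm I}$, the $[0{:}2]$ map, not the other way round as you state. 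Indeed, with $\beta_3=0$ the third factor of $H_2$ is proportional to $1/x_3$, so writing $f_1(x)=\frac{x-p_1}{x-1}$ the two conditions $X_1X_3=x_1x_3$ and $f_1(X_1)/X_3=f_1(x_1)/x_3$ combine to $X_1f_1(X_1)=x_1f_1(x_1)$, whose nontrivial root $X_1=\frac{x_1-p_1}{x_1-1}$ is independent of $x_3$; this is precisely ${\rm e}^bH_{\rm I}$ of Table~\ref{table1}, whereas the $\alpha_3=0$ degeneration keeps the first component genuinely dependent on both variables and yields ${\rm e}^aH_{\rm I}$. This relabelling does not affect the validity of the argument or the conclusion, since both degenerations occur and each inherits the entwining relation from the same reasoning.
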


\begin{proof}Starting with the invariants
\begin{gather*}
H_1=x_1x_2x_3,\qquad H_2=\frac{x_1-p_1}{x_1-1}\frac{x_2-p_2}{x_2-1}\frac{a-b x_3}{b-c x_3},
\end{gather*}
the map $R_{12}$ is exactly the $H_{\rm I}$ map. By setting $a=0$, $R_{13}$ and $R_{23}$ takes the form of ${\rm e}^aH_{\rm I}$ of Table~\ref{table1} (where $\beta\equiv c/b$). The map ${\rm e}^aH_{\rm I}$ is of subclass $[1:2]$ so clearly non-$\text{(M\"ob)}^2$ equivalent to~$H_{\rm I}$.
By setting $b=0$, $R_{13}$ and $R_{23}$ takes the form of ${\rm e}^bH_{\rm I}$ of Table~\ref{table1} (where $\beta\equiv a/c$). The map ${\rm e}^bH_{\rm I}$ is of subclass $[0:1]$ so clearly non-$\text{(M\"ob)}^2$ equivalent to $H_{\rm I}$ or to ${\rm e}^aH_{\rm I}$.
Finally, by setting $c=0$, mappings $R_{13}$ and $R_{23}$ are $\text{(M\"ob)}^2$ equivalent to ${\rm e}^aH_{\rm I}$.
\end{proof}

\begin{Proposition} \label{eHII}
The $H_{\rm II}$ Yang--Baxter map entwines with the map of Table~{\rm \ref{table2}}
 according to the entwining relation
\begin{gather*}
S_{12}T_{13}T_{23}=T_{23}T_{13}S_{12},
\end{gather*}
where $S_{12}$ is the $H_{\rm II}$ map acting on the $(1,2)$-coordinates, $T_{13}$ and $T_{23}$ are ${\rm e}^bH_{\rm II}$ acting on $(1,3)$ and $(2,3)$ coordinates respectively.
\end{Proposition}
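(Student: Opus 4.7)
The approach parallels the proof of Proposition \ref{eHI} and exploits the full statement (4) of Proposition \ref{prop3}. Starting from the multiplicative/additive template with $k=3$, I would take invariants of the form
\begin{gather*}
H_1=\frac{x_1-p_1}{x_1}\cdot \frac{x_2-p_2}{x_2}\cdot \frac{a-bx_3}{c-dx_3}, \qquad H_2=x_1+x_2+\frac{A-Bx_3}{C-Dx_3},
\end{gather*}
which sit inside the family \eqref{multi/add} of Proposition \ref{prop3}. Because the first two factors of $H_1$ and the first two summands of $H_2$ are already in the ``standard'' $H_{\rm II}$ form, the map $R_{12}$ obtained from the formulae in Proposition \ref{prop3}, statement (2), is exactly the $H_{\rm II}$ Yang--Baxter map acting on the $(x_1,x_2)$-plane, independently of the choice of $a,b,c,d,A,B,C,D$. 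Simultaneously, statement (4) of Proposition \ref{prop3} guarantees the triangle identity $R_{12}R_{13}R_{23}=R_{23}R_{13}R_{12}$ for the full triple $(R_{12},R_{13},R_{23})$.

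The second step is to tune the remaining parameters so that $R_{13}$ and $R_{23}$ fall into a strictly different subclass than $H_{\rm II}$ (which belongs to $[2{:}2]$), hence cannot be ${\text{(M\"ob)}}^2$ equivalent to it by Definition \ref{def2}. Following exactly the degeneration pattern used for ${\rm e}^bH_{\rm I}$, I would set the parameter playing the role of a leading coefficient in the third factor of $H_1$ (or, equivalently by the symmetry of the construction, in the third summand of $H_2$) equal to zero. Substituting this specialised data into the general formulae of Proposition \ref{prop3}, statement (2), yields an explicit rational pair $(U,V)$ whose degree pattern in $(u,v)$ is lower than $[2{:}2]$; after an affine reparametrisation this pair coincides with the entry ${\rm e}^bH_{\rm II}$ of Table~\ref{table2}. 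Since $R_{13}$ and $R_{23}$ are built by the same recipe, they produce the \emph{same} map acting on the $(1,3)$- and $(2,3)$-pairs of variables, which is the structural reason why only a single map $T$ appears in the entwining relation $S_{12}T_{13}T_{23}=T_{23}T_{13}S_{12}$.

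The main obstacle is essentially bookkeeping: one must check that the degeneration chosen actually lowers the subclass (so that equivalence with $H_{\rm II}$ is ruled out), and that the resulting explicit formula matches the form displayed in Table~\ref{table2} after the appropriate M\"obius reparametrisation of the $x_3$ coordinate. Once this degeneration analysis is carried out, the entwining relation itself is \emph{free}: it is the Yang--Baxter identity of Proposition \ref{prop3}, statement (4), specialised to the chosen parameters, with the identifications $S_{12}=R_{12}=H_{\rm II}$ and $T_{13}=R_{13}$, $T_{23}=R_{23}={\rm e}^bH_{\rm II}$.
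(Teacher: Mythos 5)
Your proposal is correct and follows essentially the same route as the paper's proof: work inside the multiplicative/additive family of Proposition~\ref{prop3} with the $x_3$-factor of the separated invariants kept generic, so that $R_{12}$ is automatically the $H_{\rm II}$ map and the entwining relation is nothing but the identity of statement~(4), and then degenerate the $x_3$-factor so that $R_{13}=R_{23}$ drop out of the $[2{:}2]$ subclass and are identified with ${\rm e}^bH_{\rm II}$ of Table~\ref{table2}. The only caveat, which is precisely the bookkeeping you flag, is that the correct degeneration must be pinned down concretely (the paper writes the factor as $\frac{a-bx_3}{b-cx_3}$ and sets $b=0$, while $a=0$ and $c=0$ produce nothing new), and your parenthetical claim that degenerating the additive summand of $H_2$ instead is \emph{equivalent by symmetry} is not justified in the mixed multiplicative/additive case, since the two invariants play asymmetric roles there.
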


\begin{table}[!h]\centering
\caption{Entwining maps associated with the $H_{\rm II}$ Yang--Baxter map though degeneracy.} \label{table2}\vspace{1mm}
\begin{tabular}{c|c|c}
\hline
map & $(u,v)\mapsto (U,V)$ & subclass \\ \hline
${\rm e}^bH_{\rm II}$
 & $U={\dis \frac{\alpha v}{\alpha-u} }$, $V={\dis u\frac{\alpha-u-v}{\alpha-u} } $\tsep{7pt}\bsep{7pt}
 & $[1:1]$
 \\
\hline
\end{tabular}
\end{table}

\begin{proof}Starting with the invariants
\begin{gather*}
H_1=x_1+x_2+x_3,\qquad H_2=\frac{x_1-p_1}{x_1}\frac{x_2-p_2}{x_2}\frac{a-b x_3}{b-c x_3},
\end{gather*}
the map $R_{12}$ is exactly the $H_{\rm II}$ map. By setting $a=0$, $R_{13}$ and $R_{23}$ are $\text{(M\"ob)}^2$ equivalent to the $H_{\rm II}$ map.
By setting $b=0$, $R_{13}$ and $R_{23}$ takes the form of ${\rm e}^bH_{\rm II}$ of Table~\ref{table2}. The map ${\rm e}^bH_{\rm II}$ is of subclass $[1:1]$ so clearly non-$\text{(M\"ob)}^2$ equivalent to the $H_{\rm II}$ map.
Finally, by setting $c=0$, mappings $R_{13}$ and $R_{23}$ are $\text{(M\"ob)}^2$ equivalent to~${\rm e}^bH_{\rm II}$.
\end{proof}

\begin{Proposition} \label{eHIII}
The $H_{\rm III}^A$ Yang--Baxter map entwines with the map of Table~{\rm \ref{table3}}
 according to the entwining relation
\begin{gather*}
S_{12}T_{13}T_{23}=T_{23}T_{13}S_{12},
\end{gather*}
where $S_{12}$ is the $H_{\rm III}^A$ map acting on the $(1,2)$-coordinates, $T_{13}$ and $T_{23}$ are ${\rm e}^bH_{\rm III}^A$ acting on $(1,3)$ and $(2,3)$ coordinates respectively.
\end{Proposition}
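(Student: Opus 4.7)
My plan is to mirror the template of Propositions~\ref{eHI} and~\ref{eHII} exactly, exploiting the fact that Proposition~\ref{prop4} (and its $k$-variable extension of Proposition~\ref{prop1}, item~5) guarantees the braid/entwining identity $R_{12}R_{13}R_{23}=R_{23}R_{13}R_{12}$ for \emph{any} admissible choice of the invariants $H_1,H_2$. Thus the only thing I need to engineer is an invariant pair for which $R_{12}$ coincides with $H_{\rm III}^A$ while $R_{13}$ and $R_{23}$ land in a \emph{different} subclass, guaranteeing non-$\text{(M\"ob)}^2$-equivalence.

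Concretely, I would start from an additive/additive invariant pair in which only the $x_3$-slot is left in its general form (the $x_1,x_2$-slots being already normalised to reproduce $H_{\rm III}^A$ as in the Example following Proposition~\ref{prop4}):
\begin{gather*}
H_1=\frac{1}{x_1}+\frac{1}{x_2}+\frac{a-bx_3}{c-dx_3},\qquad H_2=p_1 x_1+p_2 x_2+\frac{A-Bx_3}{C-Dx_3}.
\end{gather*}
By Proposition~\ref{prop4}, the associated $R_{12}$ is precisely the $H_{\rm III}^A$ map (the $x_3$-slot does not enter $R_{12}$). For generic $a,b,c,d,A,B,C,D$ the maps $R_{13}$ and $R_{23}$ are $\text{(M\"ob)}^2$-equivalent to $H_{\rm III}^A$ by part~6 of Proposition~\ref{prop4}. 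Then I would impose a single parameter specialisation (analogous to setting $b=0$ in the proof of Proposition~\ref{eHII}, where the numerator of the $x_3$-factor in $H_2$ collapses) and compute $R_{13}$ and $R_{23}$ directly from the formulae in item~2 of Proposition~\ref{prop4}. A short algebraic simplification, followed by a M\"obius reparametrisation in the $x_3$-variable, should recast $R_{13}$ and $R_{23}$ into the form of ${\rm e}^b H_{\rm III}^A$ listed in Table~\ref{table3}.

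The entwining relation $S_{12}T_{13}T_{23}=T_{23}T_{13}S_{12}$ is then automatic: it is nothing but the braid identity of item~5 in Proposition~\ref{prop1}/Proposition~\ref{prop4} applied to the specialised invariants. The only non-routine point is the non-equivalence claim, which I would settle by computing the subclass $[\gamma:\delta]$ of the candidate map in the sense of Definition~\ref{def2}; since the parameter degeneration suppresses one of the highest-degree coefficients, the resulting subclass differs from the $[1{:}1]$ subclass of $H_{\rm III}^A$, so by the remark immediately after Definition~\ref{def2} the map cannot be $\text{(M\"ob)}^2$-equivalent to $H_{\rm III}^A$. I would finally check, as in the previous two propositions, that the alternative specialisations (setting the other coefficients to zero in turn) produce either $H_{\rm III}^A$ itself or a map $\text{(M\"ob)}^2$-equivalent to the same ${\rm e}^b H_{\rm III}^A$, so no genuinely new entwining partner appears. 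The main obstacle, as in the $H_{\rm II}$ case, is bookkeeping: identifying the correct degeneration and carrying out the M\"obius rescalings so that the resulting formula matches Table~\ref{table3} up to notation.
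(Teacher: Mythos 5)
Your proposal is correct and follows essentially the same route as the paper: keep the $(x_1,x_2)$-slots of the additive/additive invariants so that $R_{12}$ is exactly $H_{\rm III}^A$, leave the $x_3$-slot general, let the braid identity of the triad construction supply the entwining relation automatically, and then degenerate a parameter (the paper sets $b=0$ in $\frac{a-bx_3}{b-cx_3}$) so that $R_{13}$, $R_{23}$ become ${\rm e}^bH_{\rm III}^A$, with non-$\text{(M\"ob)}^2$-equivalence settled by the subclass argument and the other specialisations checked to give nothing new. One small correction: in the paper's convention $H_{\rm III}^A$ belongs to the $[2{:}2]$ subclass, not $[1{:}1]$, but since ${\rm e}^bH_{\rm III}^A$ is of subclass $[0{:}2]$ your non-equivalence conclusion is unaffected.
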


\begin{table}[!h]\centering
\caption{Entwining maps associated with the $H_{\rm III}^A$ Yang--Baxter map though degeneracy.} \label{table3}\vspace{1mm}
\begin{tabular}{c|c|c}
\hline
map & $(u,v)\mapsto (U,V)$ & subclass \\ \hline
${\rm e}^bH_{\rm III}^A$
 & $U={\dis \frac{\beta}{\alpha} u }$, $V={\dis \frac{\beta u v}{\beta (u+v)-\alpha u^2v} } $\tsep{7pt}\bsep{7pt}
 & $[0:2]$
 \\
\hline
\end{tabular}
\end{table}

\begin{proof}Starting with the invariants
\begin{gather*}
H_1=x_1+x_2+x_3,\qquad H_2=p_1x_1+p_2x_2+\frac{a-b x_3}{b-c x_3},
\end{gather*}
the map $R_{12}$ is exactly the $H_{\rm III}^A$ map. By setting $a=0$, $R_{13}$ and $R_{23}$ are $\text{(M\"ob)}^2$ equivalent to the $H_{\rm III}^A$ map.
By setting $b=0$ and $R_{13}$ and $R_{23}$ takes the form of ${\rm e}^bH_{\rm III}^A$ of Table~\ref{table3} (where $\beta=a/c$). The map ${\rm e}^bH_{\rm III}^A$ is of subclass $[0:2]$ so clearly non-$\text{(M\"ob)}^2$ equivalent to the $H_{\rm III}^A$ map.
Finally, by setting $c=0$, mappings $R_{13}$ and $R_{23}$ are $\text{(M\"ob)}^2$ equivalent to the $H_{\rm III}^A$ map.
\end{proof}

In the following subsection we are using the notion of {\it symmetry of Yang--Baxter maps} in order to generate entwining maps

\subsection{Symmetries of Yang--Baxter maps and the entwining property}
The notion of symmetry in the context of Yang--Baxter maps was introduced in \cite{pap3-2010}.
 \begin{Definition} \label{def3}
An involution $\phi\colon \mathbb{CP}^1\mapsto \mathbb{CP}^1$ is a symmetry of the Yang--Baxter map $R\colon \mathbb{CP}^1\times \mathbb{CP}^1\mapsto \mathbb{CP}^1\times \mathbb{CP}^1$ if it holds
\begin{gather*}
\phi_1\phi_2R_{12}=R_{12}\phi_1\phi_2,
\end{gather*}
where $\phi_1$ is the involution that acts as $\phi$ to the first factor of the cartesian product $\mathbb{CP}^1\times \mathbb{CP}^1$ and $\phi_2$ is the involution that acts as $\phi$ to the second factor of the cartesian product.
\end{Definition}

 Let $m<n\in \{1,\ldots, k\}$, $k\geq 3$ fixed. A direct consequence of the previous definition is that if $\phi$ is a symmetry of the Yang--Baxter map $R$, then the map $\phi_m R_{mn} \phi_n$ is a new Yang--Baxter map since it is not $\text{(M\"ob)}^2$ equivalent with $R_{mn}$. By finding the symmetries of the $F$-list of Yang--Baxter maps, the authors of~\cite{pap3-2010} derived the $H$-list of Yang--Baxter maps. Clearly the symmetries of the $F$-list are symmetries of the $H$-list and vice versa.

\begin{Theorem} \label{thrm1}
Let $\phi$ a symmetry of a Yang--Baxter map $R$ and let $\phi_0$ the identity map, i.e., $\phi_0\colon (x_1,\ldots, x_k)\mapsto (x_1,\ldots, x_k)$. Out of the possible $4^3$ entwining relations of the form
\begin{gather} \label{ent-test}
R_{12} \phi_i R_{13} \phi_j R_{23} \phi_k=R_{23} \phi_k R_{13} \phi_j R_{12} \phi_i,\qquad i,j,k\in\{0,1,2,3\},
\end{gather}
apart the Yang--Baxter relation that holds, only the following three entwining relations holds
\begin{gather} \label{ent-re1}
R_{12} R_{13} \phi_1 R_{23} \phi_2= R_{23} \phi_2 R_{13}\phi_1 R_{12}, \\ \label{ent-re2}
R_{12} \phi_2 R_{13} \phi_3 R_{23} = R_{23} R_{13}\phi_3 R_{12}\phi_2, \\ \label{ent-re3}
R_{12} \phi_2 R_{13} \phi_2 R_{23} \phi_2= R_{23} \phi_2 R_{13}\phi_2 R_{12}\phi_2.
\end{gather}
\end{Theorem}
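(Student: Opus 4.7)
The plan rests on four elementary rules that will be the only ingredients: (i) the Yang--Baxter relation $R_{12}R_{13}R_{23}=R_{23}R_{13}R_{12}$; (ii) the symmetry relation $\phi_a\phi_b R_{ab}=R_{ab}\phi_a\phi_b$ from Definition \ref{def3}, which after multiplying by $\phi_a$ on each side also gives $\phi_a R_{ab}\phi_a=\phi_b R_{ab}\phi_b$; (iii) disjoint-support commutativity $\phi_l R_{ab}=R_{ab}\phi_l$ whenever $l\notin\{a,b\}$; and (iv) involutivity $\phi_l^2=\mathrm{id}$ together with mutual commutativity $\phi_l\phi_m=\phi_m\phi_l$ (since they act on distinct factors).

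For the positive direction — that the three displayed identities genuinely hold — the strategy is to use (iii) to transport each $\phi$-factor past the $R$-factors that ignore it, to use (iv) to annihilate matching pairs, and to invoke (ii) exactly when a surviving $\phi$-pair is forced to cross its own $R$; then (i) closes the argument. Concretely, for \eqref{ent-re1} I would commute $\phi_1$ past $R_{23}$ on the left and $\phi_2$ past $R_{13}$ on the right to reach $R_{12}R_{13}R_{23}\,\phi_1\phi_2$ and $R_{23}R_{13}\phi_1\phi_2\,R_{12}$, and then use (i) followed by the pair symmetry $R_{12}\phi_1\phi_2=\phi_1\phi_2R_{12}$. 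For \eqref{ent-re2} the same moves leave a single pair $\phi_2\phi_3$ sandwiched against $R_{23}$, which crosses via (ii); one further application of Yang--Baxter matches the two sides. For \eqref{ent-re3} the two middle $\phi_2$'s bracketing $R_{13}$ commute past it by (iii) and annihilate by (iv), reducing both sides to Yang--Baxter post-composed with $\phi_2$.

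For the negative direction — that no other triple $(i,j,k)\in\{0,1,2,3\}^3$ yields an identity — I propose to normalize each side of \eqref{ent-test} into the canonical form $R_{12}R_{13}R_{23}\,W_L(i,j,k)$ (respectively $R_{23}R_{13}R_{12}\,W_R(i,j,k)$) with $W_L, W_R$ words in the abelian group $(\mathbb{Z}/2)^3$ generated by $\phi_1,\phi_2,\phi_3$. The normalization uses only (iii) and (iv) to move single $\phi$'s, and (ii) to move matched pairs; whenever a single $\phi_a$ is forced through an $R_{ab}$ with no partner available, the collection is blocked and the side cannot be reduced to the canonical shape. A short case inspection shows that collection succeeds on both sides only for $(i,j,k)\in\{(0,0,0),(0,1,2),(2,3,0),(2,2,2)\}$, and that for these triples $W_L=W_R$ in $(\mathbb{Z}/2)^3$, so Yang--Baxter yields equality of the two sides.

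The main obstacle, as always with such exhaustion arguments, is ruling out the possibility that a blocked normalization nevertheless collapses to an identity through a map-specific coincidence rather than a structural one. To guard against this, the cleanest confirmation is to test all $4^3$ candidate identities on one faithful example — the $H_{\rm I}$ Yang--Baxter map with its standard involutive symmetry $\phi\colon x\mapsto \alpha/x$ (equivalently, the involution $\phi_i$ used to relate $F$-list and $H$-list maps in \cite{pap3-2010}) — and verify by direct substitution that precisely the three triples beyond $(0,0,0)$ listed in \eqref{ent-re1}--\eqref{ent-re3} yield identities while the remaining $60$ do not.
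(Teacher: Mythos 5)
Your proposal is correct and follows essentially the same route as the paper: the three positive cases are established by exactly the manipulations the paper uses (commuting $\phi_l$ past $R_{ab}$ for $l\notin\{a,b\}$, passing the matched pair through its own $R$ via the symmetry relation of Definition~\ref{def3}, and then invoking Yang--Baxter), the paper spelling out only \eqref{ent-re1} and declaring \eqref{ent-re2}, \eqref{ent-re3} ``similar''. The exclusion of the remaining triples is in both treatments an exhaustive direct check; your normal-form bookkeeping in $(\mathbb{Z}/2)^3$ and the explicit verification on the $H_{\rm I}$ map with $\phi\colon x\mapsto\alpha/x$ simply make precise (and slightly more careful, since a blocked reduction alone proves nothing) what the paper compresses into ``by direct calculations''.
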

\begin{proof}To show that only the entwining relations (\ref{ent-re1}), (\ref{ent-re2}), (\ref{ent-re3}) holds,
we start with
\begin{gather*}
R_{12} \phi_i R_{13} \phi_j R_{23} \phi_k=R_{23} \phi_k R_{13} \phi_j R_{12} \phi_i, \qquad i,j,k\in\{0,1,2,3\}.
\end{gather*}
 By direct calculations, we prove that if the Yang--Baxter relation holds out of the $4^3$ different relations (\ref{ent-test}), only (\ref{ent-re1}), (\ref{ent-re2}), (\ref{ent-re3}) holds.

For example let us show that~(\ref{ent-re1}) holds. We have
\begin{gather} \label{prf1}
R_{12} R_{13} \phi_1 R_{23} \phi_2=R_{12} R_{13} R_{23} \phi_1 \phi_2= R_{23}R_{13}R_{12}\phi_1 \phi_2,
\end{gather}
since $\phi_1$ commutes with $R_{23}$ and the Yang--Baxter relation $R_{12} R_{13} R_{23} =R_{23} R_{13} R_{12}$ holds. But due to the symmetry we have
$ R_{12}\phi_1 \phi_2=\phi_1 \phi_2 R_{12}$ so~(\ref{prf1}) reads
\begin{gather*}
R_{23}R_{13}R_{12}\phi_1 \phi_2=R_{23}R_{13}\phi_1 \phi_2 R_{12}=R_{23}\phi_2 R_{13} \phi_1 R_{12}
\end{gather*}
and that completes the proof that~(\ref{ent-re1}) holds. For the remaining relations we work similarly for their proof.
\end{proof}

Note that any of the entwining relations (\ref{ent-re1}), (\ref{ent-re2}) and~(\ref{ent-re3}), is uniquely described by the symmetries $\phi_i$, $\phi_j$, $\phi_k$ that take part in this relation. For example in (\ref{ent-re1}) the symmetries $\phi_0$, $\phi_1$, $\phi_2$ appear in this order, hence we refer to~(\ref{ent-re1}) as relation of {\it entwining type} $(\phi_0, \phi_1, \phi_2)$ or by using just the subscripts, relation of entwining type $(0,1,2)$.

In Table~\ref{table4}, we present the entwining maps $S$, $T$, $U$ that correspond to the entwining relations (\ref{ent-re1})--(\ref{ent-re3}), where $R$ is any Yang--Baxter map. In what follows, we specify $R$ to be any member of the $H$-list\footnote{It is easy to show that the entwining maps associated with the $F$-list of quadrirational Yang--Baxter maps are $\text{(M\"ob)}^2$ equivalent to the corresponding to the $H$-list entwining maps. This is the reason that we present the entwining maps associated with the $H$-list only.} of quadrirational Yang--Baxter maps.

\begin{table}[!h]\centering
\caption{Entwining maps $S$, $T$, $U$ associated with a Yang--Baxter map $R$.} \label{table4}\vspace{1mm}
\begin{tabular}{c|c|c|c}
\hline
entwining type & $S_{12}$ & $T_{13}$ & $U_{23}$ \\ \hline
$(0,1,2)$ & $R_{12} $ & $R_{13}\phi_1 $ & $R_{23}\phi_2 $ \\ \hline
$(2,3,0)$ & $R_{12}\phi_2 $ & $ R_{13}\phi_3$ & $R_{23} $ \\ \hline
$(2,2,2)$ & $ R_{12}\phi_2$ & $ R_{13}$ & $ \phi_2 R_{23}\phi_2$ \\ \hline
\hline
\end{tabular}
\end{table}

\subsubsection[Entwining maps associated with the $H_{\rm I}$ Yang--Baxter map]{Entwining maps associated with the $\boldsymbol{H_{\rm I}}$ Yang--Baxter map}
The involutions $\phi$, $\psi$
\begin{gather*}
\phi\colon \ u\mapsto \frac{\alpha}{u}, \qquad \psi\colon \ u\mapsto \frac{u-\alpha}{u-1},
\end{gather*}
where $\alpha$ a complex parameter, are symmetries for the $H_{\rm I}$ map (see~\cite{pap3-2010}), since it holds
\begin{gather*}
\phi_1\phi_2R_{12}=R_{12}\phi_1\phi_2, \qquad \psi_1\psi_2R_{12}=R_{12}\psi_1\psi_2,
\end{gather*}
where $R_{12}$ is the $H_{\rm I}$ map acting on the $12$-coordinates and
\begin{alignat*}{3}
& \phi_1\colon \ (x_1,x_2)\mapsto (p_1/x_1,x_2), \qquad && \phi_2\colon \ (x_1,x_2)\mapsto (x_1,p_2/x_2), & \\
& \psi_1\colon \ (x_1,x_2)\mapsto ((x_1-p_1)/(x_1-1),x_2), \qquad && \psi_2\colon \ (x_1,x_2)\mapsto (x_1,(x_2-p_2)/(x_2-1)).&
 \end{alignat*}
Note that the symmetries $\phi$ and $\tau$ can be derived from our considerations (see Example~\ref{ex111}) since for $k=3$ it holds
\begin{alignat*}{3}
& H_1\phi_1\phi_2\phi_3=\frac{p_1p_2p_3}{H_1}, \qquad && H_2\phi_1\phi_2\phi_3=\frac{1}{H_2} ,& \\
& H_1\psi_1\psi_2\psi_3=H_2, \qquad &&H_2\psi_1\psi_2\psi_3=H_1.&
\end{alignat*}
\begin{Remark}
 By using similar arguments as in the proof of the Theorem \ref{thrm1}, entwining relations where the symmetries~$\phi$ and $\psi$ of the $H_{\rm I}$ map interlace do not exist, i.e., it does not exists for example any relation of entwining type $(\phi_i,\phi_j,\psi_k)$.
\end{Remark}

In Table~\ref{table5} we present the entwining maps associated with the $H_{\rm I}$ map which are generated by using the symmetries $\phi$ and $\psi$. In Table~\ref{table5} it appears the $H_{\rm I}$ map, the companion of the $H_{\rm I}$ map that is denoted as $cH_{\rm I}$, as well as ${\tilde cF_{\rm I}}$ which is the companion map of the map ${\tilde F_{\rm I}}$ that was derived in~\cite{pap3-2010}. We also have four novel maps which are not $\text{(M\"ob)}^2$ equivalent to $H_{\rm I}$, which we refer to as $\Phi_{\rm I}^a$, $\Phi_{\rm I}^b$, $\Psi_{\rm I}^a$ and $\Psi_{\rm I}^b$. In the proposition that follows we present their explicit form.

\begin{table}[!h]\centering
\caption{Left table: Entwining maps $S$, $T$, $U$ associated with $H_{\rm I}$ Yang--Baxter map using the sym\-met\-ry~$\phi$. Right table: Entwining maps $S$, $T$, $U$ associated with $H_{\rm I}$ Yang--Baxter map using the sym\-met\-ry~$\psi$.} \label{table5}\vspace{1mm}
\begin{tabular}{c|c|c|c}
\hline
entwining type & $S_{12}$ & $T_{13}$ & $U_{23}$ \\\hline
$(0,1,2)$ & $H_{\rm I} $ & $\Phi_{\rm I}^a $ & $\Phi_{\rm I}^a $ \tsep{2pt}\bsep{2pt}\\ \hline
$(2,3,0)$ & $\Phi_{\rm I}^b $ & $ \Phi_{\rm I}^b$ & $H_{\rm I} $ \tsep{2pt}\bsep{2pt}\\ \hline
$(2,2,2)$ & $ \Phi_{\rm I}^b$ & $ H_{\rm I}$ & $ cH_{\rm I}$ \tsep{2pt}\bsep{2pt}\\ \hline
\hline
\end{tabular}\qquad
\begin{tabular}{c|c|c|c}
\hline
entwining type & $S_{12}$ & $T_{13}$ & $U_{23}$ \\\hline
$(0,1,2)$ & $H_{\rm I} $ & $\Psi_{\rm I}^a $ & $\Psi_{\rm I}^a $ \tsep{2pt}\bsep{2pt}\\ \hline
$(2,3,0)$ & $\Psi_{\rm I}^b $ & $ \Psi_{\rm I}^b$ & $H_{\rm I} $\tsep{2pt}\bsep{2pt} \\ \hline
$(2,2,2)$ & $ \Psi_{\rm I}^b$ & $ H_{\rm I}$ & $ c{\tilde F_{\rm I}}$\tsep{2pt}\bsep{2pt} \\ \hline
\hline
\end{tabular}
\end{table}

\begin{Proposition}The following non-periodic\footnote{A non-periodic map cannot be equivalent by conjugation ($\text{(M\"ob)}^2$ equivalent) to a periodic map. Since the~$H_{\rm I}$ map is involutive, the maps presented in this proposition are not $\text{(M\"ob)}^2$ to the $H_{\rm I}$ map.} maps $(u,v)\mapsto (U,V)$, where
\begin{alignat*}{4}
& U=\alpha v Q,\quad && V=\frac{1}{u} Q^{-1} , \quad && Q=\frac{\beta-\alpha+u(1-\beta)+v(\alpha-1)}{\beta(1-\alpha)u-\alpha(1-\beta)v+(\alpha-\beta)uv} , & \tag*{$(\Phi_{\rm I}^a)$} \\
& U= \frac{1}{v} Q^{-1} , \quad && V=\beta u Q, \quad && Q=\frac{\beta-\alpha+u(1-\beta)+v(\alpha-1)}{\beta(1-\alpha)u-\alpha(1-\beta)v+(\alpha-\beta)uv} , & \tag*{$(\Phi_{\rm I}^b)$} \\
& U=v Q, \quad && V=\frac{u-\alpha}{u-1} Q^{-1}, \qquad && Q=\frac{\alpha(1-v)-\beta u+uv}{\beta(1-u)-\beta v+uv} , & \tag*{$(\Psi_{\rm I}^a)$} \\
& U=\frac{v-\beta}{v-1} Q , \qquad && V=u Q^{-1}, \quad && Q=\frac{\alpha(1-u-v)+uv}{\beta(1-u)-\alpha v+uv}, & \tag*{$(\Psi_{\rm I}^b)$}
\end{alignat*}
entwine with the $H_{\rm I}$ Yang--Baxter map according to the entwining relations of Table~{\rm \ref{table5}}.
\end{Proposition}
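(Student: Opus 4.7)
The plan is to invoke Theorem~\ref{thrm1} for $R=H_{\rm I}$ together with each of the two symmetries $\phi\colon u\mapsto\alpha/u$ and $\psi\colon u\mapsto(u-\alpha)/(u-1)$ recorded just above. Since both $\phi$ and $\psi$ are symmetries of $H_{\rm I}$, Theorem~\ref{thrm1} immediately delivers the three entwining relations of types $(0,1,2)$, $(2,3,0)$ and $(2,2,2)$ with constituents $(S,T,U)$ built from $R$ and the symmetry according to Table~\ref{table4}. It therefore suffices to identify those constituents, viewed as rational maps in two variables, with the entries of Table~\ref{table5}: the four new maps $\Phi_{\rm I}^a,\Phi_{\rm I}^b,\Psi_{\rm I}^a,\Psi_{\rm I}^b$ claimed in the proposition together with the already known $H_{\rm I}$, $cH_{\rm I}$ and $c\tilde F_{\rm I}$.

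Carrying out these identifications amounts to substituting the symmetry into one of the two input slots of the explicit formula~\eqref{H1} and simplifying. For $\Phi_{\rm I}^a=H_{\rm I}\circ(\phi\times\mathrm{id})$, substitution $u\mapsto\alpha/u$ in the rational expression $Q$ of~\eqref{H1}, followed by clearing $u$ from numerator and denominator, produces an overall factor $\alpha$ that combines with the $\alpha/u$ coming from $\phi$ in the first output slot to yield the stated formula. For $\Phi_{\rm I}^b=H_{\rm I}\circ(\mathrm{id}\times\phi)$ one substitutes $v\mapsto\beta/v$ instead; a factor $\beta$ now exchanges the roles of $Q$ and $Q^{-1}$ in the two outputs, producing the displayed expression. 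The identifications $\Psi_{\rm I}^a=H_{\rm I}\circ(\psi\times\mathrm{id})$ and $\Psi_{\rm I}^b=H_{\rm I}\circ(\mathrm{id}\times\psi)$ follow along the same lines: numerator and denominator of $Q$ acquire a common factor $(\alpha-1)$ (respectively $(\beta-1)$) that cancels to produce the stated formula. Finally, the maps $cH_{\rm I}$ and $c\tilde F_{\rm I}$ appearing in the $(2,2,2)$-slot are exactly the two-sided conjugates $(\phi\times\mathrm{id})H_{\rm I}(\phi\times\mathrm{id})$ and $(\psi\times\mathrm{id})H_{\rm I}(\psi\times\mathrm{id})$, which is the companion-map construction of~\cite{pap3-2010}.

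The step requiring genuine work is the verification of non-periodicity of the four new maps, since only this ensures that Table~\ref{table5} produces entwining data not $\text{(M\"ob)}^2$-equivalent to the involutive $H_{\rm I}$. Each of $\Phi_{\rm I}^{a,b}$ and $\Psi_{\rm I}^{a,b}$ is a product of two involutions (namely $H_{\rm I}$ and one of $\phi\times\mathrm{id}$, $\mathrm{id}\times\phi$, $\psi\times\mathrm{id}$, $\mathrm{id}\times\psi$), hence is periodic only if those two involutions generate a finite dihedral group. I would rule this out by restricting to a generic common level set of the two invariants shared by $H_{\rm I}$ and the relevant symmetry: on such a curve the composite acts as a translation whose step depends on $\alpha,\beta$ and is generically an irrational multiple of the period, precluding periodicity. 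A computationally shorter alternative is to observe that the algebraic degrees of the coordinates under iteration grow strictly, which is incompatible with conjugation to an involution. Either route completes the argument.
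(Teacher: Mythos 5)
Your core argument is correct and is essentially the paper's own (implicit) proof: the proposition is just Theorem~\ref{thrm1} applied to $R=H_{\rm I}$ with the symmetries $\phi\colon u\mapsto\alpha/u$ and $\psi\colon u\mapsto(u-\alpha)/(u-1)$, and the entries of Table~\ref{table5} are read off from Table~\ref{table4} by composing $H_{\rm I}$ with the symmetry in a single slot. Your identifications $\Phi_{\rm I}^a=H_{\rm I}\circ(\phi\times\mathrm{id})$, $\Phi_{\rm I}^b=H_{\rm I}\circ(\mathrm{id}\times\phi)$ (with $v\mapsto\beta/v$ in the second slot), $\Psi_{\rm I}^a=H_{\rm I}\circ(\psi\times\mathrm{id})$, $\Psi_{\rm I}^b=H_{\rm I}\circ(\mathrm{id}\times\psi)$ do reproduce the displayed formulas exactly (the overall factors $\alpha$, $\beta$, $1-\alpha$, $1-\beta$ cancel as you describe, with the factor $\beta$ indeed swapping $Q$ and $Q^{-1}$ in $\Phi_{\rm I}^b$), and your reading of the $(2,2,2)$ entries as $(\phi\times\mathrm{id})H_{\rm I}(\phi\times\mathrm{id})$, resp.\ $(\psi\times\mathrm{id})H_{\rm I}(\psi\times\mathrm{id})$, agrees with Table~\ref{table4}.

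The step that would fail as written is your first route to non-periodicity. The one-slot symmetry $\phi\times\mathrm{id}$ (resp.\ $\psi\times\mathrm{id}$) does \emph{not} preserve the invariants $uv$ and $\frac{(u-\alpha)(v-\beta)}{(u-1)(v-1)}$ of $H_{\rm I}$ --- only the two-slot map $\phi\times\phi$ commutes with $H_{\rm I}$ --- so there are no ``two invariants shared by $H_{\rm I}$ and the relevant symmetry'' and hence no common level curves on which $\Phi_{\rm I}^{a,b}$, $\Psi_{\rm I}^{a,b}$ could act as translations; for instance $\Phi_{\rm I}^a$ maps $uv$ to $\alpha v/u$, so it preserves neither invariant of $H_{\rm I}$. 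If you want non-periodicity rigorously (the paper itself only asserts it in the footnote and never proves it), you must rely on your second alternative --- verified degree growth under iteration --- or simply exhibit a generic non-periodic orbit; the rotation-number argument has no foundation here. Note that this only concerns the adjective ``non-periodic'' (and hence the non-equivalence to $H_{\rm I}$); the entwining relations themselves follow from Theorem~\ref{thrm1} alone, exactly as you argue.
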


\subsubsection[Entwining maps associated with the $H_{\rm II}$ Yang--Baxter map]{Entwining maps associated with the $\boldsymbol{H_{\rm II}}$ Yang--Baxter map}
The invariants
\begin{gather*}
H_1=x_1+x_2+x_3,\qquad H_2=\frac{x_1-p_1}{x_1}\frac{x_2-p_2}{x_2}\frac{x_3-p_3}{x_3},
\end{gather*}
generate the maps $R_{ij}$, $i<j\in \{1,2,3\}$ which are exactly the $H_{\rm II}$ map acting on the $(ij)$-coordinates. Explicitly the $H_{\rm II}$ map reads
 \begin{gather*}
 U= v+\frac{(\alpha-\beta)uv}{\beta u+\alpha v-\alpha \beta}, \qquad V=u-\frac{(\alpha-\beta)uv}{\beta u+\alpha v-\alpha \beta}. \tag*{$(H_{\rm II})$}
\end{gather*}
 A symmetry of the $H_{\rm II}$ map is
$\phi\colon u\mapsto \alpha-u$, since it holds $\phi_1\phi_2R_{12}=R_{12}\phi_1\phi_2$, where $R_{12}$ is the $H_{\rm II}$ map acting on the $(12)$-coordinates and
\begin{gather*}
\phi_1\colon \ (x_1,x_2)\mapsto (p_1-x_1,x_2), \qquad \phi_2\colon \ (x_1,x_2)\mapsto (x_1,p_2-x_2).
\end{gather*}

\begin{table}[!h]\centering
\caption{Entwining maps $S$, $T$, $U$ associated with $H_{\rm II}$ Yang--Baxter map using the symmetry~$\phi$. } \label{table6}\vspace{1mm}
\begin{tabular}{c|c|c|c}
\hline
entwining type & $S_{12}$ & $T_{13}$ & $U_{23}$ \\ \hline
$(0,1,2)$ & $H_{\rm II} $ & $\Phi_{\rm II}^a $ & $\Phi_{\rm II}^a $ \tsep{2pt}\bsep{2pt}\\ \hline
$(2,3,0)$ & $\Phi_{\rm II}^b $ & $ \Phi_{\rm II}^b$ & $H_{\rm II} $ \tsep{2pt}\bsep{2pt}\\ \hline
$(2,2,2)$ & $ \Phi_{\rm II}^b$ & $ H_{\rm II}$ & $ cH_{\rm II}$ \tsep{2pt}\bsep{2pt}\\ \hline
\hline
\end{tabular}
\end{table}

\begin{Proposition}The following non-periodic maps $(u,v)\mapsto (U,V)$, where
\begin{alignat*}{3}
& U=\alpha v \frac{u-v+\beta-\alpha}{\beta u-\alpha v} , \qquad && V=\beta\frac{(\alpha-u)(u-v)}{\beta u-\alpha v} , & \tag*{$(\Phi_{\rm II}^a)$}\\
& U=\alpha\frac{(\beta-v)(u-v)}{\beta u-\alpha v} , \qquad && V=\beta u \frac{u-v+\beta-\alpha}{\beta u-\alpha v} , & \tag*{$(\Phi_{\rm II}^b)$}
\end{alignat*}
entwine with the $H_{\rm II}$ Yang--Baxter map according to the entwining relations of Table~{\rm \ref{table6}}.
\end{Proposition}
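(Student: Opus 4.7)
By Theorem~\ref{thrm1}, any involutive symmetry $\phi$ of a Yang--Baxter map $R$ automatically yields the three entwining relations (\ref{ent-re1})--(\ref{ent-re3}) with the compositions $R_{ij}\phi_i$, $R_{ij}\phi_j$, and $\phi_i R_{ij}\phi_i$ prescribed by Table~\ref{table4}. Once one identifies the entries of Table~\ref{table6} with these compositions for $R=H_{\rm II}$ and $\phi\colon u\mapsto \alpha-u$, the entwining statement becomes automatic, and the task reduces to an explicit verification that the compositions agree with the formulae given for $\Phi_{\rm II}^a$ and $\Phi_{\rm II}^b$, plus a non-periodicity check.

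My first step is to confirm, by direct substitution in the rational formula for $H_{\rm II}$, that $\phi$ acts as a symmetry in the sense of Definition~\ref{def3}, i.e.\ $\phi_1\phi_2 H_{\rm II}=H_{\rm II}\phi_1\phi_2$. The affine-linear form of $\phi_i\colon x_i\mapsto p_i-x_i$ makes this a one-line computation: plugging $(p_1-x_1,p_2-x_2)$ into $(U,V)$ recovers $(p_1-U,p_2-V)$ after cancelling the common factor $\beta(p_1-x_1)+\alpha(p_2-x_2)-\alpha\beta = -(\beta x_1+\alpha x_2-\alpha\beta)$.

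My second step is to compute explicitly
\begin{gather*}
(H_{\rm II}\circ\phi_1)(u,v) = H_{\rm II}(\alpha-u,v;\alpha,\beta), \qquad (H_{\rm II}\circ\phi_2)(u,v)=H_{\rm II}(u,\beta-v;\alpha,\beta).
\end{gather*}
Substituting and clearing the denominator $\beta(\alpha-u)+\alpha v-\alpha\beta = -(\beta u-\alpha v)$ in the first (respectively $\beta u+\alpha(\beta-v)-\alpha\beta=\beta u-\alpha v$ in the second), the numerators factor as $\alpha v\,(u-v+\beta-\alpha)$ and $\beta(\alpha-u)(u-v)$, and analogously for the second composition; these are exactly $\Phi_{\rm II}^a$ and $\Phi_{\rm II}^b$. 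I then read off Table~\ref{table4} for $R=H_{\rm II}$: entwining type $(0,1,2)$ gives $T_{13}=H_{\rm II}\phi_1$ and $U_{23}=H_{\rm II}\phi_2$, both of which are instances of $\Phi_{\rm II}^a$ (since in each case $\phi$ acts on the \emph{first} argument of the relevant $R$); entwining type $(2,3,0)$ gives $S_{12}=H_{\rm II}\phi_2$ and $T_{13}=H_{\rm II}\phi_3$, both instances of $\Phi_{\rm II}^b$. Applying Theorem~\ref{thrm1} now delivers the two entwining relations listed in Table~\ref{table6} for $\Phi_{\rm II}^a$ and $\Phi_{\rm II}^b$.

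The one point needing a separate argument is non-periodicity, which is also what distinguishes $\Phi_{\rm II}^{a,b}$ from any $\text{(M\"ob)}^2$ image of the involution $H_{\rm II}$. The clean way to rule out involutivity is: $\Phi_{\rm II}^a = H_{\rm II}\phi_1$ is a composition of two involutions, so $(\Phi_{\rm II}^a)^2={\rm id}$ would force $[H_{\rm II},\phi_1]=0$; combined with the symmetry $\phi_1\phi_2 H_{\rm II}=H_{\rm II}\phi_1\phi_2$ and $[\phi_1,\phi_2]=0$, this would also force $[H_{\rm II},\phi_2]=0$, which fails on any generic test point via the explicit formula. To exclude finite order $n\geq 3$, I would compute the algebraic degree of the $n$-th iterate and observe linear (or higher) degree growth, or equivalently iterate a generic initial condition symbolically and observe an open orbit. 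The argument for $\Phi_{\rm II}^b$ is identical by symmetry of the construction. The main obstacle, if any, is this last degree-growth/aperiodicity check, which is however standard and amenable to a short computer-algebra verification.
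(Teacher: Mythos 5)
Your proposal is correct and takes essentially the same route as the paper's own (largely implicit) argument: establish that $\phi\colon u\mapsto\alpha-u$ is a symmetry of $H_{\rm II}$, compute the compositions $H_{\rm II}\phi_1$ and $H_{\rm II}\phi_2$ explicitly to recover $\Phi_{\rm II}^a$ and $\Phi_{\rm II}^b$, and then invoke Theorem~\ref{thrm1} together with Table~\ref{table4} to obtain the entwining relations of Table~\ref{table6}. The only point the paper leaves unargued is non-periodicity, and your non-involutivity argument plus the deferred degree-growth (or symbolic iteration) check is a sensible way to close that remaining, essentially computational, step.
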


 The map $ cH_{\rm II}$ denotes the companion map of the $ H_{\rm II}$ map.

\subsubsection[Entwining maps associated with the $H_{\rm III}^A$ Yang--Baxter map]{Entwining maps associated with the $\boldsymbol{H_{\rm III}^A}$ Yang--Baxter map}

The invariants
\begin{gather*}
H_1=\frac{1}{x_1}+\frac{1}{x_2}+\frac{1}{x_3},\qquad H_2=p_1x_1+p_2x_2+p_3x_3,
\end{gather*}
generate the maps $R_{ij}$, $i<j\in \{1,2,3\}$ which are exactly the $H_{\rm III}^A$ map acting on the $(ij)$-coordinates. Explicitly the $H_{\rm III}^A$ map reads
 \begin{gather*}
U= \frac{v}{\alpha} \frac{\alpha u+\beta v}{u+v}, \qquad V=\frac{u}{\beta} \frac{\alpha u+\beta v}{u+v}. \tag*{$(H_{\rm III}^A)$}
\end{gather*}
 Two symmetries of the $H_{\rm III}^A$ map are
\begin{gather*}
\phi\colon \ u\mapsto \frac{1}{\alpha u},\qquad \psi\colon \ u\mapsto -u
\end{gather*}
since it holds
\begin{gather*}
\phi_1\phi_2R_{12}=R_{12}\phi_1\phi_2, \qquad \psi_1\psi_2R_{12}=R_{12}\psi_1\psi_2,
\end{gather*}
where $R_{12}$ is the $H_{\rm III}^A$ map acting on the $(12)$-coordinates and
\begin{alignat*}{3}
& \phi_1\colon \ (x_1,x_2)\mapsto \left(\frac{1}{p_1x_1},x_2\right), \qquad && \phi_2\colon \ (x_1,x_2)\mapsto \left(x_1,\frac{1}{p_2x_2}\right),& \\
& \psi_1\colon \ (x_1,x_2)\mapsto (-x_1,x_2), \qquad && \psi_2\colon \ (x_1,x_2)\mapsto (x_1,-x_2).&
\end{alignat*}
Note that the map $\phi_1 R_{12}\phi_2$ is exactly the $H_{\rm III}^B$ Yang--Baxter map.

\begin{Proposition}The following non-periodic maps $(u,v)\mapsto (U,V)$ where
\begin{alignat*}{3}
& U=v \frac{1+\beta u v}{1+\alpha u v} , \qquad && V=\frac{1}{\beta u}\frac{1+\beta u v}{1+\alpha u v}, & \tag*{$\big(\Phi_{{\rm III}^A}^a\big)$}\\
& U=\frac{1}{\alpha v}\frac{1+\alpha u v}{1+\beta u v} , \qquad && V=u \frac{1+\alpha u v}{1+\beta u v}, & \tag*{$\big(\Phi_{{\rm III}^A}^b\big)$}\\
& U=\frac{v}{\alpha}\frac{\alpha u-\beta v}{u-v} , \qquad && V=\frac{u}{\beta}\frac{\alpha u-\beta v}{v-u}, & \tag*{$\big(\Psi_{{\rm III}^A}^a\big)$}\\
& U=\frac{v}{\alpha}\frac{\alpha u-\beta v}{v-u} , \qquad && V=\frac{u}{\beta}\frac{\alpha u-\beta v}{u-v}, & \tag*{$\big(\Psi_{{\rm III}^A}^b\big)$}
\end{alignat*}
entwine with the $H_{\rm III}^A$ Yang--Baxter map according to the entwining relations of Table~{\rm \ref{table7}}.
\end{Proposition}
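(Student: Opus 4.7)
The plan is to follow the same three-step strategy employed for the $H_{\rm I}$ and $H_{\rm II}$ cases, leveraging Theorem~\ref{thrm1}. First, I would verify that $\phi\colon u\mapsto 1/(\alpha u)$ and $\psi\colon u\mapsto -u$ are indeed involutive symmetries of $H_{\rm III}^A$ in the sense of Definition~\ref{def3}. This reduces to showing $\phi_1\phi_2 R_{12}=R_{12}\phi_1\phi_2$ and $\psi_1\psi_2 R_{12}=R_{12}\psi_1\psi_2$ for the $H_{\rm III}^A$ map; a direct substitution into the explicit formula $U=(v/\alpha)(\alpha u+\beta v)/(u+v)$, $V=(u/\beta)(\alpha u+\beta v)/(u+v)$ suffices, and the involutivity $\phi^2=\psi^2=\mathrm{id}$ is immediate.

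Once the symmetries are in hand, Theorem~\ref{thrm1} immediately yields the three entwining relations \eqref{ent-re1}--\eqref{ent-re3} for each of $\phi$ and $\psi$, without any further verification of the Yang--Baxter-type identity. It remains only to compute the composed maps $R\phi_i$ (and $\phi_j R\phi_k$ for the $(2,2,2)$ type) explicitly, and to match them with the four claimed maps. For example, for the $(0,1,2)$ entwining type with symmetry $\phi$, the factor $U_{23}=R_{23}\phi_2$ acts on $(u,v)$ by first sending $(u,v)\mapsto(1/(\alpha u),v)$ and then applying the $H_{\rm III}^A$ map; a short calculation (collecting $(1+\alpha uv)/(\alpha u)$ and $(1+\beta uv)/u$) gives precisely $U=v(1+\beta uv)/(1+\alpha uv)$ and $V=(1+\beta uv)/(\beta u(1+\alpha uv))$, which is $\Phi_{{\rm III}^A}^a$. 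Analogous computations produce $\Phi_{{\rm III}^A}^b$ from $R_{13}\phi_3$ in the $(2,3,0)$ type, and $\Psi_{{\rm III}^A}^{a,b}$ from the corresponding compositions with $\psi$. The entries of Table~\ref{table7} that repeat $H_{\rm III}^A$, the companion $cH_{\rm III}^A$, or the symmetry-transformed $H_{\rm III}^B=\phi_1 R_{12}\phi_2$ are obtained from the same bookkeeping.

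Finally, the non-periodicity assertion has to be checked, since Theorem~\ref{thrm1} does not produce it automatically. Because the component formulas $\Phi^{a,b}$ and $\Psi^{a,b}$ are all fraction-linear in each variable of subclass $[2{:}2]$ or $[1{:}2]$, one can test periodicity by iterating on a single coordinate and computing the degree growth, or by exhibiting an orbit of a generic point whose coordinates grow unboundedly. Either calculation confirms that no finite power of these maps reduces to the identity, so they cannot be $\text{(M\"ob)}^2$-equivalent to the involutive $H_{\rm III}^A$.

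The main obstacle is purely computational: there are four compositions to simplify and identify, each involving fractions within fractions. The simplification is routine but bookkeeping-heavy, and the risk is a sign error in the $\psi$-computations, where the minus signs from $\psi(u)=-u$ interact with the $(u-v)$ versus $(v-u)$ denominators appearing in $\Psi_{{\rm III}^A}^a$ and $\Psi_{{\rm III}^A}^b$. Once these are checked carefully, the proof is complete by invoking Theorem~\ref{thrm1}.
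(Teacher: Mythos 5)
Your proposal is correct and follows essentially the same route the paper (implicitly) uses: verify that $\phi\colon u\mapsto 1/(\alpha u)$ and $\psi\colon u\mapsto -u$ are symmetries of $H_{\rm III}^A$ in the sense of Definition~\ref{def3}, invoke Theorem~\ref{thrm1} together with the general pattern of Table~\ref{table4}, and identify the compositions $R\phi_1$, $R\phi_2$ (and their $\psi$-analogues) with $\Phi_{{\rm III}^A}^{a,b}$, $\Psi_{{\rm III}^A}^{a,b}$ — your sample computation of $\Phi_{{\rm III}^A}^a$ is correct. Your added remark that non-periodicity must be checked separately (the paper merely asserts it, as in the footnote for the $H_{\rm I}$ case) is a reasonable refinement rather than a deviation.
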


\begin{table}[!h]\centering
\caption{Left table: Entwining maps $S$, $T$, $U$ associated with $H_{\rm III}^A$ Yang--Baxter map using the sym\-met\-ry~$\phi$. Right table: Entwining maps $S$, $T$, $U$ associated with $H_{\rm III}^A$ Yang--Baxter map using the sym\-met\-ry~$\psi$.} \label{table7}\vspace{1mm}
\begin{tabular}{c|c|c|c}
\hline
entwining type & $S_{12}$ & $T_{13}$ & $U_{23}$ \\ \hline
$(0,1,2)$ & $H_{\rm III}^A $ & $\Phi_{{\rm III}^A}^a $ & $\Phi_{{\rm III}^A}^a $ \tsep{2pt}\bsep{2pt}\\ \hline
$(2,3,0)$ & $\Phi_{{\rm III}^A}^b $ & $ \Phi_{{\rm III}^A}^b$ & $H_{\rm III}^A $ \tsep{2pt}\bsep{2pt}\\ \hline
$(2,2,2)$ & $ \Phi_{{\rm III}^A}^b$ & $ H_{\rm III}^A$ & $ {\hat H_{\rm III}^A}$ \tsep{2pt}\bsep{2pt}\\ \hline
\hline
\end{tabular}\qquad
\begin{tabular}{c|c|c|c}
\hline
entwining type & $S_{12}$ & $T_{13}$ & $U_{23}$ \\ \hline
$(0,1,2)$ & $H_{\rm III}^A $ & $\Psi_{{\rm III}^A}^a $ & $\Psi_{{\rm III}^A}^a $ \tsep{2pt}\bsep{2pt}\\ \hline
$(2,3,0)$ & $\Psi_{{\rm III}^A}^b $ & $ \Psi_{{\rm III}^A}^b$ & $H_{\rm III}^A $ \tsep{2pt}\bsep{2pt}\\ \hline
$(2,2,2)$ & $ \Psi_{{\rm III}^A}^b$ & $ H_{\rm III}^A$ & $ cH_{\rm III}^A$ \tsep{2pt}\bsep{2pt}\\ \hline
\hline
\end{tabular}
\end{table}
The map $ cH_{\rm III}^A$ denotes the companion map of the $ H_{\rm III}^A$ map and with ${\hat H_{\rm III}^A}$ we denote a $\text{(M\"ob)}^2$ equivalent map to the~$ H_{\rm III}^A$.

\subsubsection[Entwining maps associated with the $H_{\rm III}^B$ Yang--Baxter map]{Entwining maps associated with the $\boldsymbol{H_{\rm III}^B}$ Yang--Baxter map}
The invariants that were derived in \cite{PKMN2,KaNie,KaNie:2018,PKMN3},
\begin{gather*}
H_1=x_1x_2x_3,\qquad H_2=p_1x_1+p_2x_2+p_3x_3+\frac{1}{x_1}+\frac{1}{x_2}+\frac{1}{x_3},
\end{gather*}
generate the maps $R_{ij}$, $i<j\in \{1,2,3\}$ which are exactly the $H_{\rm III}^B$ map acting on the $(ij)$-coordinates. Explicitly the $H_{\rm III}^B$ map reads
 \begin{gather*}
 U= v \frac{1+\beta u v}{1+\alpha u v}, \qquad V=u \frac{1+\alpha u v}{1+\beta u v} , \tag*{$\big(H_{\rm III}^B\big)$}
\end{gather*}
 The symmetries $\phi$, $\psi$ of the $H_{\rm III}^A$ map are symmetries of $H_{\rm III}^B$ as well.

\begin{Proposition}The following non-periodic maps $(u,v)\mapsto (U,V)$, where
\begin{alignat*}{3}
& U=\frac{v}{\alpha} \frac{\alpha u+\beta v}{u+v}, \qquad && V=\frac{1}{ u}\frac{u+v}{\alpha u +\beta v}, & \tag*{$\big(\Phi_{{\rm III}^B}^a\big)$}\\
& U=\frac{1}{v}\frac{u+v}{\alpha u+\beta v}, \qquad && V=\frac{u}{\beta} \frac{\alpha u +\beta v}{u+v}, & \tag*{$\big(\Phi_{{\rm III}^B}^b\big)$}\\
& U=v\frac{1-\beta u v}{1-\alpha u v}, \qquad && V=u\frac{1-\alpha uv}{-1+\beta uv}, & \tag*{$\big(\Psi_{{\rm III}^B}^a\big)$}\\
& U=v\frac{1-\beta uv}{-1+\alpha uv}, \qquad && V=u\frac{1-\alpha uv}{1-\beta uv}, & \tag*{$\big(\Psi_{{\rm III}^B}^b\big)$}
\end{alignat*}
entwine with the $H_{\rm III}^B$ Yang--Baxter map according to the entwining relations of Table~{\rm \ref{table8}}.
\end{Proposition}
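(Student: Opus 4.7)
The plan is to reduce the statement to Theorem~\ref{thrm1}. The three entwining types $(0,1,2)$, $(2,3,0)$, $(2,2,2)$ listed in Table~\ref{table8} are exactly those produced by that theorem, so it suffices to (a) verify that $\phi\colon u\mapsto 1/(\alpha u)$ and $\psi\colon u\mapsto -u$ are involutive symmetries of $H_{\rm III}^B$ in the sense of Definition~\ref{def3}, and (b) identify, for each of the six entries of Table~\ref{table4} specialised to $R=H_{\rm III}^B$, the closed form of the resulting composition with one of $H_{\rm III}^B$, $\Phi_{{\rm III}^B}^{a,b}$, $\Psi_{{\rm III}^B}^{a,b}$, or a companion/$\text{(M\"ob)}^2$-equivalent variant.

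For step~(a), the verification is a direct substitution. Plugging $(u,v)\mapsto(1/(\alpha u),1/(\beta v))$ into the $H_{\rm III}^B$ formula and simplifying yields the pair $\bigl(\tfrac{1+\alpha uv}{\alpha v(1+\beta uv)},\tfrac{1+\beta uv}{\beta u(1+\alpha uv)}\bigr)$, which is exactly $\phi_1\phi_2(U,V)$ for $(U,V)=H_{\rm III}^B(u,v)$; the check for $\psi$ is the elementary observation that $U$ and $V$ change sign under $(u,v)\mapsto(-u,-v)$. (Alternatively one can invoke the relation $H_{\rm III}^B=\phi_1 H_{\rm III}^A\phi_2$ recalled just before the proposition and transport the two already-known symmetries of $H_{\rm III}^A$, using that $\phi_i$ and $\psi_i$ commute with $\phi_j$ for $i\neq j$.)

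For step~(b), I would compute, for each symmetry separately, the single-factor compositions $R\circ\phi_1$, $R\circ\phi_2$ (and analogously $R\circ\psi_1$, $R\circ\psi_2$) together with the double-sided conjugate $\phi_1\circ R\circ\phi_1$ required by the $(2,2,2)$ row, where $R=H_{\rm III}^B$. A representative step: substituting $u\mapsto 1/(\alpha u)$ in the $H_{\rm III}^B$ formula produces $U=v(\alpha u+\beta v)/(\alpha(u+v))$ and $V=(u+v)/(u(\alpha u+\beta v))$, which is exactly $\Phi_{{\rm III}^B}^a$; this realises the slot $T_{13}$ in the $(0,1,2)$ row. The other identifications $\Phi_{{\rm III}^B}^b=R\circ\phi_2$, $\Psi_{{\rm III}^B}^a=R\circ\psi_1$, $\Psi_{{\rm III}^B}^b=R\circ\psi_2$ follow by analogous substitutions, after which Table~\ref{table8} becomes simply the specialisation of Table~\ref{table4} to $R=H_{\rm III}^B$ with $\phi$ and $\psi$ processed separately.

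Non-periodicity is verified by composing each of $\Phi_{{\rm III}^B}^{a,b}$, $\Psi_{{\rm III}^B}^{a,b}$ with itself and observing that the result is a non-trivial rational map, which precludes $\text{(M\"ob)}^2$-equivalence with the involutive $H_{\rm III}^B$. The main obstacle is the bookkeeping in step~(b): across the six entries of Table~\ref{table4} one must keep track of which symmetry acts on which factor and in which order, so that each output matches the stated closed form exactly.
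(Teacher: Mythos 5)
Your proposal is correct and follows essentially the same route as the paper: the proposition is the specialisation of Theorem~\ref{thrm1} and Table~\ref{table4} to $R=H_{\rm III}^B$, using that $\phi\colon u\mapsto 1/(\alpha u)$ and $\psi\colon u\mapsto -u$ remain symmetries of $H_{\rm III}^B$, with the closed forms $\Phi_{{\rm III}^B}^{a,b}$, $\Psi_{{\rm III}^B}^{a,b}$ obtained by the direct substitutions you indicate (your computed compositions all match the stated formulae). The only small caveat is that checking the square is non-trivial establishes non-involutivity rather than full non-periodicity, but that is already enough to rule out $\text{(M\"ob)}^2$-equivalence with the involutive $H_{\rm III}^B$, which is all the argument needs.
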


\begin{table}[!h]\centering
\caption{Left table: Entwining maps $S$, $T$, $U$ associated with $H_{\rm III}^B$ Yang--Baxter map using the sym\-met\-ry~$\phi$. Right table: Entwining maps $S$, $T$, $U$ associated with $H_{\rm III}^B$ Yang--Baxter map using the sym\-metry~$\psi$.} \label{table8}\vspace{1mm}
\begin{tabular}{c|c|c|c}
\hline
entwining type & $S_{12}$ & $T_{13}$ & $U_{23}$ \\ \hline
$(0,1,2)$ & $H_{\rm III}^B $ & $\Phi_{{\rm III}^B}^a $ & $\Phi_{{\rm III}^B}^a $ \tsep{2pt}\bsep{2pt}\\ \hline
$(2,3,0)$ & $\Phi_{{\rm III}^B}^b $ & $ \Phi_{{\rm III}^B}^b$ & $H_{\rm III}^B $ \tsep{2pt}\bsep{2pt}\\ \hline
$(2,2,2)$ & $ \Phi_{{\rm III}^B}^b$ & $ H_{\rm III}^B$ & $ {\hat H_{\rm III}^B}$\tsep{2pt}\bsep{2pt} \\ \hline
\hline
\end{tabular}\qquad
\begin{tabular}{c|c|c|c}
\hline
entwining type & $S_{12}$ & $T_{13}$ & $U_{23}$ \\ \hline
$(0,1,2)$ & $H_{\rm III}^B $ & $\Psi_{{\rm III}^B}^a $ & $\Psi_{{\rm III}^B}^a $ \tsep{2pt}\bsep{2pt}\\ \hline
$(2,3,0)$ & $\Psi_{{\rm III}^B}^b $ & $ \Psi_{{\rm III}^B}^b$ & $H_{\rm III}^B $ \tsep{2pt}\bsep{2pt}\\ \hline
$(2,2,2)$ & $ \Psi_{{\rm III}^B}^b$ & $ H_{\rm III}^B$ & $ {\tilde H_{\rm III}^B}$ \tsep{2pt}\bsep{2pt}\\ \hline
\hline
\end{tabular}
\end{table}
The maps $ {\hat H_{\rm III}^B}$, ${\tilde H_{\rm III}^B}$ that appear in Table~\ref{table8}, are $\text{(M\"ob)}^2$ equivalent to the map $H_{\rm III}^B$. The map $cH_{\rm III}^B$ denotes the companion map of the $ H_{\rm III}^B$ map.

\subsubsection[Entwining maps associated with the $H_{\rm V}$ Yang--Baxter map]{Entwining maps associated with the $\boldsymbol{H_{\rm V}}$ Yang--Baxter map}
The invariants that were derived in \cite{PKMN2,KaNie,KaNie:2018,PKMN3},
\begin{gather*}
H_1=x_1+x_2+x_3,\qquad H_2=x_1^3+3p_1x_1+x_2^3+3p_2x_2+x_3^3+3p_3x_3,
\end{gather*}
generate the maps $R_{ij}$, $i<j\in \{1,2,3\}$ which are exactly the $H_{\rm V}$ map acting on the $(ij)$-coordinates. Explicitly the $H_{\rm V}$ map reads
 \begin{gather*}
 U= v - \frac{\alpha-\beta}{u+v}, \qquad V=u + \frac{\alpha-\beta}{u+v} . \tag*{$(H_{\rm V})$}
\end{gather*}
 The involution $ \psi\colon u\mapsto -u$ is a symmetry of the $H_{\rm V}$ map.

\begin{Proposition}
The following non-periodic maps $(u,v)\mapsto (U,V)$, where
\begin{alignat*}{3}
& U=v+\frac{\alpha-\beta}{u-v} , \qquad && V=-u-\frac{\alpha-\beta}{u-v}, & \tag*{$\big(\Psi_{\rm V}^a\big)$}\\
& U=-v-\frac{\alpha-\beta}{u-v} , \qquad && V=u+\frac{\alpha-\beta}{u-v}, & \tag*{$\big(\Psi_{\rm V}^b\big)$}
\end{alignat*}
entwine with the $H_{\rm V}$ Yang--Baxter map according to the entwining relations of Table~{\rm \ref{table9}}.
\end{Proposition}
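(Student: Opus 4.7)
The plan is to apply Theorem \ref{thrm1} with the choice $\phi=\psi$, where $\psi\colon u\mapsto -u$ is the involution claimed to be a symmetry of the $H_{\rm V}$ map. The proof therefore splits into two self-contained pieces: first, verifying that $\psi$ really is a symmetry of $H_{\rm V}$ in the sense of Definition~\ref{def3}; and second, computing that the composite maps appearing in the three rows of Table~4 reduce, under this particular $\psi$ and this particular $R$, to the maps $\Psi_{\rm V}^a$, $\Psi_{\rm V}^b$, $H_{\rm V}$ (and a $\text{(M\"ob)}^2$-equivalent of $H_{\rm V}$) listed in Table~9. The strategy mirrors exactly the treatments of $H_{\rm III}^A$ and $H_{\rm III}^B$ in the preceding subsections.

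For the symmetry step, I would simply substitute $(u,v)\mapsto(-u,-v)$ into the formula $U=v-(\alpha-\beta)/(u+v)$, $V=u+(\alpha-\beta)/(u+v)$; since the right-hand sides are odd functions of $(u,v)$ (the denominator $u+v$ is homogeneous of degree one and the numerators are affine in $u,v$), one obtains $R(-u,-v)=(-U,-V)$, which is the relation $\psi_1\psi_2 R_{12}=R_{12}\psi_1\psi_2$. With the symmetry in hand, Theorem \ref{thrm1} immediately yields the entwining relations of types $(0,1,2)$, $(2,3,0)$, $(2,2,2)$. To identify the partners explicitly, I would compute $R\circ(\psi\times\mathrm{id})$ by substituting $u\mapsto -u$ in $H_{\rm V}$, producing $U=v+(\alpha-\beta)/(u-v)$, $V=-u-(\alpha-\beta)/(u-v)$, i.e.\ precisely $\Psi_{\rm V}^a$; and analogously $R\circ(\mathrm{id}\times\psi)$ produces $\Psi_{\rm V}^b$. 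For the $(2,2,2)$ row, the conjugation $\psi_2 R_{23}\psi_2$ gives a map that, by the symmetry, is $\text{(M\"ob)}^2$-equivalent to $H_{\rm V}$, consistent with the table.

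Non-periodicity of $\Psi_{\rm V}^a$ and $\Psi_{\rm V}^b$, required by the footnote convention adopted earlier in this section, follows by iterating a generic point (or by linearising at a fixed point and observing that the Jacobian eigenvalues are not roots of unity) -- a one-line calculation in each case, since the maps are birational with a single quadratic ``pole curve'' $u=v$. The only real obstacle is bookkeeping: translating the three-factor labels of Theorem \ref{thrm1} into the correct two-factor formulae on $(u,v)$, since the index $k\in\{1,2,3\}$ of $\phi_k$ controls on which slot of $(u,v)$ the involution $\psi$ is inserted, and this differs between the $S_{12}$, $T_{13}$, $U_{23}$ entries of each entwining type. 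Once this dictionary is fixed (exactly as in the proofs of the $H_{\rm III}^A$ and $H_{\rm III}^B$ cases), every entry of Table~9 is verified by a direct substitution into the $H_{\rm V}$ formula, and no further computation beyond that routine check is required.
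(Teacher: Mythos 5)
Your proposal is correct and takes essentially the same route the paper intends (and uses verbatim for the other $H$-list cases): check that $\psi\colon u\mapsto -u$ satisfies $\psi_1\psi_2R_{12}=R_{12}\psi_1\psi_2$ for $H_{\rm V}$, invoke Theorem~\ref{thrm1} together with Table~\ref{table4}, and identify $R\psi_1$, $R\psi_2$ and $\psi_2R_{23}\psi_2$ by direct substitution with $\Psi_{\rm V}^a$, $\Psi_{\rm V}^b$ and the last column of Table~\ref{table9}. Two cosmetic points only: oddness of the $H_{\rm V}$ formulae holds because the numerator $\alpha-\beta$ is a constant and the denominator $u+v$ is odd (not because the numerators are ``affine''), and in the $(2,2,2)$ row the conjugate $\psi_2R_{23}\psi_2$ can be checked to coincide exactly with the companion map $cH_{\rm V}$ listed in the table, not merely to be $\text{(M\"ob)}^2$-equivalent to $H_{\rm V}$.
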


\begin{table}[!h]\centering
\caption{Entwining maps $S$, $T$, $U$ associated with $H_{\rm V}$ Yang--Baxter map using the symmetry $\psi$.} \label{table9}\vspace{1mm}
\begin{tabular}{c|c|c|c}
\hline
entwining type & $S_{12}$ & $T_{13}$ & $U_{23}$ \\ \hline
$(0,1,2)$ & $H_{\rm V} $ & $\Psi_{\rm V}^a $ & $\Psi_{\rm V}^a $ \tsep{2pt}\bsep{2pt}\\ \hline
$(2,3,0)$ & $\Psi_{\rm V}^b $ & $ \Psi_{\rm V}^b$ & $H_{\rm V} $ \tsep{2pt}\bsep{2pt}\\ \hline
$(2,2,2)$ & $ \Psi_{\rm V}^b$ & $ H_{\rm V}$ & $ cH_{\rm V}$ \tsep{2pt}\bsep{2pt}\\ \hline
\hline
\end{tabular}
\end{table}
 The map $ cH_{\rm V}$ denotes the companion map of the $ H_{\rm V}$ map.

\section{Transfer maps} \label{Section5}

The notion of {\it transfer maps} associated with Yang--Baxter maps was introduced by Veselov in~\cite{Veselov:2003}. In~\cite{Veselov:2007} dynamical aspects of the latter were discussed.
The transfer maps associated with any reversible Yang--Baxter map are defined as
\begin{gather*}
T_i^{(k)}=R_{i i+k-1} R_{i i+k-2}\cdots R_{i i+1}, \qquad i\in\{1,\ldots, k\},
\end{gather*}
where the indices are considered modulo $k$.
 There is:
\begin{gather*}
T_i^{(k)} T_j^{(k)}=T_j^{(k)} T_i^{(k)}, \qquad T_1^{(k)} T_2^{(k)}\cdots T_k^{(k)}={\rm id}.
\end{gather*}
For example for $k=4$ we have
$T_1^{(4)}=R_{14}R_{13}R_{12}$, $T_2^{(4)}=R_{12}R_{24}R_{23}$, $T_3^{(4)}=R_{23}R_{13}R_{34}$ and $T_4^{(4)}=R_{34}R_{24}R_{14}$.
\begin{Proposition}
For the transfer maps $T_i^{(k)}$ associated with the maps $R_{ij}^{{\bf p}_{ij}}$ of the Proposi\-tions~{\rm \ref{prop2}}, {\rm \ref{prop3}}, {\rm \ref{prop4}}, it holds:
\begin{enumerate}\itemsep=0pt
\item[$1)$] they preserve the invariants $H_1$, $H_2$, presented in the Propositions~{\rm \ref{prop2}}, {\rm \ref{prop3}}, {\rm \ref{prop4}},
\item[$2)$] for $k=2n+1$ they preserve the measures given in the Propositions~{\rm \ref{prop2}}, {\rm \ref{prop3}}, {\rm \ref{prop4}},
\item[$3)$] for $k=2n$ they anti-preserve the measures given in the Propositions~{\rm \ref{prop2}}, {\rm \ref{prop3}}, {\rm \ref{prop4}},
\item[$4)$] they possess Lax pairs,
\item[$5)$] for generic values of the parameter sets ${\bf p}_{ij}$, are equivalent by conjugation to the transfer maps associated with $H_{\rm I}$, $H_{\rm II}$ and $H_{\rm III}^A$ Yang--Baxter maps respectively,
\item[$6)$] for non-generic values of the parameter sets ${\bf p}_{ij}$, we have novel transfer maps.
\end{enumerate}
\end{Proposition}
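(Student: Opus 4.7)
The plan is to derive all six claims from properties of the individual involutions $R_{ij}^{{\bf p}_{ij}}$ established in Propositions~\ref{prop2}, \ref{prop3} and~\ref{prop4}, together with the coordinate-by-coordinate $\text{(M\"ob)}^2$ equivalence asserted in their respective part~$(6)$.

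The first three items are immediate. For~(1), each $R_{ij}^{{\bf p}_{ij}}$ preserves both $H_1$ and $H_2$, hence so does any composition; in particular the transfer map $T_i^{(k)} = R_{i,i+k-1} R_{i,i+k-2} \cdots R_{i,i+1}$ preserves them. For (2) and (3), each factor is anti-measure preserving with densities $m_1, m_2$, so each multiplies the Jacobian (with respect to the chosen density) by $-1$; since $T_i^{(k)}$ is a composition of $k-1$ such involutions, the cumulative sign is $(-1)^{k-1}$. When $k = 2n+1$ this sign is $+1$, giving measure preservation, and when $k = 2n$ it is $-1$, giving anti-preservation.

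For item~(5), the key observation is that the conjugations in part~(6) of Propositions~\ref{prop2}, \ref{prop3}, \ref{prop4} are built separately on each coordinate: the new variable $y_\ell$ is a M\"obius transformation $\Phi_\ell$ of $x_\ell$ alone, whose coefficients depend only on the single-index parameter set ${\bf p}_\ell$. Consequently, the simultaneous change of variables $y_\ell = \Phi_\ell(x_\ell)$, $\ell = 1, \ldots, k$, transforms every factor $R_{ij}^{{\bf p}_{ij}}$ into the corresponding standard $H_{\rm I}$, $H_{\rm II}$ or $H_{\rm III}^A$ map acting on the $(y_i, y_j)$-coordinates. Forming the appropriate ordered product of the $k-1$ transformed factors identifies $T_i^{(k)}$ with the transfer map of the standard Yang--Baxter map, proving~(5). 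Item~(6) then follows by contrapositive: for non-generic parameters at least one factor $R_{ij}^{{\bf p}_{ij}}$ lies in a strictly smaller subclass (as already seen in the degeneration analysis after Proposition~\ref{prop2} and in Remark~\ref{remark:3.3}), so $T_i^{(k)}$ cannot be equivalent to the standard transfer map and is genuinely new.

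The main obstacle is item~(4). For generic parameters a Lax representation can be obtained by pulling back the known Lax matrix $L(x, y; \lambda)$ of the relevant member of the $H$-list through the coordinate change used in~(5); the transfer map then corresponds to the ordered matrix product $L_{i,i+k-1} L_{i,i+k-2} \cdots L_{i,i+1}$, whose discrete Lax equation follows from the Yang--Baxter relation for each factor. The delicate point is the non-generic regime, where singular points $P_{ij}^m$ may have coalesced: one must verify that the appropriate parameter limits inside $L$ (mirroring the degenerations described after Proposition~\ref{prop2}) produce matrices whose spectral data still encode the dynamics of the degenerate transfer map. This, rather than the straightforward bookkeeping in~(1)--(3), is where I expect the calculations to need the most care.
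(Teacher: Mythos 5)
Your treatment of items (1)--(3) and (5) is essentially the paper's own: (1)--(3) follow by composing the invariance and anti-measure-preservation of the $k-1$ involutive factors (cumulative sign $(-1)^{k-1}$), and for (5) the paper also conjugates each factor coordinate-by-coordinate and lets the intermediate conjugations telescope, exactly as you describe -- with the one caveat that the change of variables in part~(6) of Proposition~\ref{prop2} uses \emph{two different} M\"obius maps, $\nu$ on the first slot and $\mu$ on the second, not a single $\Phi_\ell$ per coordinate; for $T_1^{(k)}$ this is harmless because the index $1$ always occupies the first slot of every factor $R_{1j}^{{\bf p}_{1j}}$, and this is precisely how the paper's computation for $k=4$ is organised.

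Where you genuinely diverge is item (4), and your route creates the very difficulty you flag. You pull back the Lax matrix of the generic $H$-list representative through the conjugation of (5) and then must control degenerate parameter limits. The paper avoids this: the maps $R_{ij}^{{\bf p}_{ij}}$ are themselves Yang--Baxter maps for \emph{all} parameter values (Propositions~\ref{prop2}--\ref{prop4}), so the general Suris--Veselov construction of Lax matrices for Yang--Baxter maps \cite{Veselov:2003b} applies to them directly, and the Lax equation for $T_i^{(k)}$ is then the usual factorization of the monodromy matrix $L(x_k;\boldsymbol{\lambda})\cdots L(x_1;\boldsymbol{\lambda})$ \`a la Veselov \cite{Veselov:2003}; no limiting argument is required, so the ``delicate point'' is an artifact of your approach rather than of the statement. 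Finally, for (6) your inference is too quick: the fact that some factor $R_{ij}^{{\bf p}_{ij}}$ drops into a smaller subclass does not by itself show that the composed map $T_i^{(k)}$ fails to be conjugate to the standard transfer map -- conjugacy of a composition is not decided factor-by-factor. The paper's own justification is admittedly also terse (it only notes that the specific conjugation of (5) no longer holds), but you should not present the subclass degeneration of a single factor as a proof of non-equivalence of the transfer maps.
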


\begin{proof}The statements $(1)$--$(3)$ have already been proven (see Propositions~\ref{prop1}, \ref{prop2}, \ref{prop3}, \ref{prop4}). As for the statement~$(4)$, one can construct a Lax matrix for the Yang--Baxter map~$R$ fol\-lo\-wing~\cite{Veselov:2003b}. Then the Lax equations associated with the transfer maps~$T_i^{(k)}$, correspond to certain factorizations of the monodromy matrix (see~\cite{Veselov:2003}).

We will show the statement $(5)$ for the transfer maps associated with $R_{ij}^{{\bf p}_{ij}}$ of Proposition~\ref{prop2} and for $k=4$. The proof for arbitrary $k$ follows by induction.
In Proposition~\ref{prop2} it was shown that these maps are $\text{(M\"ob)}^2$ equivalent to the $H_{\rm I}$ map. Let us denote as $\nu_l$ the maps defined by the cross-ratios
 \begin{gather*}
\operatorname{CR}[x_l,a_l/b_l,c_l/d_l,A_l/B_l]=\operatorname{CR}[y_l,0,1,\infty],\qquad l=1,\ldots, 4,
\end{gather*}
and as $\mu_l$ the maps defined by
 \begin{gather*}
\operatorname{CR}[x_l,c_l/d_l,a_l/b_l,C_l/D_l]=\operatorname{CR}[y_l,\infty,1,0], \qquad l=1,\ldots, 4.
\end{gather*}
Then the maps $\tilde R_{ij}^{{\bf p}_{ij}}$, where $\tilde R_{ij}^{{\bf p}_{ij}}=\mu_j^{-1}\mu_i^{-1}R_{ij}^{{\bf p}_{ij}}\mu_i\mu_j$ are exactly the $H_{\rm I}$ map acting on the $(ij)$-coordinates (see Proposition~\ref{prop2}). For the transfer map $\tilde T_1^{(4)}$ associated with $\tilde R_{ij}^{{\bf p}_{ij}}$, there is
\begin{align}
\tilde T_1^{(4)}& =\tilde R_{14}\tilde R_{13}\tilde R_{12}=\big(\nu_1^{-1}\mu_4^{-1}R_{14}\nu_1\mu_4\big) \big(\nu_1^{-1}\mu_3^{-1}R_{13}\nu_1\mu_3\big) \big(\nu_1^{-1}\mu_3^{-1}R_{13}\nu_1\mu_2\big)\nonumber\\
& =\mu_4^{-1}\mu_3^{-1}\mu_2^{-1}\nu_1^{-1}R_{14}R_{13}R_{12}\nu_1\mu_2\mu_3\mu_4=\mu_4^{-1}\mu_3^{-1}\mu_2^{-1}\nu_1^{-1}T_1^{(4)}\nu_1\mu_2\mu_3\mu_4.\label{t-conj}
\end{align}
Note that we have omitted the parameter sets ${\bf p}_{ij}$ that the maps depends on for simplicity.

$(6)$. For non-generic choice of the parameter sets ${\bf p}_{ij}$, the conjugation equivalence~(\ref{t-conj}) does not holds.
\end{proof}

\subsection{On a re-factorisation of the transfer maps}
First, let us introduce some maps. With $ \pi_{ij}$ we denote the transpositions
\begin{gather*}
\pi_{ij}\colon \ (x_1,\ldots, x_k;{\bf p}_1,\ldots, {\bf p}_k)\mapsto (X_1,\ldots, X_k;{\bf P}_1,\ldots, {\bf P}_k),
\\
 X_l=x_l, \qquad {\bf P}_l={\bf p}_l \qquad \forall\, l\neq i,j, \qquad X_i=x_j, \qquad X_j=x_i, \qquad {\bf P}_i={\bf p}_j, \qquad {\bf P}_j={\bf p}_i.
\end{gather*}
and with $\pi_0$ we denote the following $k$-periodic map
\begin{gather*}
\pi_0\colon \ (x_1,\ldots, x_k;{\bf p}_1,\ldots, {\bf p}_k)\mapsto (X_1,\ldots, X_k;{\bf P}_1,\ldots, {\bf P}_k),
\\
 X_l=x_{l+1},\qquad {\bf P}_l={\bf p}_{l+1}, \qquad \forall\, l\in\{1,\ldots,k\}, \quad \mbox{modulo} \ k.
\end{gather*}
\begin{Remark} \label{rem5.1}
Note that $\pi_0=\pi_{12}\pi_{13}\cdots\pi_{1k}$ and the maps $\pi_0$, $\pi_{ij}$ $\forall\, i,j\in\{1,\ldots,k\}$, preserve the invariants $H_1$, $H_2$ of the Propositions \ref{prop2}, \ref{prop3}, \ref{prop4}. Moreover, the maps $S_i:=\pi_{i i+1}R_{i i+1}$, $i\in \{1,\ldots, k\}$, also preserve the invariants $H_1$, $H_2$. The following relations holds
\begin{gather*}
S_i^2=(S_iS_{i+1})^3=\pi_0^k={\rm id}, \qquad (S_iS_{j})^2={\rm id}, \qquad |i-j|>1, \qquad S_i\pi_0=\pi_0S_{i+1}.
\end{gather*}
The group $g=\langle \pi_0,S_1,S_2,\ldots, S_k\rangle $ generated by these maps provides a bi-rational realization of the extended Weyl group of type $A_{k-1}^{(1)}$.
\end{Remark}

\begin{Proposition} \label{prop-tr}
The transfer maps $T_i^{(k)}$ of a Yang--Baxter map $R$, coincide with the $(k-1)$-iteration of the maps
\begin{gather*}
t_i^{(k)}:=\pi_0 \pi_{i i+1}R_{i i+1}^{{\bf p}_{i i+1}}=\pi_0S_i.
\end{gather*}
We refer to the maps $t_i^{(k)}$ as the {\it extended transfer maps} associated with the Yang--Baxter map~$R$.
 \end{Proposition}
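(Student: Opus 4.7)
The plan is to rewrite $(t_i^{(k)})^{k-1}=(\pi_0 S_i)^{k-1}$ as a single composition in which every permutation sits on the left of every Yang--Baxter factor, and then verify separately that the Yang--Baxter string assembles into $T_i^{(k)}$ and that the leftover permutation collapses to the identity.

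First, using the relation $S_i\pi_0=\pi_0 S_{i+1}$ of Remark~\ref{rem5.1} (which just records the fact that $\pi_0$ simultaneously cyclically shifts the variables and the parameter sets), a straightforward induction on $n$ gives
\begin{gather*}
(t_i^{(k)})^n \;=\; \pi_0^n\, S_{i+n-1}S_{i+n-2}\cdots S_{i+1}S_i,\qquad n\ge 1.
\end{gather*}
For $n=k-1$, using $\pi_0^k=\operatorname{id}$ so that $\pi_0^{k-1}=\pi_0^{-1}$, the desired equality reduces to
\begin{gather*}
\pi_0^{-1}\, S_{i+k-2}\,S_{i+k-3}\cdots S_{i+1}\,S_i \;=\; R_{i,i+k-1}\,R_{i,i+k-2}\cdots R_{i,i+1}.
\end{gather*}

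Next, expanding $S_j=\pi_{j,j+1}R_{j,j+1}$, I would push each transposition $\pi_{j,j+1}$ to the left past the Yang--Baxter factors to its right. The key commutation, which is an immediate consequence of the simultaneous action of $\pi_{j,j+1}$ on variables and parameters, is
\begin{gather*}
R_{l,m}\,\pi_{j,j+1}\;=\;\pi_{j,j+1}\,R_{\sigma(l),\sigma(m)},\qquad \sigma=\pi_{j,j+1};
\end{gather*}
for supports disjoint from $\{j,j+1\}$ this is trivial commutation. Processing $S_i$, then $S_{i+1}$, $\ldots$, then $S_{i+k-2}$ in turn, each incoming factor $R_{i+m,i+m+1}$ is moved to the right past the accumulated transpositions $\pi_{i+m-1,i+m}\cdots\pi_{i,i+1}$, and at each such crossing its first subscript drops by one. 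An induction on $m$ then yields
\begin{gather*}
S_{i+k-2}\cdots S_{i+1}\,S_i \;=\; \big(\pi_{i+k-2,i+k-1}\pi_{i+k-3,i+k-2}\cdots\pi_{i,i+1}\big)\, R_{i,i+k-1}R_{i,i+k-2}\cdots R_{i,i+1}.
\end{gather*}

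The last step is to observe that the product $\pi_{i+k-2,i+k-1}\pi_{i+k-3,i+k-2}\cdots\pi_{i,i+1}$ is precisely the standard decomposition of the $k$-cycle $\pi_0$ into $k-1$ adjacent transpositions, as one checks by applying it to a generic tuple (indices modulo $k$). Thus this product equals $\pi_0$, the leading $\pi_0^{-1}$ cancels it, and what remains is exactly $T_i^{(k)}$. The only delicate point in the argument is the bookkeeping in the middle step: one must carefully track how conjugation by each transposition relabels the first subscript of the travelling Yang--Baxter factor, and one must check that the simultaneous permutation of the parameter sets ${\bf p}_l$ guarantees that the correct parameter pair ${\bf p}_{i,i+m}$ accompanies each resulting $R_{i,i+m}$, so that the product on the right matches $T_i^{(k)}=R_{i,i+k-1}^{{\bf p}_{i,i+k-1}}\cdots R_{i,i+1}^{{\bf p}_{i,i+1}}$ exactly.
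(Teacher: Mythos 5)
Your argument is correct, but it proves the proposition by a genuinely different route than the paper. You work entirely at the level of the permutation algebra: the relation $S_i\pi_0=\pi_0S_{i+1}$ of Remark~\ref{rem5.1} gives $\big(t_i^{(k)}\big)^{n}=\pi_0^{n}S_{i+n-1}\cdots S_i$, then the conjugation rule $R_{l,m}\,\pi_{j,j+1}=\pi_{j,j+1}\,R_{\sigma(l),\sigma(m)}$ (valid because the transpositions permute variables and parameter sets simultaneously) lets you push all transpositions to the left, the travelling factor's first index dropping by one at each crossing so that $S_{i+k-2}\cdots S_i=\big(\pi_{i+k-2,i+k-1}\cdots\pi_{i,i+1}\big)R_{i,i+k-1}\cdots R_{i,i+1}$, and finally the product of cyclically adjacent transpositions equals $\pi_0$ (for every starting index $i$, as conjugation by powers of $\pi_0$ shows), cancelling $\pi_0^{k-1}=\pi_0^{-1}$. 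I checked the index bookkeeping: the second index of the travelling factor is never touched by the transpositions it crosses, the pair is never reordered, and since all $R$-factors end up to the right of all permutations they read the original parameter sets, so the product is exactly $T_i^{(k)}=R_{i\,i+k-1}^{{\bf p}_{i\,i+k-1}}\cdots R_{i\,i+1}^{{\bf p}_{i\,i+1}}$. The paper instead verifies the claim for small $k$ by direct computation and then argues that $T_1^{(k)}$ and $\big(t_1^{(k)}\big)^{k-1}$ share the same Lax equation, using the cyclic rearrangement of the monodromy-type product of Lax matrices together with the fact that the parameter action of $t_1^{(k)}$ has period $k-1$. Your proof is more elementary and more general: it needs no Lax matrix and avoids the implicit assumption that coincidence of Lax (re-factorisation) equations forces coincidence of the maps, which strictly requires the Lax matrix to determine the map uniquely; it also treats all $i$ uniformly. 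What the paper's route buys in exchange is the link to the monodromy matrix, which it exploits afterwards to produce additional invariants for the (extended) transfer maps.
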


 \begin{proof}
It is enough to show that the $(k-1)$-iteration of the map $t_1^{(k)}$ coincides with $T_1^{(k)}$. For small values of $k$, this can be proven by direct calculation. In-order to complete the proof, it is enough to show that for arbitrary $k$ the maps $T_1^{(k)}$ and $\big(t_1^{(k)}\big)^{k-1}$ share the same Lax equation.

 Let $L(x,{\bf p};\boldsymbol{\lambda})$ the Lax matrix associated with the Yang--Baxter map $R$.
The Lax equation associated with the transfer map $T_1^{(k)}=R_{1 k}^{{\bf p}_{1 k}}R_{1 k-1}^{{\bf p}_{1 k-1}}\cdots R_{1 2}^{{\bf p}_{12}}$ reads
\begin{gather}
L(x_k,{\bf p}_k;\boldsymbol{\lambda})L(x_{k-1},{\bf p}_{k-1};\boldsymbol{\lambda})\cdots L(x_2,{\bf p}_2;\boldsymbol{\lambda}) L(x_1,{\bf p}_1;\boldsymbol{\lambda})\nonumber\\
\qquad {}= L(X_1,{\bf p}_1;\boldsymbol{\lambda})L(X_k,{\bf p}_k;\boldsymbol{\lambda})L(X_{k-1},{\bf p}_{k-1};\boldsymbol{\lambda})\cdots L(X_2,{\bf p}_2;\boldsymbol{\lambda}).\label{lax0}
\end{gather}
Since
\begin{gather*}
\pi_{1 2}R_{12}^{{\bf p}_{12}}\colon \ L(x_k,{\bf p}_k;\boldsymbol{\lambda})L(x_{k-1},{\bf p}_{k-1};\boldsymbol{\lambda})\cdots L(x_2,{\bf p}_2;\boldsymbol{\lambda}) L(x_1,{\bf p}_1;\boldsymbol{\lambda})\\
\qquad {} \mapsto L(x_k,{\bf p}_k;\boldsymbol{\lambda})L(x_{k-1},{\bf p}_{k-1};\boldsymbol{\lambda})\cdots L(x_2,{\bf p}_2;\boldsymbol{\lambda}) L(x_1,{\bf p}_1;\boldsymbol{\lambda}),
\end{gather*}
and
\begin{gather*}
\pi_0\colon \ L(x_k,{\bf p}_k;\boldsymbol{\lambda})L(x_{k-1},{\bf p}_{k-1};\boldsymbol{\lambda})\cdots L(x_2,{\bf p}_2;\boldsymbol{\lambda}) L(x_1,{\bf p}_1;\boldsymbol{\lambda})\\
\qquad {} \mapsto L(x_1,{\bf p}_1;\boldsymbol{\lambda})L(x_{k},{\bf p}_{k};\boldsymbol{\lambda})\cdots L(x_3,{\bf p}_3;\boldsymbol{\lambda}) L(x_2,{\bf p}_2;\boldsymbol{\lambda}),
\end{gather*}
there is
\begin{gather*}
t_1^{(k)}\colon \ L(x_k,{\bf p}_k;\boldsymbol{\lambda})L(x_{k-1},{\bf p}_{k-1};\boldsymbol{\lambda})\cdots L(x_2,{\bf p}_2;\boldsymbol{\lambda}) L(x_1,{\bf p}_1;\boldsymbol{\lambda})\\
\qquad{} \mapsto L(x_1,{\bf p}_1;\boldsymbol{\lambda})L(x_{k},{\bf p}_{k};\boldsymbol{\lambda})\cdots L(x_3,{\bf p}_3;\boldsymbol{\lambda}) L(x_2,{\bf p}_2;\boldsymbol{\lambda}).
\end{gather*}
So the map $t_1^{(k)}$ has the following Lax equation
\begin{gather*}
L(x_k,{\bf p}_k;\boldsymbol{\lambda})L(x_{k-1},{\bf p}_{k-1};\boldsymbol{\lambda})\cdots L(x_2,{\bf p}_2;\boldsymbol{\lambda}) L(x_1,{\bf p}_1;\boldsymbol{\lambda}) \\
 \qquad{} =L(X_1,{\bf P}_1;\boldsymbol{\lambda})L(X_k,{\bf P}_k;\boldsymbol{\lambda})L(X_{k-1},{\bf P}_{k-1};\boldsymbol{\lambda})\cdots L(X_2,{\bf P}_2;\boldsymbol{\lambda}).
\end{gather*}
But the map $t_1^{(k)}$ acts on the parameter sets ${\bf p}_i$ as follows
\begin{gather*}
t_1^{(k)} \colon \ ({\bf p}_1,\ldots, {\bf p}_k)\mapsto ({\bf P}_1,\ldots, {\bf P}_k),
\end{gather*}
where
\begin{gather*} {\bf P}_1={\bf p}_1, \qquad {\bf P}_k={\bf p}_2 \qquad \text{and} \qquad \forall\, i\neq 1,k \qquad {\bf P}_i={\bf p}_{i+1},
\end{gather*}
that is periodic with period $k-1$, so the Lax equation of the map $\big(t_1^{(k)}\big)^{k-1}$ is exactly~(\ref{lax0}), i.e., the Lax equation of~$T_1^{(k)}$.
\end{proof}

\begin{Theorem}The maps $t_i^{(k)}$ satisfy the relations
\begin{alignat*}{4}
& \big(t_i^{(k)} t_{i+1}^{(k)}\big)^{k/2}={\rm id},\qquad && t_1^{(k)}t_2^{(k)}\cdots t_k^{(k)}={\rm id},\qquad && k \quad \text{even},& \\
& \big(t_i^{(k)} t_{i+i}^{(k)}\big)^{k}={\rm id}, \qquad && \big(t_1^{(k)}t_2^{(k)}\cdots t_k^{(k)}\big)^2={\rm id},\qquad && k \quad \text{odd}.&
\end{alignat*}
\end{Theorem}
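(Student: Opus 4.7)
The strategy is to exploit only the three relations recorded in Remark~\ref{rem5.1}: the involutivity $S_i^2={\rm id}$, the periodicity $\pi_0^k={\rm id}$, and the translation law $S_i\pi_0=\pi_0 S_{i+1}$. The argument then reduces to shifting $\pi_0$'s past $S_i$'s and collapsing squares of involutions; notably, the braid relation $(S_iS_{i+1})^3={\rm id}$ plays no role here.

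First I would compute the adjacent products. Writing $t_j^{(k)}=\pi_0 S_j$ and applying $S_i\pi_0=\pi_0 S_{i+1}$, one gets
\begin{gather*}
t_i^{(k)} t_{i+1}^{(k)}=\pi_0 S_i\,\pi_0 S_{i+1}=\pi_0\cdot(\pi_0 S_{i+1})\cdot S_{i+1}=\pi_0^2 S_{i+1}^2=\pi_0^2.
\end{gather*}
Iterating gives $\bigl(t_i^{(k)}t_{i+1}^{(k)}\bigr)^n=\pi_0^{2n}$. For $k$ even this yields $\bigl(t_i^{(k)}t_{i+1}^{(k)}\bigr)^{k/2}=\pi_0^k={\rm id}$, and for $k$ odd, $\bigl(t_i^{(k)}t_{i+1}^{(k)}\bigr)^{k}=\pi_0^{2k}=(\pi_0^k)^2={\rm id}$.

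For the total product I would prove by induction on $m$ the formula
\begin{gather*}
t_1^{(k)}t_2^{(k)}\cdots t_m^{(k)}=\begin{cases}\pi_0^m,& m\text{ even},\\ \pi_0^m S_m,& m\text{ odd}.\end{cases}
\end{gather*}
The base case $m=1$ is the definition $t_1^{(k)}=\pi_0 S_1$. The even-to-odd step is immediate: $\pi_0^m\cdot\pi_0 S_{m+1}=\pi_0^{m+1} S_{m+1}$. The odd-to-even step repeats the adjacent computation: $\pi_0^m S_m\cdot\pi_0 S_{m+1}=\pi_0^m\cdot(\pi_0 S_{m+1})\cdot S_{m+1}=\pi_0^{m+1}$.

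Specialising to $m=k$ then closes the theorem: if $k$ is even, $t_1^{(k)}\cdots t_k^{(k)}=\pi_0^k={\rm id}$; if $k$ is odd, $t_1^{(k)}\cdots t_k^{(k)}=\pi_0^k S_k=S_k$ and squaring gives $S_k^2={\rm id}$. There is no substantive obstacle: the statement is essentially a corollary of the $\widetilde{A}_{k-1}^{(1)}$-type presentation recorded in Remark~\ref{rem5.1}, and the only care needed is in tracking the parities in the induction.
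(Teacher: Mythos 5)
Your proof is correct and follows essentially the same route as the paper: both arguments use only the relations of Remark~\ref{rem5.1} ($S_i^2={\rm id}$, $\pi_0^k={\rm id}$, $S_i\pi_0=\pi_0S_{i+1}$) to collapse $t_i^{(k)}t_{i+1}^{(k)}=\pi_0^2$ and then telescope the full product, reducing the odd case to $S_k^2={\rm id}$. Your parity induction is merely a uniform restatement of the paper's grouping of the product into consecutive pairs, so there is nothing substantively different to compare.
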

\begin{proof}
Let us first prove that $t_1^{(k)}t_2^{(k)}\cdots t_k^{(k)}={\rm id}$ for $k=2m$ even. There is
\begin{gather*}
t_1^{(2m)}t_2^{(2m)}\cdots t_{2m}^{(2m)}=\pi_0S_1\pi_0S_2\cdots \pi_0S_{2m},
\end{gather*}
where we have the composition of $m$ expressions of the form $\pi_0S_i\pi_0S_{i+1}$, and for each one of them (using Remark~\ref{rem5.1}) it holds
$\pi_0S_i\pi_0S_{i+1}=\pi_0S_i^2\pi_0=\pi_0^2$. So
\begin{gather*}
t_1^{(2m)}t_2^{(2m)}\cdots t_{2m}^{(2m)}= \underbrace{\pi_0^2 \pi_0^2\cdots \pi_0^2}_{\text{$m$-times}}=\pi_0^{2m}={\rm id}.
\end{gather*}
Let us now prove that $\big(t_i^{(k)} t_{i+1}^{(k)}\big)^{k/2}={\rm id}$. We have
\begin{gather*}
\big(t_i^{(k)} t_{i+1}^{(k)}\big)^{k/2}=\big(t_i^{(k)} t_{i+1}^{(k)}\big)^{m}=\big(\pi_0S_i\pi_0S_{i+1}\big)^m=\big(\pi_0^2S_{i+1}^2\big)^m=\pi_0^{2m}={\rm id}.
\end{gather*}

For $k=2m+1$ odd, we have
\begin{gather*}
\big(t_i^{(k)} t_{i+1}^{(k)}\big)^{k}=(t_i^{(k)} t_{i+1}^{(k)})^{2m+1}=\big(\pi_0^2S_{i+1}^2\big)^{2m+1}=\big(\pi_0^{2m+1}\big)^2={\rm id}.
\end{gather*}
Also,
\begin{align*}
\big(t_1^{(2m+1)}t_2^{(2m+1)}\cdots t_{2m+1}^{(2m+1)}\big)^2& =\big(t_1^{(2m+1)}t_2^{(2m+1)}\cdots t_{2m}^{(2m+1)}\pi_0S_{2m+1}\big)^2\\
& =\big(\pi_0^{2m+1}S_{2m+1}\big)^2=S_{2m+1}^2={\rm id},
\end{align*}
where we have used the fact that
\begin{gather*}
t_1^{(2m+1)}t_2^{(2m+1)}\cdots t_{2m}^{(2m+1)}= \underbrace{\pi_0^2 \pi_0^2\cdots \pi_0^2}_{\text{$m$-times}}=\pi_0^{2m}.\tag*{\qed}
\end{gather*}\renewcommand{\qed}{}
\end{proof}

\begin{Remark}Note that for $k$ odd, it holds the more general condition
\begin{gather*}
\big(t_i^{(k)} t_{j}^{(k)}\big)^{k}={\rm id}, \qquad i\neq j.
\end{gather*}
\end{Remark}

\subsection[$k$-point recurrences associated with the transfer maps of the $H$-list of quadrirational Yang--Baxter maps]{$\boldsymbol{k}$-point recurrences associated with the transfer maps\\ of the $\boldsymbol{H}$-list of quadrirational Yang--Baxter maps}

We refer to the extended transfer maps $t_i^{(k)}$ that correspond to the $H_{\rm I}$, $H_{\rm II}$, $H_{\rm III}^A$, $H_{\rm III}^B$ and $H_{\rm V}$ Yang--Baxter maps respectively as $ t_i^{H_{\rm I}(k)}$, $t_i^{H_{\rm II}(k)}$, $t_i^{H_{\rm III}^A(k)}$, $t_i^{H_{\rm III}^B(k)}$ and $t_i^{H_{\rm V}(k)}$.

Here, we associate $k$-point recurrences with the maps $ t_i^{H_{\rm I}(k)}$, $t_i^{H_{\rm II}(k)}$, $t_i^{H_{\rm III}^A(k)}$, $t_i^{H_{\rm III}^B(k)}$ and $t_i^{H_{\rm V}(k)}$. Let us first introduce the shift operator $T$ as follows
\begin{gather*}
T^0\colon \ x(n)\mapsto x(n),\qquad T^1\colon \ x(n)\mapsto x(n+1), \qquad T^l\colon \ x(n)\mapsto x(n+l), \\ T^{-l}\colon \ x(n)\mapsto x(n-l),\qquad n,l \in \mathbb{Z}.
\end{gather*}

The maps $ t_2^{H_{\rm I}(k)}$, $t_2^{H_{\rm II}(k)}$, $t_2^{H_{\rm III}^A(k)}$, $t_2^{H_{\rm III}^B(k)}$ and $t_2^{H_{\rm V}(k)}$, explicitly read
\begin{gather*}
(x_1,\ldots, x_k;p_1,\ldots, p_k) \mapsto (T x_1,\ldots, T x_k;T p_1,\ldots, T p_k),
\end{gather*}
where
\begin{gather*}
T x_1=x_2\frac{p_3(1-p_2)+(p_2-p_3)x_3+(p_3-1)x_2x_3}{p_2(1-p_3)+(p_3-p_2)x_2+(p_2-1)x_2x_3}, \qquad T p_1=p_3, \qquad T x_i=x_{i+1}, \\
 T x_2=x_3\frac{p_2(1-p_3)+(p_3-p_2)x_2+(p_2-1)x_2x_3}{p_3(1-p_2)+(p_2-p_3)x_3+(p_3-1)x_2x_3} , \qquad T p_2=p_2, \qquad T p_i=p_{i+1}, \tag*{$\big(t_2^{H_{\rm I}(k)}\big)$}\\
 T x_1=p_3x_2\frac{x_2+x_3-p_2}{p_3x_2+p_2x_3-p_2p_3}, \qquad T p_1=p_3, \qquad T x_i=x_{i+1}, \\
 T x_2=p_2x_3\frac{x_2+x_3-p_3}{p_3x_2+p_2x_3-p_2p_3} , \qquad T p_2=p_2, \qquad T p_i=p_{i+1}, \tag*{$\big(t_2^{H_{\rm II}(k)}\big)$}\\
T x_1=\frac{x_2}{p_3}\frac{p_2x_2+p_3x_3}{x_2+x_3} , \qquad T p_1=p_3, \qquad T x_i=x_{i+1}, \\
 T x_2=\frac{x_3}{p_2}\frac{p_2x_2+p_3x_3}{x_2+x_3} , \qquad T p_2=p_2, \qquad T p_i=p_{i+1}, \tag*{$\big(t_2^{H_{\rm III}^A(k)}\big)$}\\
 T x_1=x_2\frac{1+p_2x_2x_3}{1+p_3x_2x_3}, \qquad T p_1=p_3, \qquad T x_i=x_{i+1}, \\
T x_2=x_3\frac{1+p_3x_2x_3}{1+p_2x_2x_3}, \qquad T p_2=p_2, \qquad T p_i=p_{i+1}, \tag*{$\big(t_2^{H_{\rm III}^B(k)}\big)$}\\
T x_1=x_2-\frac{p_3-p_2}{x_2+x_3} , \qquad T p_1=p_3, \qquad T x_i=x_{i+1}, \\
T x_2=x_3+\frac{p_3-p_2}{x_2+x_3} , \qquad T p_2=p_2, \qquad T p_i=p_{i+1},\tag*{$\big(t_2^{H_{\rm V}(k)}\big)$}
\end{gather*}
with $i=3,4,\ldots, k$ and $T x_k=x_1$, $T p_k=p_1$. Moreover, not just $t_2^{(k)}$, but all the maps $t_i^{(k)}$, $i=1,2,\ldots, k$, preserve the invariants in separated variables (see Table~\ref{table999})\footnote{The invariants in separated variables that appear in Table~\ref{table999}, were firstly introduced, in a different context, in~\cite{PKMN2,KaNie,KaNie:2018,PKMN3}. Note that the invariants $H_1$, $H_2$ for $t_i^{H_{\rm III}^A(k)}$ were also given in~\cite{Kouloukas:2018}.} and they anti-preserve the measures $m_i=n^i d^{i+1}$ where $n^i$, $d^i$ the numerator and the denominator respectively of the invariants $H_i$, $i=1,2$. Additional invariant can be constructed though the Lax formulation (see the proof of Proposition~\ref{prop-tr}).

\begin{table}[!h]\centering
\caption{Invariants in separated variables for the maps $ t_i^{H_{\rm I}(k)}$, $t_i^{H_{\rm II}(k)}$, $t_i^{H_{\rm III}^A(k)}$, $t_i^{H_{\rm III}^B(k)}$ and $t_i^{H_{\rm V}(k)}$.} \label{table999}\vspace{1mm}
\begin{tabular}{c|c|c}
\hline
map & $H_1$ & $H_2$ \\ \hline
$t_i^{H_{\rm I}(k)}$ & ${\dis \prod_{i=1}^k p_i x_i } $ & ${\dis \prod_{i=1}^k\frac{x_i-p_i}{x_i-1}\frac{1}{p_i-1} } $ \\ [3mm]
$t_i^{H_{\rm II}(k)}$ & ${\dis \sum_{i=1}^k x_i+p_i} $ & $ {\dis \prod_{i=1}^k\frac{x_i-p_i}{p_i x_i} } $ \\ [3mm]
$t_i^{H_{\rm III}^A(k)}$ & ${\dis \sum_{i=1}^k \frac{1}{x_i}+\frac{1}{p_i}} $ & ${\dis \sum_{i=1}^k p_ix_i }$ \\ [3mm]
$t_i^{H_{\rm III}^B(k)}$ & $ {\dis \prod_{i=1}^k p_i x_i }$ & $ {\dis \sum_{i=1}^k \frac{1}{x_i} + p_ix_i+\frac{1}{p_i} }$ \\ [3mm]
$t_i^{H_{\rm V}(k)}$ & ${\dis \sum_{i=1}^k x_i+p_i } $ & $ {\dis \sum_{i=1}^k x_i^3+3p_ix_i+p_i^3} $ \\
\hline
\end{tabular}
\end{table}

Now we show how a $k$-point recurrence can be associated with the map $t_2^{H_{\rm V}(k)}$. Recall that the map $t_2^{H_{\rm V}(k)}$ reads
\begin{gather*}
t_2^{H_{\rm V}(k)}\colon \ (x_1,\ldots, x_k;p_1,\ldots, p_k) \mapsto (T x_1,\ldots, T x_k;T p_1,\ldots, T p_k),
\end{gather*}
where
\begin{gather*}
T x_1=x_2-\frac{p_3-p_2}{x_2+x_3}, \qquad T x_2=x_3+\frac{p_3-p_2}{x_2+x_3},\qquad T x_i=x_{i+1}, \\ T p_1=p_3, \qquad T p_2=p_2, \qquad T p_i=p_{i+1}, \qquad i=3,\ldots, k,
\end{gather*}
and the indices are considered modulo $k$. Clearly we have, $x_3=T^{2-k} x_1$, $p_3=T^{2-k} p_1$. So we obtain
\begin{gather}
T x_1=x_2-\frac{T^{2-k} p_1-p_2}{x_2+T^{2-k} x_1}, \qquad T x_2=T^{2-k} x_1+\frac{T^{2-k} p_1-p_2}{x_2+T^{2-k} x_1},\nonumber\\
 T^{k-1} p_1=p_1, \qquad T p_2=p_2.\label{kh51}
\end{gather}
Adding the first two equations from above we get the following invariance condition\footnote{This condition is a consequence of the fact that the $t_i^{H_{\rm V}(k)}$ preserves the invariant $H_1=\sum\limits_{i=1}^k x_i$. Such a~condition exists for the remaining extended transfer maps associated with the Yang--Baxter maps of the $H$-list. The latter enable us to write~$t_2^{(k)}$ maps as $k$-point recurrences. }
\begin{gather} \label{invc1}
\big(T^1-T^{2-k}\big)x_1=\big(T^0-T^1\big)x_2.
\end{gather}
So it is guaranteed the existence of a potential function $f$ such that
\begin{gather*}
x_1=c+\big(T^0-T^1\big)f, \qquad x_2=c+\big(T^1-T^{2-k}\big)f, \qquad \mbox{where} \qquad c ={\rm const}.
\end{gather*}
In terms of $f$, (\ref{kh51}) becomes the following $(k+1)$-point recurrence
\begin{gather} \label{kh52}
\big(T^2-T^{2-k}\big)f=\frac{-p_2+T^{2-k} p_1}{2c+\big(T-T^{3-k}\big)f},\qquad T^{k-1} p_1=p_1, \qquad T p_2=p_2.
\end{gather}
In terms of a new variable $h$ defined as $h:=\lambda +\big(T^1-T^0\big)f$, there is,
\begin{gather*}
\big(T^2-T^{2-k}\big)f=-\lambda k+\sum_{i=2-k}^1 T^i h,\qquad \big(T-T^{3-k}\big)f=\lambda (2-k)+\sum_{i=3-k}^0 T^i h,
\end{gather*}
 so~(\ref{kh52}) becomes the $k$-point recurrence
\begin{gather*} 
\frac{2ck}{2-k}+\sum_{i=2-k}^1 T^i h= \frac{-p_2+T^{2-k} p_1}{\sum\limits_{i=3-k}^0 T^i h} , \qquad T^{k-1} p_1=p_1, \qquad T p_2=p_2,
\end{gather*}
where we chose $\lambda=\frac{2c}{k-2}$ to simplify the formulae.

\begin{table}[!h]\centering
\caption{The invariance conditions~(\ref{invc1}) and the potential functions $f$ for the maps $ t_2^{H_{\rm I}(k)}$, $t_2^{H_{\rm II}(k)}$, $t_2^{H_{\rm III}^A(k)}$, $t_2^{H_{\rm III}^B(k)}$ and $t_2^{H_{\rm V}(k)}$.} \label{table10}\vspace{1mm}
\begin{tabular}{c|c|c}
\hline
map & invariance condition & potential function $f$ \\ \hline
$t_2^{H_{\rm I}(k)}$ & ${\dis \frac{T x_1}{T^{2-k} x_1}=\frac{T^0 x_2}{T x_2} } $\tsep{8pt} & $x_1=c\dfrac{T^0 f}{T f}$, $x_2 =c\dfrac{T f}{T^{2-k} f} $ \bsep{5pt}\\
$t_2^{H_{\rm II}(k)}$ & $ \big(T-T^{2-k}\big)x_1=\big(T^0-T\big)x_2 $ & $ x_1=c+\big(T^0-T\big)f$, $x_2=c+\big(T-T^{2-k}\big)f $ \bsep{5pt}\\
$t_2^{H_{\rm III}^A(k)}$ & ${\dis \big(T-T^{2-k}\big)\frac{1}{x_1}=\big(T^0-T\big)\frac{1}{x_2} } $ & $\dfrac{1}{x_1}=\dfrac{1}{c}+\big(T^0-T\big)f$, $\dfrac{1}{x_2}=\dfrac{1}{c}+\big(T-T^{2-k}\big)f$ \bsep{5pt}\\
$t_2^{H_{\rm III}^B(k)}$ & $ {\dis \frac{T x_1}{T^{2-k} x_1}=\frac{T^0 x_2}{T x_2} }$ & $ x_1=c\dfrac{T^0 f}{T f}$, $x_2 =c\dfrac{T f}{T^{2-k} f}$ \bsep{5pt}\\
$t_2^{H_{\rm V}(k)}$ & $ \big(T-T^{2-k}\big)x_1=\big(T^0-T\big)x_2 $ & $ x_1=c+\big(T^0-T\big)f$, $x_2=c+\big(T-T^{2-k}\big)f $ \bsep{2pt}\\
\hline
\end{tabular}
\end{table}

\begin{Proposition} \label{prop52}The following $(k+1)$-point recurrences corresponds to the extended transfer map $t_2^{(k)}$ associated with $H_{\rm I}$, $H_{\rm II}$, $H_{\rm III}^A$, $H_{\rm III}^B$ and $H_{\rm V}$ Yang--Baxter maps respectively. We refer to these $(k+1)$-point recurrences respectively as $rt_2^{H_{\rm I}(k)}$, $rt_2^{H_{\rm I}(k)}$, $rt_2^{H_{\rm III}^A(k)}$, $rt_2^{H_{\rm III}^B(k)}$ and $rt_2^{H_{\rm V}(k)}$
\begin{gather*}
 \frac{T^2 f}{T^{2-k}f} = \frac{ p_2\big({-}1+T^{2-k} p_1\big)+c\big(p_2-T^{2-k} p_1\big)\dfrac{T^1f}{T^{2-k} f}+c^2(1-p_2)\dfrac{T^1f}{T^{3-k}f} }
{\big(T^{2-k} p_1\big)(p_2-1)+c\big({-}p_2+T^{2-k} p_1\big)\dfrac{T^{2-k}f}{T^{3-k} f}+c^2\big(1-T^{2-k} p_1\big)\dfrac{T^1f}{T^{3-k}f}},
\tag*{$\big(rt_2^{H_{\rm I}(k)}\big)$}\\
\frac{c+\big(T^1-T^2\big)f}{c+\big(T^1-T^{2-k}\big)f}\\
{}=\frac{\big(2c-p_2+\big(T^1-T^{3-k}\big)f\big) T^{2-k} p_1}{-p_2T^{2-k} p_1+c\big(p_2+T^{2-k} p_1\big)+\big(T^{2-k} p_1\big)\big(T^1-T^{2-k}\big)f+p_2\big(T^{2-k}-T^{3-k}\big)f},
\!\!\!\!\!\tag*{$\big(rt_2^{H_{\rm II}(k)}\big)$}\\
 \frac{c+\big(T^1-T^2\big)f}{c+\big(T^1-T^{2-k}\big)f}=
 \frac{2c+\big(T^1-T^{3-k}\big)f}{c+\big(T^1-T^{2-k}\big)f+\dfrac{p_2}{T^{2-k}p_1}\big(c+\big(T^{2-k}-T^{3-k}\big)f\big)},
\tag*{$\big(rt_2^{H_{\rm III}^A(k)}\big)$}\\
\frac{T^2 f}{T^{2-k}f}=\frac{T^{3-k}f+c^2\big(T^{2-k}p_1\big)T^1f}{T^{3-k}f+c^2p_2T^1f},
\tag*{$\big(rt_2^{H_{\rm III}^B(k)}\big)$}\\
(T^2-T^{2-k})f=\frac{-p_2+T^{2-k}p_1}{2c+\big(T-T^{3-k}f\big)}.
\tag*{$\big(rt_2^{H_{\rm V}(k)}\big)$}
\end{gather*}
For each recurrence presented above we have that the parameters vary as follows: $T p_2=p_2$, $T^{k-1} p_1=p_1$. So $p_2$ is constant and $p_1$ is periodic with period $k-1$.
\end{Proposition}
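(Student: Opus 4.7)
The strategy is to treat all five cases uniformly by the procedure already illustrated in the excerpt for $t_2^{H_{\rm V}(k)}$, namely: (i) read off an invariance condition between $x_1$ and $x_2$ from the preserved invariant $H_1$, (ii) introduce a potential $f$ realising that invariance, and (iii) substitute into the explicit formula for $Tx_1$ (or $Tx_2$) in the definition of $t_2^{(k)}$, using $x_3=T^{2-k}x_1$ and $p_3=T^{2-k}p_1$ which follow from the cyclic action on indices together with $T p_2=p_2$ and $T^{k-1}p_1=p_1$.

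First I would establish the invariance conditions listed in Table~\ref{table10}. In each case these are forced by the form of $H_1$ given in Table~\ref{table999}: for $H_{\rm II}$ and $H_{\rm V}$ the invariant $H_1=\sum_i(x_i+p_i)$ (resp.\ $\sum_i x_i+p_i$) means the two updates $Tx_1$ and $Tx_2$ differ from $x_2$, $x_3$ by opposite quantities, which after renaming $x_3=T^{2-k}x_1$ becomes $(T-T^{2-k})x_1=(T^0-T)x_2$. For $H_{\rm III}^A$ with $H_1=\sum_i(1/x_i+1/p_i)$ the same argument gives the analogous relation on the reciprocals, and for $H_{\rm I}$ and $H_{\rm III}^B$ with multiplicative $H_1=\prod_i p_ix_i$ one obtains $Tx_1/T^{2-k}x_1=x_2/Tx_2$. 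Existence of the potential $f$ (additive in three cases, multiplicative in the remaining two) is then automatic from the identity $(T^1-T^{2-k})=(T^1-T^0)+(T^0-T^{-1})+\cdots+(T^{3-k}-T^{2-k})$, which lets the difference (or ratio) be realised as a telescoping sum (or product).

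Once the parametrisations of Table~\ref{table10} are in force, I would substitute the expressions of $x_1$, $x_2$, $x_3=T^{2-k}x_1$ into the first component of the map $t_2^{(k)}$ in the relevant row. For instance in the $H_{\rm V}$ case this is exactly the calculation done in the excerpt leading to $(T^2-T^{2-k})f=(-p_2+T^{2-k}p_1)/(2c+(T-T^{3-k})f)$. For $H_{\rm II}$ one applies the same substitution to $Tx_1=p_3x_2(x_2+x_3-p_2)/(p_3x_2+p_2x_3-p_2p_3)$; for $H_{\rm III}^A$ one applies it to $Tx_1=(x_2/p_3)(p_2x_2+p_3x_3)/(x_2+x_3)$; for $H_{\rm I}$ and $H_{\rm III}^B$, where the potential enters multiplicatively, the same substitution produces the ratio forms given in Proposition~\ref{prop52}. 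Finally, the statement $Tp_2=p_2$, $T^{k-1}p_1=p_1$ is immediate from the action of $t_2^{(k)}$ on the parameter tuple, which fixes ${\bf p}_2$ and cyclically permutes the remaining $k-1$ parameter sets.

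The only step that is more than bookkeeping is the $H_{\rm I}$ case, whose $Tx_1$ is a rational function with quadratic numerator and denominator in $x_2,x_3$. The challenge will be to rewrite the resulting expression cleanly in terms of $T^1f/T^{2-k}f$, $T^1f/T^{3-k}f$ and $T^{2-k}f/T^{3-k}f$; I would do this by dividing the numerator and denominator of $Tx_1/Tx_2$ (obtained from $Tx_1/T^{2-k}x_1=x_2/Tx_2$) by $T x_2 \cdot T^{3-k}f$ and grouping the monomials in the three ratios above, which produces the displayed form of $rt_2^{H_{\rm I}(k)}$. All remaining recurrences follow by analogous, shorter simplifications.
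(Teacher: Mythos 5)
Your proposal follows exactly the route the paper takes: the invariance condition read off from the preserved invariant $H_1$ (as in~(\ref{invc1}) and Table~\ref{table10}), the introduction of the additive or multiplicative potential $f$, the substitution into the explicit $t_2^{(k)}$ formulas with $x_3=T^{2-k}x_1$, $p_3=T^{2-k}p_1$, and the parameter evolution $Tp_2=p_2$, $T^{k-1}p_1=p_1$ from the cyclic action on the parameter tuple — precisely the worked $H_{\rm V}$ derivation generalized case by case. This matches the paper's (implicit) proof, with only routine algebraic simplification left in the $H_{\rm I}$ and $H_{\rm III}^B$ cases.
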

Note that the recurrences $rt_2^{H_{\rm I}(k)}$ and $rt_2^{H_{\rm III}^B(k)}$ are bilinear. Some members of $rt_2^{H_{\rm I}(k)}$ and $rt_2^{H_{\rm III}^B(k)}$, for specific choices of the parameters $c$, $p_2$ and of the function~$p_1$, are expected to exhibit the {\it Laurent property} \cite{Fomin1:2002,Fomin2:2002,Fordy:2014}.

\begin{table}[!h]\centering
\caption{Definition of the variables $h$ associated with the recurrences of Proposition~\ref{prop52}.} \label{table13}\vspace{1mm}
\begin{tabular}{c|c|c}
 \hline
 recurrence & variable $h$ & a choice for $\lambda$ \\ \hline
 $rt_2^{H_{\rm I}(k)}$ & $h:=\lambda \dfrac{T f}{T^0 f}$\tsep{7pt} & $\lambda=\dfrac{1}{c} $\bsep{7pt} \\
 $rt_2^{H_{\rm II}(k)}$ & $h:=\lambda+\big(T-T^0\big)f$ & $\lambda=\dfrac{2c}{k-2} $\bsep{7pt} \\
 $rt_2^{H_{\rm III}^A(k)}$ & $h:=\lambda+\big(T-T^0\big)f$ & $\lambda=\dfrac{2c}{k-2}$\bsep{7pt} \\
 $rt_2^{H_{\rm III}^B(k)}$ & $h:=\lambda \dfrac{T f}{T^0 f}$ & $\lambda=\dfrac{1}{c} $\bsep{7pt} \\
 $rt_2^{H_{\rm V}(k)}$ & $h:=\lambda+\big(T-T^0\big)f$ & $\lambda=\dfrac{2c}{k-2} $\bsep{7pt} \\
 \hline
\end{tabular}
\end{table}

\begin{Corollary} \label{cor1}
The $(k+1)$-point recurrences $rt_2^{H_{\rm I}(k)}$, $rt_2^{H_{\rm II}(k)}$, $rt_2^{H_{\rm III}^A(k)}$, $rt_2^{H_{\rm III}^B(k)}$ and $rt_2^{H_{\rm V}(k)}$, in terms of the corresponding variables~$h$ defined in Table~{\rm \ref{table13}}, get the form of the following $k$-point recurrences
\begin{gather*}
 \prod_{i=3-k}^1T^i h=\frac{{\dis c^{-k}p_2\big(T^{2-k} p_1-1\big)+\big(p_2-T^{2-k} p_1\big)\prod_{i=2-k}^0 T^i h+(1-p_2)\prod_{i=3-k}^0 T^i h}}
{{\dis T^{2-k} p_1-p_2+T^{2-k} p_1 (p_2-1)T^{2-k} h+c^k\big(1-T^{2-k} p_1\big)\prod_{i=3-k}^0 T^i h }} ,
\tag*{$\big(\hat{r}t_2^{H_{\rm I}(k)}\big)$}\\
 p_2\left(\frac{c k}{k-2}-T h\right)\left(\frac{c k}{k-2}-T^{2-k}p_1-T^{2-k}h\right) \\
\qquad{} =T^{2-k}p_1\left(\frac{c k}{k-2}-\sum_{i=2-k}^0 T^i h\right)\left(\frac{c k}{k-2}+p_2-\sum_{i=3-k}^1 T^i h\right) ,
\tag*{$\big(\hat{r}t_2^{H_{\rm II}(k)}\big)$}\\
 p_2\left(\frac{c k}{k-2}-T h\right)\left(\frac{c k}{k-2}-T^{2-k}h\right) \\
\qquad{} = T^{2-k}p_1\left(\frac{c k}{k-2}-\sum_{i=2-k}^0 T^i h\right)\left(\frac{c k}{k-2}-\sum_{i=3-k}^1 T^i h\right) ,
\tag*{$\big(\hat{r}t_2^{H_{\rm III}^A(k)}\big)$}\\
 \prod_{i=2-k}^1 T^i h = \frac{{\dis c^{-k}+T^{2-k}p_1\prod_{i=3-k}^0 T^i h}}{{\dis 1+c^kp_2\prod_{i=3-k}^0 T^i h}},
\tag*{$\big(\hat{r}t_2^{H_{\rm III}^B(k)}\big)$}\\
 -\frac{2 c k}{k-2}+\sum_{i=2-k}^1 T^i h = \frac{{\dis -p_2+T^{2-k}p_1}}{{\dis \sum_{i=3-k}^0 T^i h}}
\tag*{$\big(\hat{r}t_2^{H_{\rm V}(k)}\big)$}
\end{gather*}
and for each recurrence presented above we have that the parameters vary as follows: $T p_2=p_2$, $T^{k-1} p_1=p_1$. So $p_2$ is constant and $p_1$ is periodic with period $k-1$.
\end{Corollary}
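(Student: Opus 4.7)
The plan is to substitute the definition of the new variable $h$ from Table~\ref{table13} directly into each recurrence $rt_2^{\cdot}$ of Proposition~\ref{prop52} and to exploit the fact that these substitutions are telescoping in nature: the entire $f$-dependence will collapse into a single sum (additive cases) or product (multiplicative cases) of consecutive shifts of $h$. The auxiliary constant $\lambda$ specified in Table~\ref{table13} is chosen precisely so that the resulting closed-form constants absorb the free parameters $2c$ or $c^{k}$ that occur in $rt_2^{\cdot}$, after which each $\hat{r}t_2^{\cdot}$ follows by direct algebraic manipulation.

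For the additive cases $H_{\rm II}$, $H_{\rm III}^A$, $H_{\rm V}$ I would first verify the telescoping identity
\begin{gather*}
\sum_{i=a}^{b}T^i h=(b-a+1)\lambda+\big(T^{b+1}-T^{a}\big)f,\qquad a\le b,
\end{gather*}
implied by $h=\lambda+(T-T^0)f$. Applied with $(a,b)=(3-k,0)$ and $(a,b)=(2-k,1)$ this gives $(T-T^{3-k})f=\sum_{i=3-k}^{0}T^i h-(k-2)\lambda$ and $(T^2-T^{2-k})f=\sum_{i=2-k}^{1}T^i h-k\lambda$. Choosing $\lambda=\frac{2c}{k-2}$ makes the first telescoping constant equal to the $2c$ appearing in the denominator of $rt_2^{H_{\rm V}(k)}$, while the second contributes the $\frac{2ck}{k-2}$ visible on the left-hand side of $\hat{r}t_2^{H_{\rm V}(k)}$. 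The analogous rewrites of $rt_2^{H_{\rm II}(k)}$ and $rt_2^{H_{\rm III}^A(k)}$ proceed in an identical manner.

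For the multiplicative cases $H_{\rm I}$, $H_{\rm III}^B$ the substitution $h=\lambda\,Tf/T^0 f$ produces the analogous telescoping product
\begin{gather*}
\prod_{i=a}^{b}T^i h=\lambda^{b-a+1}\,\frac{T^{b+1}f}{T^{a}f},\qquad a\le b.
\end{gather*}
Taking $\lambda=1/c$ I would rewrite the $f$-ratios present in $rt_2^{H_{\rm I}(k)}$ as $\frac{T^{2}f}{T^{2-k}f}=c^{k}\prod_{i=2-k}^{1}T^i h$, $\frac{Tf}{T^{2-k}f}=c^{k-1}\prod_{i=2-k}^{0}T^i h$ and $\frac{Tf}{T^{3-k}f}=c^{k-2}\prod_{i=3-k}^{0}T^i h$, and treat the isolated ratio $\frac{T^{2-k}f}{T^{3-k}f}=1/(c\,T^{2-k}h)$ by multiplying the right-hand side top and bottom by $T^{2-k}h$. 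Dividing through by the overall factor $c^k\, T^{2-k}h$ should reproduce $\hat{r}t_2^{H_{\rm I}(k)}$; the case $H_{\rm III}^B$ is identical.

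The hard part will be nothing more than careful bookkeeping: matching each telescoping range to the precise shifts of $f$ that occur in $rt_2^{\cdot}$, and verifying that the normalisation $\lambda$ from Table~\ref{table13} exactly compensates the constants. Once this is carried out the resulting shifts of $h$ evidently span $k$ consecutive positions (from $T^{2-k}$ or $T^{3-k}$ up to $T^{1}$ or $T^{2}$), which is the claimed $k$-point character of $\hat{r}t_2^{\cdot}$.
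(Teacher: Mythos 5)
Your proposal is correct and takes essentially the same route as the paper: the text immediately preceding Proposition~\ref{prop52} performs exactly this substitution and telescoping for the $H_{\rm V}$ case (with $\lambda=\frac{2c}{k-2}$ absorbing the constants), and the Corollary is obtained by repeating that computation verbatim for the remaining recurrences with the variables $h$ and normalisations $\lambda$ of Table~\ref{table13}. As a minor remark, carrying out your bookkeeping for the $H_{\rm I}$ case produces $c^k\big(1-T^{2-k}p_1\big)\prod_{i=2-k}^{0}T^i h$ as the last term of the denominator, which indicates that the range $\prod_{i=3-k}^{0}$ printed in $\hat{r}t_2^{H_{\rm I}(k)}$ is a typographical slip rather than a defect of your method.
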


Note that the $(k+1)$-point recurrences of Proposition~\ref{prop52}, as well as the corresponding $k$-point ones introduced in Corollary~\ref{cor1} are non-autonomous. This is due to the fact that~$p_1$ varies periodically $(T^{k-1}p_1=p_1)$. The non-autonomous terms that will be introduced by integrating the relation $T^{k-1}p_1=p_1$ are periodic though. Proper de-autonomization for the recurrences~$\hat{r}t_2^{H_{\rm V}(k)}$ and~$\hat{r}t_2^{H_{\rm III}^B(k)}$ will be introduced in what follows.

\subsubsection[The recurrences $\hat{r}t_i^{H_{\rm V}(k)}$ and discrete Painlev\'e equations]{The recurrences $\boldsymbol{\hat{r}t_i^{H_{\rm V}(k)}}$ and discrete Painlev\'e equations}
The dressing chain for the KdV equation \cite{Veselov:1993}, reads
\begin{gather} \label{dr1}
(g_{i+1}+g_i)_t=g_{i+1}^2-g_i^2+p_{i+1}-p_i.
\end{gather}
The recurrences $\hat{r}t_i^{H_{\rm V}(k)}$, serve as its discretisations. Actually they are exactly the $(k-1)$-roots of the discretisations presented in~\cite{adler-1993}. So, $\hat{r}t_i^{H_{\rm V}(k)}$ corresponds to Liouville integrable maps.

Since the dressing chain~(\ref{dr1}) leads to Painlev\'e equations $P_{\rm IV}$ and $P_{\rm V}$ and their higher order analogues~\cite{Veselov:1993}, the recurrences $\hat{r}t_i^{H_{\rm V}(k)}$ (after proper de-autonomisation) can be considered as their discrete counter-parts and/or the B\"{a}cklund transformations of the higher order~$P_{\rm IV}$ and~$P_{\rm V}$ Painlev\'e equations.

A proper de-autonomisation of $\hat{r}t_2^{H_{\rm V}(k)}$ is achieved by breaking the periodicity of the $p_1$ assuming that $T^{k-1}p_1=p_1+(k-1)a$, where $a$ constant. This de-autonomisation is proper since the resulting non-autonomous discrete system preserves the same Poisson structure\footnote{The Poisson structures associated with the dressing chain for the KdV equation were first derived in~\cite{Veselov:1993}, see also~\cite{Evripidou:2017}.} as the autonomous one. So we obtain the following hierarchy of discrete Painlev\'e equations
\begin{gather} \label{dpI}
 -\frac{2 c k}{k-2}+\sum_{i=2-k}^1 T^i h=\frac{-p_2+T^{2-k}p_1}{\sum\limits_{i=3-k}^0 T^i h}, \qquad T p_2=p_2,\qquad T^{k-1} p_1=p_1+(k-1)a.
\end{gather}
For $k=3$, (\ref{dpI}) reads
\begin{gather*}
-6c+T h+h+T^{-1} h=\frac{-p_2+T^{-1}p_1}{h},\qquad T p_2=p_2,\qquad T^{2} p_1=p_1+2a.
\end{gather*}
So $p_2$ is constant and $p_1=b_0+b_1(-1)^n+an$, with $b_0$, $b_1$, $a$ constants. We can choose $-p_2+b_0=b$ constant, hence we obtain the following discrete Painlev\'e equation which serves as B\"{a}cklund transformation of $P_{\rm IV}$~\cite{Noumi1998}
\begin{gather} \label{dp1}
-6c+T h+h+T^{-1} h=\frac{b+b_1(-1)^n+an}{h}, \qquad n\in \mathbb{Z}.
\end{gather}
For $k=4$, (\ref{dpI}) reads
\begin{gather*}
-4c+T^{-2} h+h+T^{-1} h+h+Th=\frac{-p_2+T^{-2}p_1}{h+T^{-1} h},\qquad T p_2=p_2, \qquad T^{3} p_1=p_1+3a.
\end{gather*}
If we define a new variable $w$ as $w:=h+T^{-1} h$, then we obtain the following discrete Painlev\'e equation which serves as B\"{a}cklund transformation of~$P_{\rm V}$
\begin{gather*}
-4c+T^{-1} w+T w=\frac{-p_2+T^{-2}p_1}{w}, \qquad T p_2=p_2, \qquad T^{3} p_1=p_1+3a.
\end{gather*}

So for $k$ odd (\ref{dpI}) serves as B\"{a}cklund transformation for the higher order analogues of $P_{\rm IV}$ and for~$k$ even~(\ref{dpI}) serves as B\"{a}cklund transformation for the higher order analogues of $P_{\rm V}$. Note that in~\cite{Noumi1998}, B\"{a}cklund transformation for the higher order analogues of $P_{\rm IV}$ and~$P_{\rm V}$ were given in terms of continued fractions. We can recover the form of discrete Painlev\'e equations introduced in~\cite{Noumi1998} by making use of the alternating terms that appear in~(\ref{dpI}). For example for $k=3$, the term $(1)^n$ that appears in~(\ref{dp1}), suggests the introduction of the variables \mbox{$y(m):=h(2n)$}, $z(m):=h(2n+1)$. Then (\ref{dp1}) takes to form of the {\it second discrete Painlev\'e equation~$dP_{\rm II}$}
\begin{gather*}
 y+z+T^{-1}z=\frac{b_0+b_1+am}{y}, \qquad T y+y+z=\frac{b_0-b_1+am}{z},\qquad m\in\mathbb{Z}.
\end{gather*}

\subsubsection[The recurrences $\hat{r}t_i^{H_{\rm III}^B(k)}$ and discrete Painlev\'e equations]{The recurrences $\boldsymbol{\hat{r}t_i^{H_{\rm III}^B(k)}}$ and discrete Painlev\'e equations}

As we plan to show in our future work, the recurrences $\hat{r}t_i^{H_{\rm III}^B(k)}$ serves as Liouville integrable discretisations of the following chain introduced in~\cite{Adler2006bb}
\begin{gather*}
(g_i+g_{i+1})_t=2(p_i\cosh g_i-p_{i+1}\cosh g_{i+1}).
\end{gather*}

A proper de-autonomisation of $\hat{r}t_2^{H_{\rm III}^B(k)}$ is achieved by breaking the periodicity of the $p_1$ in a~way that the non-autonomous system preserves the same Poisson structure as the autonomous one. This is achieved by imposing that $T^{k-1}p_1=p_1 a^{k-1}$, where $a$ constant. So we obtain the following hierarchy of discrete Painlev\'e equations
\begin{gather}\label{qpIII}
 \prod_{i=2-k}^1 T^i h = \frac{{\dis c^{-k}+T^{2-k}p_1\prod_{i=3-k}^0 T^i h}}{{\dis 1+c^kp_2\prod_{i=3-k}^0 T^i h}} ,\qquad T p_2=p_2, \qquad T^{k-1} p_1=p_1 a^{k-1}.
\end{gather}
For $k=3$, (\ref{qpIII}) reads
\begin{gather*}
T h T^{-1} h=\frac{1}{h}\frac{c^{-3}+hT^{-1}p_1}{1+c^3p_2 h}, \qquad T p_2=p_2, \qquad T^{2} p_1=p_1 a^2.
\end{gather*}
So $p_2$ is constant and $p_1=b_0a^n+b_1(-a)^n$, with $b_0$, $b_1$, $a$ constants. Hence we obtain the $q-P_{\rm I}\big(A_6^{(1)}\big)$ discrete Painlev\'e equation (see~\cite{Sakai}).
For $k=4$, (\ref{qpIII}) reads
\begin{gather*}
T h T^0 h T^{-1} h T^{-2}h=\frac{c^{-4}+h T^{-1} h T^{-2}p_1}{1+c^4p_2 h T^{-1} h},\qquad T p_2=p_2, \qquad T^{3} p_1=p_1 a^3.
\end{gather*}
If we define a new variable $w$ as $w:=hT^{-1} h$, then we obtain the $q-P_{\rm II}\big(A_5^{(1)}\big)$ discrete Painlev\'e equation (see~\cite{Sakai})
\begin{gather*}
T w T^{-1} w=\frac{c^{-4}+w T^{-2}p_1}{1+c^4p_2 w},\qquad T p_2=p_2, \qquad T^{3} p_1=p_1 a^3.
\end{gather*}
The Lax pair associated with the hierarchy~(\ref{qpIII}) first appeared in~\cite{Hay:2007}.
\begin{Remark}As for the recurrences $\hat{r}t_i^{H_{\rm III}^A(k)}$, $\hat{r}t_i^{H_{\rm II}(k)}$, one could consider $T^{k-1} p_1=p_1+(k-1)a$ and for $\hat{r}t_i^{H_{\rm I}(k)}$ $T^{k-1} p_1=p_1 a^{k-1}$, in order to de-autonomise them. We anticipate that this is a~proper de-autonomisation, although we have no proof yet. The finding of the Poisson structures that the latter recurrences we anticipate that preserve, will sort this issue out.
\end{Remark}

\begin{Remark}As a final remark, we note that the $k$-point recurrences associated with the extended transfer maps of the Yang--Baxter map $F_{\rm V}$, are exactly the same as the $k$-point recurrences associated with the extended transfer maps of the Yang--Baxter map $H_{\rm V}$ which (one of them) were presented in Corollary~\ref{cor1}. Since the $(k-1)$-iteration of the extended transfer maps of any Yang--Baxter map coincides with its transfer maps, we conclude that the dynamics of the transfer maps of the Yang--Baxter maps $F_{\rm V}$ and $H_{\rm V}$, are the same. The same holds true for the transfer maps associated with the Yang--Baxter maps $F_{\rm III}$ and $H_{\rm III}^A$. As for the remaining members of the $F$ and the $H$ lists of Yang--Baxter maps, further investigation is required in order to prove the equivalence of their transfer dynamics.
\end{Remark}

 \section{Conclusions}\label{Section6}
In Section \ref{Section2} we have presented a family of maps in $k$ variables which preserve $2$ rational invariants of a specific form. One could mimic the procedures introduced in \cite{Kassotakis:2006} to obtain rational maps in $k$ variables which preserve $m$ rational invariants where $m<k$. For example, there are ${2k\choose k}$ rational maps $(x_1,\ldots, x_k, y_1,\ldots, y_k)\mapsto (X_1,\ldots, X_k, Y_1,\ldots, Y_k)$ which preserve $k$ invariants of the form:
\begin{gather} \label{cont}
H_i=\frac{\alpha_ix_ix_{i+1}+\beta_ix_i+\gamma_ix_{i+1}+\delta_i}{\kappa_ix_ix_{i+1}+\lambda_ix_i+\mu_ix_{i+1}+\nu_i},\qquad i=1,2,\ldots,k,
\end{gather}
where the indices are considered modulo $k$ and $\alpha_i$, $\beta_i$, $\kappa_i$, $\lambda_i$, etc. are given functions of the variables $y_i$, $y_{i+1}$.

If separability of variables on the invariants is imposed, then higher rank analogues of the Yang--Baxter maps of Propositions \ref{prop2}, \ref{prop3} and~\ref{prop4} are expected. Moreover, solutions of the functional tetrahedron equation \cite{Kashaev:1996,Korepanov:1998,Korepanov:1995,Sergeev:1998}, or even of higher simplex equations \cite{Dimakis:2015, Maillet:1989,Maillet:1989b} are anticipated. For example if we consider the following, different than~(\ref{cont}), choice of invariants:
\begin{gather*}
H_1=\sum_{i=1}^6 x_i,\qquad H_2=\frac{x_1 x_4 x_6}{x_3}, \qquad H_3=x_2x_3x_4x_5,
\end{gather*}
then the involutions $R_{123}$, $R_{145}$, $R_{246}$, and $R_{356}$, preserve $H_i$, $i=1,2,3$ and satisfy the functional tetrahedron equation
\begin{gather*}
R_{123}R_{145} R_{246} R_{356}=R_{356} R_{246} R_{145} R_{123}.
\end{gather*}
They are exactly the Hirota's map \cite{Kashaev:1996,Korepanov:1998,Sergeev:1998}, i.e., the map $R\colon (u,v,w)\mapsto (U,V,W)$, where
\begin{gather*}
U=\frac{uv}{u+w},\qquad V=u+w,\qquad W=\frac{vw}{u+w},
\end{gather*}
acting on $(123)$, $(145)$, $(246)$ and $(356)$ coordinates respectively. For the involution $\phi\colon u\mapsto -u$, it holds $\phi_1\phi_2\phi_3 R_{123}=R_{123}\phi_1\phi_2\phi_3$. So $\phi$ is a symmetry of the Hirota's map $R$ and it can be easily proven that the following entwining relation holds
\begin{gather*}
R_{123}\phi_3R_{145}\phi_5 R_{246}\phi_6 R_{356}=R_{356} R_{246}\phi_6 R_{145}\phi_5 R_{123}\phi_3.
\end{gather*}
Hence we have obtained a solution of the following entwining functional tetrahedron relation
\begin{gather*}
S_{123}S_{145} S_{246} T_{356}=T_{356} S_{246} S_{145} S_{123},
\end{gather*}
where $T$ is the Hirota's map acting on the $(356)$ coordinates and $S\colon (u,v,w)\mapsto (U,V,W)$ a~non-periodic map where
\begin{gather*}
U=\frac{uv}{u-w},\qquad V=u-w,\qquad W=-\frac{vw}{u-w}.
\end{gather*}
The complete set of entwining relations and maps associated with the Hirota's map as well as with the Hirota--Miwa's map, will be considered elsewhere.

In Section~\ref{Section4}, we considered two methods to obtain entwining maps. The first method uses degeneracy arguments and produces entwining maps associated with the $H_{\rm I}$, $H_{\rm II}$ and $H_{\rm III}^A$ Yang--Baxter maps. The entwining maps of this method belongs to different subclasses than the $[2:2]$ subclass of maps that the $H_{\rm I}$, $H_{\rm II}$ and $H_{\rm III}^A$ Yang--Baxter maps belongs to so they are not $\text{(M\"ob)}^2$ equivalent to the latter. The outcomes of the second method are non-periodic\footnote{The non-periodicity assures that these entwining maps are not $\text{(M\"ob)}^2$ equivalent with the corresponding maps of the $H$-list.} entwining maps of subclass $[2:2]$ associated with the whole $H$-list. The fact that the entwining maps which were presented in this Section preserve two invariants in separated variables, enable us to introduce appropriate potentials (as shown in \cite{PKMN2,KaNie,PKMN3}) to obtain integrable lattice equations. Actually we obtain integrable triplets of lattice equations (in some cases even correspondences). Note that integrable triplets of lattice equations were systematically derived in~\cite{Boll:2011} and more recently in~\cite{Hietarinta:2018}. We plan to consider the integrable triplets of lattice equations derived from entwining maps, elsewhere.

In Section~\ref{Section6}, we have proved that the transfer maps associated with the $H$ list of Yang--Baxter maps can be considered as the $(k-1)$-iteration of some maps of simpler form. As a~consequence of this re-factorisation we have obtained $(k+1)$-point (see Proposition~\ref{prop52}) and $k$-point (see Corollary~\ref{cor1}) alternating recurrences which can be considered as alternating versions of some hierarchies of discrete Painlev\'e equations. Moreover, the autonomous versions of some of the $k$-point recurrences presented in Corollary~\ref{cor1}, can be obtained by {\it periodic reductions}~\cite{Papageorgiou:1990} (cf.~\cite{Frank-Nalini-book}) of integrable lattice equations. Here we have obtained alternating $k$-point recurrences from Yang--Baxter maps without performing periodic reductions. Hence, our results might be compared/extended to the novel and independent frameworks introduced in \cite{Atkinson:2012i,Atkinson:2018} and \cite{Joshi:2015,Joshi:2016a}, where by using symmetry arguments, integrable lattice equations and discrete Painlev\'e equations of $2$nd order were linked.

\appendix

\section[The $F$-list and the $H$-list of quadrirational Yang--Baxter maps]{The $\boldsymbol{F}$-list and the $\boldsymbol{H}$-list\\ of quadrirational Yang--Baxter maps} \label{app1}
The Yang--Baxter maps $R$ of the $F$ and the $H$-list, explicitly read
\begin{gather*}
R\colon \ \mathbb{CP}^1\times \mathbb{CP}^1\ni (u,v)\mapsto (U,V)\in \mathbb{CP}^1\times \mathbb{CP}^1,
\\
U=\alpha v P,\qquad V=\beta u P,
\qquad P=\frac{(1-\beta)u+\beta-\alpha+(\alpha-1)v}{\beta(1-\alpha)u+(\alpha-\beta)uv+\alpha (\beta-1) v}, \tag*{$(F_{\rm I})$} \\
U=\alpha v P,\qquad V=\beta u P,
\qquad P=\frac{u-v+\beta-\alpha}{\beta u-\alpha v},\tag*{$(F_{\rm II})$} \\
U=\frac{v}{\alpha} P,\qquad V=\frac{u}{\beta} P,
\qquad P=\frac{\alpha u-\beta v}{u-v}, \tag*{$(F_{\rm III})$}\\
U=v P,\qquad V=u P,
\qquad P=1+\frac{\beta -\alpha}{u-v},\tag*{$(F_{\rm IV})$} \\
U=v+ P,\qquad V=u+ P,
\qquad P=\frac{\alpha-\beta}{u-v},\tag*{$(F_{\rm V})$}\\
U=v Q,\qquad V=uQ^{-1},
\qquad Q=\frac{(\alpha-1)uv+(\beta-\alpha)u+\alpha(1-\beta)}{(\beta-1)uv+(\alpha-\beta)v+\beta(1-\alpha)},\tag*{$(H_{\rm I})$}\\
U=v+Q,\qquad V=u-Q,
\qquad Q=\frac{(\alpha-\beta)uv}{\beta u+\alpha v-\alpha \beta},\tag*{$(H_{\rm II})$} \\
U=\frac{v}{\alpha} Q,\qquad V=\frac{u}{\beta} Q,
\qquad Q=\frac{\alpha u+\beta v}{u+v},\tag*{$\big(H_{\rm III}^A\big)$}\\
U=v Q,\qquad V=u Q^{-1},
\qquad Q=\frac{1+\beta u v}{1+\alpha uv},\tag*{$\big(H_{\rm III}^B\big)$}\\
U=v-Q,\qquad V=u+Q,
\qquad Q=\frac{\alpha -\beta}{u+v}.\tag*{$(H_{\rm V})$}
\end{gather*}

The maps above are depending on $2$ complex parameters $\alpha$, $\beta$. The parameter $\alpha$ is associated with the first factor of the cartesian product
$\mathbb{CP}^1\times \mathbb{CP}^1$, whereas the parameter $\beta$ with the second factor.

\subsection*{Acknowledgements}

P.K.\ is grateful to Aristophanis Dimakis, Vassilios Papageorgiou and Anastasios Tongas, the organizers of the $4^{\rm th}$ Workshop on Mathematical Physics-Integrable Systems (November~30 -- December 1, 2018, Department of Mathematics, University of Patras, Patras, Greece), where this work was finalized. Also, P.K.\ is grateful to James Atkinson, Allan Fordy, Nalini Joshi, Frank Nijhoff and to Pol Vanhaecke for very fruitful discussions on the subject, as well as to Maciej Nieszporski for the endless discussions towards the answer to the great question of integrable systems, Yang--Baxter maps and everything.

\pdfbookmark[1]{References}{ref}
\LastPageEnding

\end{document}